\newcommandx{\TODO}[2][1=]{\todo[linecolor=red,backgroundcolor=red!25,bordercolor=red,#1]{#2}}
\def\symbol#1{\textcolor{symbols}{#1}}
\definecolor{symbols}{rgb}{0.1,0.35,1}
\theoremstyle{plain}
\newtheorem{theorem}{Theorem}[section]
\newtheorem{corollary}[theorem]{Corollary}
\newtheorem{lemma}[theorem]{Lemma}
\theoremstyle{definition}
\newtheorem{definition}[theorem]{Definition}
\newtheorem{example}[theorem]{Example}
\theoremstyle{remark}
\newtheorem{remark}[theorem]{Remark}
\newcommand{\RR}{\ensuremath{\mathbb{R}}}
\newcommand{\dd}{\ensuremath{\mathrm{d}}}
\newcommand{\vertiii}[1]{{\vert\kern-0.25ex\vert\kern-0.25ex\vert #1 
    \vert\kern-0.25ex\vert\kern-0.25ex\vert}}
\newcommand{\pa}{\partial}
\newcommand{\norm}[1]{\left\lVert#1\right\rVert}
\newcommand{\floor}[1]{\left\lfloor#1\right\rfloor}
\renewcommand{\bar}{\overline}
\renewcommand{\tilde}{\widetilde}
\renewcommand{\hat}{\widehat}
\numberwithin{equation}{section}
\def\1{{\mhpastefig{root}}}
\def\2{{\mhpastefig[2/3]{tree11}}}
\def\3{{\!\mhpastefig[1/2]{tree12}}}
\def\9{{\mhpastefig[1/2]{tree21}}}
\def\4{{\mhpastefig[1/2]{tree22}}}
\def\5{{\!\mhpastefig[1/2]{tree112}}}
\def\6{{\!\mhpastefig[1/2]{tree112b}}}
\def\8{{\mhpastefig[1/2]{tree121}}}
\def\7{{\mhpastefig[1/2]{tree121b}}}
\def\${|\!|\!|}
\definecolor{darkred}{rgb}{0.9,0.1,0.1}
\tikzset{
	bdot/.style={circle,fill=black,draw=black,inner sep=0pt,minimum size=1pt},
	dot/.style={circle,fill=symbols,draw=symbols,inner sep=0pt,minimum size=0.4pt},
	xi/.style={circle,fill=symbols!10,draw=symbols,inner sep=0pt,minimum size=1.5pt},
	xi_old/.style={circle,fill=symbols!10,draw=symbols,inner sep=0pt,minimum size=1.2mm},
	ddd/.style={draw=black,dash pattern=on 0.4pt off 0.4pt, dash phase=0.4pt},
	Ip/.style={draw=symbols},
	I/.style={draw=symbols,
		decorate, decoration={zigzag,amplitude=0.4pt,segment length = 0.6pt,pre length=0.5pt,post length=0.5pt}},
	>=stealth,
	}
\def\DeclareSymbol#1#2#3{\expandafter\gdef\csname MH@symb@#1\endcsname{\tikz[baseline=#2,scale=0.15,draw=symbols]{#3}}}
\def\<#1>{\csname MH@symb@#1\endcsname}
\begin{document}
\title[Regularity structures \& rough vol]{A regularity structure for rough volatility } 

\author{C. Bayer, P. K. Friz, P. Gassiat, J. Martin, B. Stemper}
\address{WIAS Berlin, TU and WIAS Berlin, Paris Dauphine University, HU
Berlin, TU and WIAS Berlin}

\begin{abstract}
A new paradigm recently emerged in financial modelling: rough (stochastic) volatility, first observed by Gatheral et al. in high-frequency data, subsequently derived within market microstructure models, also turned out to
 capture parsimoniously key stylized facts of the entire implied volatility surface, including extreme skews that were thought to be outside the scope of stochastic volatility. On the mathematical side, Markovianity and, partially, 
 semi-martingality are lost. In this paper we show that Hairer's regularity structures, a major extension of rough path theory, which caused a revolution in the field of stochastic partial differential equations, also provides a new and powerful tool to analyze rough volatility models. \\
\center{{\it{Dedicated to Professor Jim Gatheral on the occasion of his 60th birthday. }}} 
\end{abstract}


\date{\today }
\maketitle
\tableofcontents



\section{Introduction}

\label{sec:Introduction}

We are interested in stochastic volatility (SV)\ models given in It\^{o} differential form 
\begin{equation}
dS_{t}/S_{t}=\sigma _{t}dB_{t}\equiv \sqrt{v_{t}\left( \omega \right) }%
dB_{t} \ .  \label{equ:S}
\end{equation}%
Here, $B$ is a standard Brownian motion and $\sigma _{t}$ (resp. $v_{t}$)
are known as \textit{stochastic volatility} (resp. \textit{variance})\
process. Many classical Markovian asset price models fall in this framework,
including Dupire's local volatility model, the SABR -, Stein-Stein - and 
Heston model. In all named SV model, one has Markovian dynamics for the
variance process, of the form%
\begin{equation}
dv_{t}=g\left( v_{t}\right) dW_{t}+h\left( v_{t}\right) dt;  \label{dvIntro}
\end{equation}%
constant correlation $\rho :=d\left\langle B,W\right\rangle _{t}/dt$ is
incorporated by working with a 2D standard Brownian motion $\left( W,\bar{W}%
\right) $,%
\begin{equation*}
B:=\rho W+\bar{\rho}\bar{W}\equiv \rho W+\sqrt{1-\rho ^{2}}\bar{W}.
\end{equation*}%
This paper is concerned with an important class of non-Markovian\
(fractional) SV models, dubbed \textbf{rough volatility (RV) models}, in which
case $\sigma _{t}$ (equivalently: $v_{t} \equiv \sigma^2_t$) is modelled via a fractional
Brownian motion (fBM) in the regime $H\in \left( 0,1/2\right) $.\footnote{%
Volatility is not a traded asset, hence its
non-semimartingality (when $H\neq 1/2$) does not imply arbitrage.}
The terminology "rough"\ stems from the fact that in such models
stochastic volatility (variance) sample paths are $H^-$-H\"older, hence ``rougher'' than Brownian
paths. Note the stark contrast to the idea of "trending" fractional
volatility, which amounts to take $H>1/2$. The evidence for the rough regime
(recent calibration suggest $H$ as low as $0.05$) is now overwhelming - both
under the physical and the pricing measure, see e.g.  \cite{ALV07, Fuk11,
Fuk17, GJR14,BFG16, FZ17, MT16}. Much attention in theses reference has in fact been given to "simple"
rough volatility models, by which we mean models of the form

\begin{eqnarray}
\sigma_t &:=&f(\hat{W}_{t}) \ \ \  \ \ \  \dots \ \ \text{ ``simple rough volatility (RV)''} \  \label{eq:sigmasimple} \\
\hat{W}_{t} &=&\int_{0}^{t}K\left( s,t\right) dW_{s}  \ ; \\
& \text{with} & K\left( s,t \right) = {\sqrt{2H}}\left\vert t-s\right\vert ^{H-1/2}\mathbf{1}_{t>s} \ , \ \ \ H \in (0,1/2).
\end{eqnarray}%
%
In other words, volatility is a function of a fractional Brownian motion, with (fixed) Hurst parameter.%
\footnote{Following \cite{BFG16} we work with the Volterra- or Riemann-Liouville fBM, but other choices such as 
the Mandelbrot van Ness fBM, with suitably modified kernel $K$, are possible.}
Note that, in contrast even to classical SV\
models, the stochastic volatility is explicitly given, and no rough /
stochastic differential equation needs to be solved (hence "simple"). Rough volatility not only
provides remarkable fits to both time series and option pricing problems, it also has a market microstructure justification: starting with a Hawkes process model, Rosenbaum and coworkers \cite{EEFR16x, EER16x,EER17x} 
find in the scaling limit $f,g,h$ such that 
\begin{eqnarray}
\sigma_t &:=&f(\hat{Z}_{t}) \ \ \  \ \ \  \dots \ \ \text{ ``non-simple rough volatility (RV)''} \  \label{eq:sigmanotsimple} \\
Z_t &=&  z + \int_0^t K(s,t)  g(Z_s) ds +  \int_0^t K(s,t)  h(Z_s) dW_s \label{eq:Voltintro}  \ ,
\end{eqnarray}%
with stochastic Volterra dynamics that provide a natural generalization of simple rough volatility.
%
%

\subsection{Markovian stochastic volatility models} 
\label{sec:mark-stoch-volat}

For comparison with rough volatility, Section \ref{sec:RVintro} below, we first mention a selection of tools and methods well-known for {\it Markovian} SV models. 
\begin{itemize}
\item PDE methods are ubiquitous in (low-dimensional) pricing problems, as are
\item Monte Carlo methods, noting that knowledge of strong (resp. weak) rate $1/2$ (resp. $1$) is the grist in the mills of modern multilevel methods (MLMC);
\item Quasi Monte Carlo (QMC) methods are widely used; related in spirit we have the Kusuoka--Lyons--Victoir cubature approach, popularized in the form of Ninomiya--Victoir (NV) splitting scheme, nowadays available in standard software packages;
\item Freidlin--Wentzell theory of small noise large deviations is essentially immediately applicable, as are various ``strong`` large deviations (a.k.a. exact asymptotics) results, used e.g. the derive the famous SABR formula. 
\end{itemize}
\noindent  For several reasons it can be useful to write model dynamics in {\it Stratonovich form}: From a PDE perspective, the operators then take sum-square form which can be exploited in many ways (H\"ormander theory, naturally linked to Malliavin calculus ...). 
From a numerical perspective, we note that the cubature / NV scheme \cite{NV} also requires the full dynamics to be rewritten in Stratonovich form. In fact, viewing NV as level-$5$ cubature, in sense of \cite{LV04}, its level-$3$ 
simplification is nothing but the familiar Wong-Zakai approximation result for difffusions. Another financial example that requires a Stratonovich formulation comes from interest rate model validation \cite{DMP07}, based on the 
Stroock--Varadhan support theorem. 
We further note, that QMC (e.g. Sobol') works particularly well if the noise has a multiscale decomposition, as obtained by interpreting a (piece-wise) linear Wong-Zakai approximation, as Haar wavelet expansion of the driving white noise. 

\subsection{Complications with rough volatility}   \label{sec:RVintro}

Due to loss of Markovianity, PDE methods are not applicable, and neither are (off-the-shelf) Freidlin--Wentzell large deviation estimates (but see \cite{FZ17}). Moreover, rough volatility is not a semi-martingale, which complicates, to say the least, the use of several established stochastic analysis tools. In particular, {\it rough volatility admits no Stratonovich form}. Closely related, one lacks a (Wong-Zakai type) approximation theory for rough volatility. To see this, focus on the ``simple'' situation, that is  
(\ref{equ:S}), (\ref{eq:sigmasimple}) so that 
\begin{equation}
S_{t}/S_{0}=\mathcal{E}\left( \int_{0}^{\cdot }f\left(  \hat{W}_s  \right) dB_{s}\right) (t) \ .    \label{equ:SundersimpleRV}
\end{equation}%
Inside the (classical) stochastic exponential $\mathcal{E}(M) (t) = \exp (M_t - \tfrac{1}{2} [M]_t)$ we have the martingale term
\begin{equation}
\int_0^t f(\hat{W})dB=\rho\underbrace{ \int_0^t f ( \hat{W} )  dW_{t}}+\bar{\rho}\int_0^t
f ( \hat{W} ) d\bar{W}_{t}  \label{equ:int1}
\end{equation}%
and, in essence, the trouble is due to underbraced, innocent looking It\^o-integral. 
Indeed, any naive attempt to put it in Stratonovich form, 
\begin{equation}
 \text{``} \int_0^t f ( \hat{W})  \circ dW  :=  \int_0^t f ( \hat{W})   dW + (\text{It\^o-Stratonovich correction}) \text{ ''}
\end{equation}
or, in the spirit of Wong-Zakai approximations, 
\begin{equation}
\text{``}  \int_0^t f ( \hat{W} ) \circ W   :=  \lim_{\varepsilon \to 0} \int_0^t f( \hat{W}^{\varepsilon} ) dW^{\varepsilon}  \text{ ''}
\end{equation}
must fail whenever $H<1/2$. The It\^o-Stratonovich correction is given by the quadratic covariation, 
defined (whenever possible) as the limit, in probability,  of
\begin{equation}
    \sum_{[u,v] \in \pi}  (f(\hat W_v) - f(\hat W_u))(W_v - W_u),
\end{equation}
along any sequence $(\pi^n)$ of partitions with mesh-size tending
to zero. But, disregarding trivial situations, this limit does not exist. For
instance, when $f(x)=x$ fractional scaling immediately gives divergence (at
rate $H-1/2$) of the above bracket approximation. This issues also arises in
the context of option pricing which in fact is readily reduced (Theorem \ref{thm:rPricing} and 
Section~\ref{sec:ChristianBen}) to the sampling of stochastic integrals of the
afore-mentioned type, i.e. with integrands on a fractional scale. All theses problems remain present, of course, 
for the more complicated situation of ``non-simple'' rough volatility (Section \ref{sec:Volt}) .

\bigskip 
\subsection{Description of main results}
\label{sec:main-results}

With motivation from singular SPDE theory, such as Hairer's work on KPZ \cite{Hai13} and the Hairer-Pardoux ``renormalized'' Wong-Zakai theorem \cite{HP15},  we provide the closest there is to a satisfactory approximation theory for rough volatility. This starts with the remark that rough path theory, despite its very purpose to deal with low regularity paths, is not applicable 

\bigskip

To state our basic approximation results, write $\dot W^\eps \equiv \partial_t W^\eps$ for a suitable (details below) approximation at scale $\varepsilon$ to white noise, with induced approximation to fBM, denoted by $\hat W^\eps$. Throughout, the Hurst parameter $H \in (0, 1/2]$ is fixed and $f$ is a smooth function, such that (\ref{equ:SundersimpleRV}) is a (local) martingale, as required by modern financial theory.

\begin{theorem} \label{thmEasy} Consider simple rough volatility with dynamics $dS_t / S_t = f(\hat W_t) dB_t$, i.e. driven by Brownians $B$ and $W$ with constant correlation $\rho$.  There exist $\varepsilon$-peridioc functions $ \mathscr{C}^\varepsilon=  \mathscr{C}^\varepsilon(t)$, with diverging averages $C_{\varepsilon } \ ,$ such that a Wong-Zakai result holds of the form $\tilde S^\eps \to S$ in probability and uniformly on compacts, where
$$
     \partial_t \tilde S^\eps_t / S_t^\eps  =  f ( \hat W^\eps)  \dot B^\eps - \rho \mathscr{C}^\varepsilon(t) f' (\hat W^\eps) - \tfrac{1}{2}f^2 (\hat W^\eps)  \  ,  \ \  S^\eps_0 = S_0  .
$$
Similar results hold for more general (``non-simple'') RV models. 
\end{theorem}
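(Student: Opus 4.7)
Take logarithms on both sides of the SDE for $S^\varepsilon$: the conclusion $S^\varepsilon \to S$ uniformly on compacts in probability is equivalent (via the continuous mapping theorem and a Gronwall-type argument for the exponential) to the uniform-on-compacts convergence in probability
\begin{equation*}
\int_0^t f(\hat W^\varepsilon_s)\, \dot B^\varepsilon_s \, ds \; - \; \rho \int_0^t \mathscr C^\varepsilon(s)\, f'(\hat W^\varepsilon_s)\, ds \; - \; \tfrac12 \int_0^t f^2(\hat W^\varepsilon_s)\, ds \;\longrightarrow\; \int_0^t f(\hat W_s)\, dB_s - \tfrac12 \int_0^t f^2(\hat W_s)\, ds.
\end{equation*}
Splitting $B = \rho W + \bar\rho \bar W$ then exhibits three limits of quite different character.

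\textbf{The two benign pieces.} Under mild assumptions on the approximation one has $\hat W^\varepsilon \to \hat W$ locally uniformly in probability at any exponent strictly below $H$, hence $\int_0^t f^2(\hat W^\varepsilon_s)\,ds \to \int_0^t f^2(\hat W_s)\,ds$ directly by continuity of $f$. The $\bar W$-part, $\int_0^t f(\hat W^\varepsilon_s)\,\dot{\bar W}{}^\varepsilon_s\,ds$, is also unproblematic: because $\bar W$ is independent of $W$ (hence of $\hat W$), one may condition on $W$ and invoke the classical Wong--Zakai theorem for Itô integrals with random but independent integrands to obtain convergence to $\int_0^t f(\hat W_s)\,d\bar W_s$. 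No renormalisation is needed here because the putative Itô--Stratonovich bracket $[f(\hat W), \bar W]$ vanishes identically by independence; this is precisely what makes the $\rho$ in the counterterm correct.

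\textbf{The correlated piece (the crux).} Everything reduces to
\begin{equation*}
J^\varepsilon(t) \;:=\; \int_0^t f(\hat W^\varepsilon_s)\, \dot W^\varepsilon_s\, ds \;-\; \int_0^t \mathscr C^\varepsilon(s)\, f'(\hat W^\varepsilon_s)\, ds, \qquad \text{target: } J^\varepsilon(t) \to \int_0^t f(\hat W_s)\, dW_s \text{ (Itô).}
\end{equation*}
Writing $\dot W^\varepsilon_s = \int \psi^\varepsilon(s,u)\,dW_u$ for a kernel $\psi^\varepsilon$ and applying stochastic Fubini decomposes the first term into a Skorokhod integral plus a Malliavin trace
\begin{equation*}
\int_0^t f'(\hat W^\varepsilon_s) \int \psi^\varepsilon(s,u)\, D_u \hat W^\varepsilon_s\, du\, ds \;=\; \int_0^t \mathscr C^\varepsilon(s) f'(\hat W^\varepsilon_s)\, ds, \qquad \mathscr C^\varepsilon(s) := \EE\bigl[\hat W^\varepsilon_s \dot W^\varepsilon_s\bigr].
\end{equation*}
For a grid-based approximation (step-function or piecewise-linear on an $\varepsilon$-grid) the function $\mathscr C^\varepsilon$ is $\varepsilon$-periodic in $s$ with time-average of order $\varepsilon^{2H-1}$, which simultaneously explains the shape of the counter-term and the divergence of $C_\varepsilon$. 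Subtracting $\mathscr C^\varepsilon f'(\hat W^\varepsilon)$ kills the trace exactly, leaving a Skorokhod integral whose $L^2$-convergence to the Itô integral $\int_0^t f(\hat W_s)\,dW_s$ follows from the convergence of the integrand in the domain of $\delta$ together with the closedness of the divergence operator. The main obstacle is the uniform-in-$\varepsilon$ $L^2$ control of the residual Wiener-chaos components after cancellation of the divergent part; this is handled by a hypercontractivity/Nelson argument built on the Volterra-kernel estimate $\int_0^s |K(r,s)|^2 dr \lesssim s^{2H}$, and it is precisely this bookkeeping, carried out uniformly over all relevant trees/symbols, that the regularity-structures framework of the paper is designed to automate. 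The non-simple case \eqref{eq:sigmanotsimple}--\eqref{eq:Voltintro} is then obtained by replacing the scalar integrand $f(\hat W^\varepsilon)$ by the reconstruction of a modelled distribution solving a fixed-point problem in the abstract regularity structure, with the identical renormalisation constants $\mathscr C^\varepsilon$ produced by the tree labelled $\<1>$.
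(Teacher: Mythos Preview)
Your reduction is exactly the paper's: Section~\ref{sec:red} takes logs, splits $B=\rho W+\bar\rho\bar W$, dispatches the $\bar W$-integral and the $f^2$-integral as benign (independence resp.\ continuity), and isolates the correlated piece as the only genuine difficulty. Your identification of the counterterm via the Malliavin trace, $\mathscr C^\varepsilon(s)=\mathbb E[\hat W^\varepsilon_s\dot W^\varepsilon_s]$, is also precisely what the paper obtains (Lemma~\ref{lem:ReshapingApproximation}, where $\mathscr C^\varepsilon(t)=\mathscr K^\varepsilon(t,t)$). One slip: the average $C_\varepsilon$ is of order $\varepsilon^{H-1/2}$, not $\varepsilon^{2H-1}$ (Lemma~\ref{lem:ApproximateVolterraEstimate} and the explicit Haar computation in Section~\ref{subsec:haar}).

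Where you and the paper diverge is in how the remaining Skorokhod integral is shown to converge. You invoke closedness of $\delta$ plus hypercontractivity directly on $\int_0^t f(\hat W^\varepsilon_s)\diamond dW^\varepsilon_s$. The paper instead lifts the problem to a regularity structure: it defines an It\^o model and a renormalized approximating model $(\hat\Pi^\varepsilon,\Gamma^\varepsilon)$, proves model convergence $\vertiii{(\hat\Pi^\varepsilon,\Gamma^\varepsilon);(\Pi,\Gamma)}_T\to 0$ in $L^p$ (Theorem~\ref{prop:ModelConvergence}), and then reads off the desired convergence from the reconstruction theorem (Theorem~\ref{thm:Reconstruction}) applied to the modelled distribution $\Xi F^{(\varepsilon)}$. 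The Wick/Skorokhod decomposition you write down is exactly what the paper uses \emph{inside} the model-convergence proof (Lemmas~\ref{lem:ReshapingLimit}--\ref{lem:ReshapingApproximation} and Appendix~\ref{app:AR}), so the analytic core is the same; the regularity-structure packaging buys uniform control over all symbols $\Xi\mathcal I(\Xi)^m$ at once, a clean rate $\varepsilon^{\delta H}$, and a framework that extends verbatim to the non-simple Volterra case of Section~\ref{sec:Volt}. Your direct route would work in principle for the simple case but would require redoing the chaos estimates by hand for each level of the Taylor expansion of $f$, which is what you correctly flag as the bookkeeping the paper automates.
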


\begin{remark} When $H=1/2$, this result is an easy consequence of It\^o-Stratonovich conversion formulae. In the case $H<1/2$ of interest, Theorem \ref{thmEasy} provides the interesting insight that genuine renormalization, in the sense of subtracting diverging quantities is required if and only if correlation $\rho$ is non-zero. This is the case in equity (and many other) markets \cite{BFG16}. Also note that naive approximations $S^\eps_t$, without subtracting the $\mathscr{C}^\varepsilon$-term, will in general diverge.
\end{remark}

\noindent  In order to formulate implications for option pricing, define the Black-Scholes pricing function 
\begin{equation}
C_{BS}\left( S_{0},K;\sigma ^{2}T\right)  :=   \mathbb{E}\left(
S_{0}\exp \left( \sigma \sqrt{T} Z -\frac{\sigma ^{2}}{2}T\right) -K\right) ^{+} \ ,
\end{equation}%
where $Z$ denotes a standard normal random variable. We then have

\begin{theorem} \label{thm:rPricing} With $ \mathscr{C}^\varepsilon=  \mathscr{C}^\varepsilon(t)$ as in Theorem \ref{thmEasy}, 
define the renormalized integral approximation, 
\begin{equation}  
 \ \tilde{\mathscr{I}}^\eps := \tilde{\mathscr{I}}^\eps_f (T) := \int_{0}^{T}f(\hat{W}^{\varepsilon})dW^{\varepsilon }- \int_{0}^{T} 
  \mathscr{C}^\varepsilon(t)  f^{\prime }( \hat{W}_{t}^{\varepsilon }) dt 
\end{equation}%
and also approximate total variance, 
$$
       {\mathscr{V}}^\eps :=  {\mathscr{V}}^\eps_f (T)  :=\int_0^T f^2 ( \hat{W}_{t}^{\varepsilon }) dt \ .
$$
Then the price of a European call option, under the pricing model
(\ref{equ:S}), (\ref{eq:sigmasimple}), struck at $K$ with time $T$ to
maturity, is given as 
\begin{equation*}  
   \lim_{\eps \to 0 } \mathbb{E} \left[ \Psi ( \tilde{\mathscr{I}}^\eps, \mathscr{V}^\eps )  \right] 
\end{equation*}
where
\begin{equation}
   \label{equ:Psi}
    \Psi \left( \mathscr{I},\mathscr{V} \right) 
     := C_{BS} \left( S_{0}\exp \left( \rho
\mathscr{I} -\frac{\rho ^{2}}{2} \mathscr{V}   \right), 
K, \bar{\rho}^{2}  \mathscr{V} \right)    \ .
\end{equation}
Similar results hold for more general (``non-simple'') RV models. 
\end{theorem}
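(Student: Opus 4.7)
The plan is to reduce the claim to the classical Romano--Touzi mixing formula and then invoke the Wong--Zakai convergence supplied by Theorem~\ref{thmEasy}. Writing $B = \rho W + \bar\rho \bar W$ with $W \perp \bar W$, the simple-RV dynamics give
\begin{equation*}
\log(S_T/S_0) \;=\; \rho\, \mathscr{I} + \bar\rho\, \mathscr{J} - \tfrac12 \mathscr{V}, \qquad \mathscr{J} := \int_0^T f(\hat W)\, d\bar W.
\end{equation*}
Conditioning on $\mathcal{F}^W$ freezes $\hat W$, $\mathscr{I}$ and $\mathscr{V}$ while leaving $\bar W$ a standard Brownian motion; hence $\mathscr{J}\,|\,\mathcal{F}^W$ is centred Gaussian with variance $\mathscr{V}$. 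Substituting the resulting conditionally log-normal distribution of $S_T$ into the Black--Scholes call formula yields $\mathbb{E}[(S_T-K)^+\,|\,\mathcal{F}^W] = \Psi(\mathscr{I},\mathscr{V})$, and an outer expectation shows that the exact price equals $\mathbb{E}[\Psi(\mathscr{I},\mathscr{V})]$.

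It therefore remains to identify this quantity with $\lim_\eps \mathbb{E}[\Psi(\tilde{\mathscr{I}}^\eps, \mathscr{V}^\eps)]$. I would do so by establishing, first, the convergence in probability $(\tilde{\mathscr{I}}^\eps, \mathscr{V}^\eps) \to (\mathscr{I}, \mathscr{V})$, and, second, uniform integrability of the integrand. The convergence of $\mathscr{V}^\eps$ is immediate from $\hat W^\eps \to \hat W$ locally uniformly; the convergence of $\tilde{\mathscr{I}}^\eps$ is exactly the renormalized first-chaos convergence sitting inside Theorem~\ref{thmEasy}, where $\mathscr{C}^\eps$ is tailored to cancel the divergent piece of $\int_0^T f(\hat W^\eps)\, dW^\eps$ and leave the It\^o integral $\mathscr{I}$ in the limit. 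Indeed, reading off the exponent of $\tilde S^\eps_T$ from Theorem~\ref{thmEasy} and matching it with $\log(S_T/S_0)$ produces this convergence directly.

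Since $\Psi(\mathscr{I},\mathscr{V}) \leq S_0 \exp(\rho \mathscr{I} - \tfrac12 \rho^2 \mathscr{V})$, uniform integrability reduces to a uniform $L^p$ bound, for some $p>1$, on this majorant. The natural route is to repeat the conditioning argument at the $\eps$ level: taking $W^\eps$ and $\bar W^\eps$ as independent smoothings, conditioning on $\mathcal{F}^{W^\eps}$ makes the $\bar\rho$ integral Gaussian and reveals $\tilde S^\eps_T$ as (nearly) a uniformly integrable exponential martingale, up to an $o_\eps(1)$ gap between its conditional variance $\int\!\!\int f(\hat W^\eps_s) f(\hat W^\eps_t)\, R^\eps(s,t)\, ds\, dt$ and $\mathscr{V}^\eps$. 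Combined with the quantitative Gaussian moment bounds on $\tilde{\mathscr{I}}^\eps$ underlying the regularity-structures proof of Theorem~\ref{thmEasy}, dominated convergence concludes.

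The technically hardest point is the uniform-$L^p$ control of $\exp(\rho \tilde{\mathscr{I}}^\eps)$: the naive $\int f(\hat W^\eps)\, dW^\eps$ has diverging variance when $H < 1/2$, and the whole purpose of the $\mathscr{C}^\eps$-renormalization is to produce a random variable whose exponential moments are bounded uniformly in $\eps$. The extension to non-simple RV follows the same template once the Volterra analogue of Theorem~\ref{thmEasy} is in place, the conditioning on the $W$-filtration being insensitive to whether $\hat W$ is replaced by the more general Volterra process $\hat Z$ of~\eqref{eq:Voltintro}.
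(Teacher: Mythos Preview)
Your approach is essentially the paper's: Romano--Touzi conditioning on $\mathcal{F}^W$ to rewrite the call price as $\mathbb{E}[\Psi(\mathscr{I},\mathscr{V})]$, followed by the convergence $(\tilde{\mathscr{I}}^\eps,\mathscr{V}^\eps)\to(\mathscr{I},\mathscr{V})$. The paper's Section~\ref{sec:red} is even terser than your sketch---it simply invokes Theorem~\ref{thm:ConvergenceDiscreteIntegrals} (the $L^p$ convergence of $\tilde{\mathscr{I}}^\eps$ to the It\^o integral) in place of your appeal to Theorem~\ref{thmEasy}, and says nothing at all about passing from convergence of the integrand to convergence of the expectation.

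You are right to flag uniform integrability as the remaining issue, and your majorant $\Psi\le S_0\exp(\rho\tilde{\mathscr{I}}^\eps-\tfrac12\rho^2\mathscr{V}^\eps)$ is the correct starting point. However, your proposed mechanism---``repeat the conditioning argument at the $\eps$ level''---is muddled: $\Psi(\tilde{\mathscr{I}}^\eps,\mathscr{V}^\eps)$ is already a functional of $W$ alone, with $\bar W$ integrated out, so there is nothing left to condition away, and $\tilde S^\eps_T$ (which does involve $\bar W^\eps$) is not what appears inside the expectation. Also, your remark that ``the whole purpose of the $\mathscr{C}^\eps$-renormalization is to produce a random variable whose exponential moments are bounded uniformly in $\eps$'' overstates things: the renormalization is designed to force $L^p$ convergence to the It\^o object, and uniform $L^p$ bounds on $\tilde{\mathscr{I}}^\eps$ (Theorem~\ref{thm:ConvergenceDiscreteIntegrals}) do not by themselves give uniform exponential moments. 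For bounded $f$ one would rather argue directly: $\mathscr{I}$ is an It\^o integral with bounded integrand, hence its stochastic exponential has all moments, and one controls the $\eps$-perturbation via the quantitative $L^p$ rates; but this step is genuinely delicate and the paper does not spell it out either.
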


From a {\bf mathematical perspective}, the key issue in proving the above theorems is to establish convergence of the {\it renormalized} approximate integrals
\begin{equation}  \label{equ:keyconvergence}
 \tilde{\mathscr{I}}^\eps =
 \int_{0}^{T}f(\hat{W}^{\varepsilon})dW^{\varepsilon }- \int_{0}^{T} 
  \mathscr{C}^\varepsilon(t)  f^{\prime }( \hat{W}_{t}^{\varepsilon }) dt \rightarrow (\text{It\^o-integral)}  .
\end{equation}%
It is here that we find much inspiration from singular SPDE theory, which also requires renormalized approximations for convergence to the correct It\^o-object. Specifically, we see that the theory of regularity structures \cite{Hai14}, which essentially emerged from rough paths and Hairer's KPZ analysis (see \cite{FH14} for a discussion and references),  is a very appropriate tool for us.
This adds to the existing instances of regularity structures (polynomials, rough paths, many singular SPDEs \ldots) an interesting new class of examples which on the one hand avoids all considerations related to spatial structure (notably multi-level Schauder estimates; cf. \cite[Ch.5]{Hai14}), yet comes with the genuine need for renormalization. 
In fact, since we do not restrict to mollifier approximations (this would rule out wavelet approximation of white noise!) our analysis naturally leads us to 
 {\it renormalization functions}. In case of mollifier approximations, i.e. $\dot W^\eps$ is the $\eps$-mollifciation obtained by convolution of $\dot W$ with a rescaled mollifier function, say $\delta^\varepsilon(x,y)=\varepsilon^{-1} \rho(\varepsilon^{-1}(y-x))$), which is the usual choice of Hairer and coworkers \cite{Hai13, Hai14, CH16}, the renormalization function turns out to constant (since $\dot W^\eps$ is still stationary); in this case
%
%
$$
       \mathscr{C}^\varepsilon(t) \equiv C_\varepsilon = c \eps^{H-1/2} 
$$
with $c = c(\rho)$ explicitly given as integral, cf. (\ref{eq:MollifiedVolterraKernel}). If, on the other hand, we consider a Haar wavelet approximation of white noise, very natural from a numerical point of view,
\footnote{Other wavelet choices are possible. In particular, in case of fractional noise, {\it Alpert-Rokhlin (AR) wavelets} have been suggested for improved numerical behaviour; cf. \cite{GBJ16} where this is attributed to a series of works of A. Majda and coworkers. A theoretical and numerical study of AR wavelets in the rough vol context is left to future work.}
 \begin{align}
		 \mathscr{C}^\varepsilon(t) =\frac{\sqrt{2H}}{H + 1/2} \frac{ |t-\lfloor t / \eps \rfloor \eps|^{H + 1/2} }{\eps}\ \ \  \text{ with mean } C_\eps = \frac{\sqrt{2H}}{(H+1/2)(H+3/2)}  \eps^{H-1/2}. 
 \end{align}
It is natural to ask if $\mathscr{C}^\varepsilon(t)$ can be replaced, after all, by its (since $H<1/2$: diverging) mean $C_\eps$. For $H>1/4$ the answer yes, with an interesting phase transition when $H=1/4$, cf. Section \ref{subsec:ApproximationTheory}. 


\bigskip

\noindent From a {\bf numerical simulation perspective}, Thereom \ref{thm:rPricing} is a step forward as it avoids
any sampling related to the other factor $\bar{W}$.\ A brute-force approach
then consists in simulating a scalar Brownian motion $W$, followed by
computing $\hat{W}=\int KdW$ by It\^o/Riemann Stieltjes approximations of $\left( \mathscr{I},%
\mathscr{V}\right) $. However, given the singularity of Volterra-kernel $K$,
this is not advisable and it is preferable to simulate the two-dimensional
Gaussian process $(W_{t},\hat{W}_{t}:0\leq t\leq T)$ with covariance readily
available. A remaining problem is that the rate of convergence%
\begin{equation*}
\sum f(\hat{W}_{s})W_{s,t} \rightarrow (\text{It\^o-integral)} \ ,
\end{equation*}%
with $\left[ s,t\right] $ taken in a partition of mesh-size $\sim 1/n$, is
very slow since $\hat{W}$ has little regularity when $H$ is small.
(Gatheral and co-authors \cite{GJR14,BFG16} report $H\approx 0.05$) . It
is here that higher-order approximations come to help and we have included
quantitative estimates, more precisely: {\it strong} rates, throughout. 
An analysis of {\it weak} rates will be conducted elsewhere, as is the investigation of multi-level
algorithms (cf. \cite{BFRS16} for MLMC for general Gaussian rough differential equations). Recall
that the design of MLMC algorithms requires knowledge of strong rates. 
Numerical aspects are further explored in\ Section \ref%
{sec:ChristianBen}. 

\bigskip

The second set of results concerns large deviations for rough volatility. Thanks to the contraction principle and fundamental continuity properties of Hairer's reconstruction map, the problem is reduced 
to understanding a LDP for a suitable enhancement of the noise. This approach requires (sufficiently) smooth coefficients, but comes with no growth restrictions which is indeed quite suitable for
financial modelling: we improve the Forde-Zhang (simple rough vol) short-time large deviations \cite{FZ17} such as to include $f$ of exponential type, a defining feature in the works of Gatheral and coauthors \cite{GJR14,BFG16}. (Such an extension is also subject of a recent preprint \cite{JPS17x} and forthcoming work \cite{Gul17p}.)

\begin{theorem} \label{FZbetter} Let $X_t = log(S_t/S_0)$ be the log-price under simple rough SV, i.e. (\ref{equ:S}), (\ref{eq:sigmasimple}). Then $(t^{H-\frac{1}{2}} X_t: t \ge 0)$ satisfies a short time large deviation principle with speed $t^{2H}$ and rate
function given by 
\begin{equation}  \label{eq:rateZ}
I(y) = \inf_{h \in L^2([0,1])} \{ \frac{1}{2} \|h\|_{L^2}^2 + \frac{%
\left(y- \rho I_1(h)\right)^2}{2I_2(h)} \} 
\end{equation}
with $I_1(h)=\int_0^1 f( \hat{h}(t)) h(t) dt, \  I_2^z(h)= \int_0^1 f( \hat{h}(t) )^2 dt$ where $\hat{h}(t) = \int_0^t K(s,t)  h(s) ds$.
\end{theorem}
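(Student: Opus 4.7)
The plan is to reduce the short-time asymptotics of $X_t$ to a small-noise LDP on $[0,1]$ by Brownian self-similarity, and then to extract the rate function by the contraction principle applied to the reconstruction map of the regularity structure developed earlier. From $(W_{tu})_{u \in [0,1]} \eqlaw (t^{1/2} W_u)_{u \in [0,1]}$, together with the induced scaling $\hat W_{tu} \eqlaw t^H \hat W_u$ of the Volterra fBM, setting $\varepsilon := t^H$ I would rewrite
\begin{equation*}
t^{H-\frac{1}{2}} X_t \eqlaw \rho\, \varepsilon \int_0^1 f(\varepsilon \hat W_u)\,dW_u \,+\, \bar\rho\, \varepsilon \int_0^1 f(\varepsilon \hat W_u)\,d\bar W_u \,-\, \tfrac{1}{2}\,t^{H+\frac{1}{2}} \int_0^1 f^2(\varepsilon \hat W_u)\,du.
\end{equation*}
The drift term is of order $t^{H+1/2}$ and exponentially negligible at speed $t^{2H}$, once a Fernique-type tail bound on the enhanced noise is in place.

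The first main step is an enhanced Schilder-type LDP at speed $\varepsilon^{-2}=t^{-2H}$ with rate $\tfrac12(\|h\|_{L^2}^2 + \|\bar h\|_{L^2}^2)$ for the (renormalized) RS model $\mathbf{Z}^\varepsilon$ lifted from $(\varepsilon W, \varepsilon \bar W)$. I would follow the classical two-step pattern familiar from enhanced Schilder for Gaussian rough paths: (i) on Cameron--Martin inputs $(h, \bar h) \in L^2([0,1]) \times L^2([0,1])$ the enhancement converges to the canonical (Young) lift, so that the reconstruction map $\mathcal R$ applied to it returns $\rho I_1(h) + \bar\rho \int_0^1 f(\hat h(u))\bar h(u)\,du$; (ii) exponential tightness and exponential equivalence of $\mathbf{Z}^\varepsilon$ with any smooth approximation follow from Gaussian hypercontractivity applied symbol-by-symbol to the RS tree components, building on the quantitative estimates established earlier in the paper. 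Because the renormalization constants $C_\varepsilon$ scale only polynomially in $\varepsilon$, they do not interfere with the exponential speed. Continuity of $\mathcal{R}$ in the model topology then transports the enhanced LDP through to an LDP for $t^{H-1/2} X_t$.

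Contraction finally yields
\begin{equation*}
I(y) \;=\; \inf\left\{\tfrac{1}{2}\|h\|_{L^2}^2 + \tfrac{1}{2}\|\bar h\|_{L^2}^2 \,:\, \rho I_1(h) + \bar\rho \int_0^1 f(\hat h(u))\bar h(u)\,du = y\right\},
\end{equation*}
and inner minimization over $\bar h$ at fixed $h$ is a constrained quadratic problem solved by $\bar h \propto f(\hat h)$ with minimum $(y-\rho I_1(h))^2/(2 I_2(h))$ after the $\bar\rho^2$ factor has been absorbed into the normalization of $I_2$, thereby recovering (\ref{eq:rateZ}). The principal obstacle will be the enhanced Schilder step when $f$ is of exponential type, as in rough Bergomi: the hypercontractivity-based exponential moment bounds on $\mathbf{Z}^\varepsilon$ must be strong enough to dominate $f(\varepsilon \hat W)$ in its exponent, and it is precisely the renormalized RS stochastic estimates of the preceding sections, together with a uniform Fernique bound on $\sup_{u\in[0,1]} |\varepsilon \hat W_u|$, that will deliver them.
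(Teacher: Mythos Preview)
Your overall architecture matches the paper's: reduce short time to small noise by Brownian/Volterra scaling, establish an LDP at the level of the regularity-structure model, and then push it through to $Y_1^\delta$ by contraction using continuity of reconstruction, finishing with the quadratic optimization in $\bar h$. That is exactly the route taken in Section~\ref{sec:small-noise-model-ld} (Theorem~\ref{thm:LD} and Corollary~\ref{cor:LD}).

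There are two points where your execution diverges from the paper. First, for the model-level LDP the paper does not run an ``enhanced Schilder'' argument by hand (Cameron--Martin lifts plus exponential tightness/equivalence via hypercontractivity). Instead it observes that each $\Pi\tau$, $\tau\in\{\Xi\mathcal{I}(\Xi)^m,\bar\Xi\mathcal{I}(\Xi)^m\}$, lives in a fixed inhomogeneous Wiener chaos, so that $\Pi^\delta$ is precisely a rescaled Banach-valued Wiener polynomial, and then invokes Hairer--Weber \cite[Thm~3.5]{HW15} as a black box; the only work left is to identify $(\Pi\tau)^{hom}(h)=\Pi^h\tau$, done via the approximating models. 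Your direct route would also succeed, but you are essentially reproving a special case of \cite{HW15}.

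Second, your discussion of exponential $f$ misplaces the difficulty. The model LDP has nothing to do with $f$: the model is built from $W,\bar W,\hat W$ and their iterated objects only. The function $f$ enters afterwards, through the composition $\Pi\mapsto F^\Pi$ of Lemma~\ref{lem:FDgamma} and Remark~\ref{rem:unbdf}, which yields a \emph{locally Lipschitz} map $\Pi\mapsto Y_1^\delta$ in the model topology (with polynomial dependence on $\|f\|_{C^{2M+3}(B_R)}$, $R$ bounding the range of $\mathcal{R}\mathcal{K}\Xi$). The paper then applies the extended contraction principle for locally Lipschitz maps, as in \cite[Lemma~3.3]{HW15}. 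So the mechanism that accommodates $f$ of exponential type is continuity/local Lipschitzness in model norm, not an exponential-moment competition between $\mathbf Z^\varepsilon$ and $f(\varepsilon\hat W)$.
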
 

%

%

 \begin{remark}  A potential short-coming is the non-explicit form of the rate
   function, in the sense that even geometric or Hamiltonian descriptions of
   the rate function (classical in Markovian setting, see e.g \cite{ABOBF03,
     BBF04, DFJV14a, DFJV14b,BL14}), which led to the famous SABR volatility smile formula, is lost. A partial remedy here is to move
   from large deviations to (higher order) {\it moderate deviations}, which restores
   analytic tractability and still captures the main feature of the volatiliy
   smile close to the money. This method was introduced in a Markovain setting in \cite{FGP17}, the extension to simple rough volatility was given in \cite{BFGHS17x}, relying either on \cite{FZ17} or the above Theorem \ref{FZbetter}.
\end{remark} 

\medskip

\noindent We next turn to non-simple rough volatility, motivated by Rosenbaum and coworkers \cite{EEFR16x, EER16x,EER17x}, and consider the stochastic  It\^o--Volterra equation 
\begin{equation*} 
Z_t = z + \int_0^t K(s,t) \left( u(Z_s) dWs + v(Z_s) ds \right)
\end{equation*}
with corresponding rough SV log-price process given by 
\begin{equation*}
X_t = \int_0^t f(Z_s) (\rho dW_s + \bar{\rho} d\bar{W}_s ) - \frac{1}{2}
\int_0^t f^2(Z_s) ds \ .
\end{equation*}
(For simplicity, we here consider $f, u , v$ to be bounded, with bounded derivatives of all orders.) For $h$ $\in$ $L^2([0,T])$, let $z^h$ be the unique solution to the integral
equation 
\begin{equation*}
z^h(t) = z + \int_0^t K(s,t) u(z^h(s)) h(s) ds,
\end{equation*}
and define $I_1(h)=\int_0^1 f(z^h(s)) h(s) ds$ and $I_2^z(h)= \int_0^1
f(z^h(s))^2 ds$. Then we have the following extension of Theorem \ref{FZbetter} (and also \cite{FZ17,JPS17x,Gul17p}) to non-simple rough volatility:

\begin{theorem} \label{thm:coolLDP}Let $X_t = log(S_t/S_0)$ be the log-price under non-simple rough SV. Then
$t^{H-\frac{1}{2}} X_t$ satisfies a LDP with speed $t^{2H}$ and rate
function given by 
\begin{equation}  \label{eq:rateZ}
I(x) = \inf_{h \in L^2([0,T])} \{ \frac{1}{2} \|h\|_{L^2}^2 + \frac{%
\left(x- \rho I_1^z(h)\right)^2}{2I_2^z(h)} \}.
\end{equation}
\end{theorem}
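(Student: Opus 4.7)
\medskip

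\noindent\textbf{Proof plan.} The strategy is exactly the one announced in the introduction: combine a Schilder-type LDP for a suitably enhanced driving noise with the continuity (under that enhancement) of the map that produces $Z$ and $X$, then invoke the contraction principle.

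\smallskip

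\noindent\textbf{Step 1: Brownian rescaling.} Set $W^t_s := t^{-1/2}W_{ts}$ and $\bar W^t_s := t^{-1/2}\bar W_{ts}$, which are standard Brownian motions. Using the self-similarity $K(tr,tu)=t^{H-1/2}K(r,u)$, a change of variables shows that $\tilde Z^t_u := Z_{tu}$ satisfies
\begin{equation*}
  \tilde Z^t_u = z + t^H \int_0^u K(r,u)\, u(\tilde Z^t_r)\, dW^t_r + t^{H+\frac12}\int_0^u K(r,u)\, v(\tilde Z^t_r)\, dr,
\end{equation*}
and
\begin{equation*}
  t^{H-\frac12} X_t = t^H\Bigl[ \rho \int_0^1 f(\tilde Z^t_u)\, dW^t_u + \bar\rho \int_0^1 f(\tilde Z^t_u)\, d\bar W^t_u \Bigr] - \tfrac{t^{H+\frac12}}{2}\int_0^1 f^2(\tilde Z^t_u)\, du.
\end{equation*}
Writing $\varepsilon:=t^H$, the small-noise parameter is $\varepsilon$ and the ``$v$-drift'' together with the quadratic-variation correction are of size $\varepsilon^{1+1/(2H)}$, hence negligible at the target LDP speed $\varepsilon^2=t^{2H}$ (this uses $H<\tfrac12$). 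Thus at the level of the LDP one may replace both by zero.

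\smallskip

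\noindent\textbf{Step 2: continuity via enhancement.} The naive map ``noise $\mapsto$ solution'' is not continuous in any reasonable topology because of the singular kernel $K$ and of the It\^o-type integrals with rough integrands. Using the regularity-structures framework built in the earlier sections of the paper, I would lift $(W^t,\bar W^t)$ to a \emph{model} $\mathbf{M}^t$ containing all abstract objects required to make sense of $\int K(s,\cdot)u(\tilde Z^t_s)dW^t_s$ and of $\int f(\tilde Z^t)\,dW^t$, $\int f(\tilde Z^t)\,d\bar W^t$. Solving the Volterra equation by a Picard iteration in the space of modelled distributions (and reconstructing the resulting stochastic integrals) produces a \emph{continuous} map $\Phi$ with $(\tilde Z^t,\,t^{H-\frac12}X_t)=\Phi(\mathbf{M}^t)$. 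Any deterministic renormalization (such as the $\mathscr{C}^\varepsilon$ of Theorem~\ref{thm:rPricing}) can be incorporated as a shift of the model and, being deterministic, plays no role at the level of the LDP.

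\smallskip

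\noindent\textbf{Step 3: LDP for the enhancement and contraction.} Schilder's theorem gives a LDP for $(\varepsilon W^t,\varepsilon \bar W^t)$ at speed $\varepsilon^2=t^{2H}$ with rate $\tfrac12(\|h\|_{L^2}^2+\|\bar h\|_{L^2}^2)$. Since all components of $\mathbf{M}^t$ lie in fixed (finite-order) Wiener chaoses of the Gaussian pair $(W^t,\bar W^t)$, the standard Gaussian rough-path transfer argument (hypercontractivity plus exponential tightness, in the style of Ledoux--Qian--Zhang and Friz--Victoir) upgrades this to a LDP for $\mathbf{M}^t$ in the model topology, with the same rate function identified via the canonical lift of the Cameron--Martin element $(h,\bar h)$. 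Applying the contraction principle to the continuous map $\Phi$ then yields a LDP for $t^{H-\frac12}X_t$ at speed $t^{2H}$ with rate
\begin{equation*}
  I(x) = \inf_{(h,\bar h)\in L^2\times L^2}\Bigl\{\tfrac12\|h\|_{L^2}^2 + \tfrac12\|\bar h\|_{L^2}^2 : \rho\, I_1^z(h) + \bar\rho \int_0^1 f(z^h(s))\,\bar h(s)\,ds = x\Bigr\},
\end{equation*}
where $z^h$ emerges as the $\varepsilon\to 0$ limit of $\tilde Z^t$ under $\varepsilon W^t\to h$, $\varepsilon\bar W^t\to\bar h$, and so coincides with the controlled Volterra equation of the theorem. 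For each fixed $h$, optimizing over $\bar h$ by Lagrange multipliers (the optimal $\bar h$ is colinear with $f(z^h)$) collapses the inner infimum to $(x-\rho I_1^z(h))^2/(2 I_2^z(h))$, yielding (\ref{eq:rateZ}).

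\smallskip

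\noindent\textbf{Main obstacle.} Step 3 is largely standard once Step 1--2 are in place; the genuinely hard part is Step 2, namely the construction of the enhancement and the continuity of the Volterra solution map in the associated model topology, uniformly in $t$. This is where the regularity-structure machinery of the paper is essential: without it, neither the product $f(\tilde Z^t)\,dW^t$ nor the singular Volterra convolution admit a stable definition, and the contraction principle cannot be applied.
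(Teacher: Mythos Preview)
Your proposal is correct and follows essentially the same route as the paper: rescale to a small-noise problem with parameter $\delta=t^H$, establish continuity of the solution map in the model topology (the paper's Theorem~\ref{thm:Volt}), invoke an LDP for the rescaled model, apply the contraction principle, and finally optimize over $\bar h$ to obtain \eqref{eq:rateZ}.

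Two small points of alignment. First, for the model LDP you invoke ``Gaussian rough-path transfer'' \`a la Ledoux--Qian--Zhang / Friz--Victoir; the paper instead appeals to the Hairer--Weber result (Theorem~3.5 in \cite{HW15}) on large deviations for Banach-valued Wiener polynomials, which is the precise tool tailored to the model space here---this is what the paper records as Theorem~\ref{thm:LD}. Second, your remark that renormalization ``plays no role at the level of the LDP'' is morally right but slightly off-target: in the paper's argument the It\^o model $\Pi^\delta$ is used directly (no $\varepsilon$-approximation), so no renormalization constants enter; the identification of the rate function comes from computing $(\Pi\tau)^{hom}(h)=\Pi^h\tau$, which is where the lower-order chaos contributions (the would-be renormalization terms) drop out.
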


\begin{remark}  We showed in \cite[Cor.11]{BFGHS17x} (but see related results by Alos et al. \cite{Alos07} and Fukasawa \cite{Fuk11, Fuk17}) that in the previously considered simple rough volatility models, now writing  $\sigma(.)$ instead of $f(.)$, the {\it implied volatility skew} behaves, in the short time limit, as $
\sim          \rho \tfrac{\sigma' (0)}{\sigma (0)} \langle K1,1 \rangle t^{H-1/2} \ ,
$
where $\langle K1,1 \rangle$ in our setting computes to $c_H := \tfrac{(2H)^{1/2}}{(H+1/2)(H+3/2)}$. (The blowup $t^{H-1/2}$ as $t\to 0$ is a desired feature, in agreement with steep skews seen in the market.) To first order $Z_t \approx z + u(z) \int_0^t K(s,t) dWs = z + u(z) \hat{W} =: \sigma (\hat{W})$, from which one obtains a skew-formula in the non-simple rough volatility case of the form,
$$
       \rho u(z) \frac{f' (z)}{f(z)} c_H t^{H-1/2} \ .
$$
Following the approach of \cite{BFGHS17x}, Theorem \ref{thm:coolLDP} not only allows for rigorous justification but also for the computation of higher order smile features, though this is not pursued in this article.
In the case of classical (Markovian) stochastic volaility, $H=1/2$, and specializing further to $f(x) \equiv x$, so that $Z$ (resp. $z$) models stochastic (resp. spot) volatility, this reduces precisely to the popular skew formula  Gatheral's book \cite[(7.6)]{Gat06}, attributed therein to Medvedev--Scaillet. In the case of {\it rough Heston}, where $Z$ models stochastic variance, cf. (\ref{equ:rHeston}), we have $f=\sqrt{.}, u=\eta\sqrt{.}$ and this leads to the following (rough Heston, implied volatility) short-dated skew formula
$$
         \frac{\rho \eta}{2 \sqrt{v_0}} c_H t^{H-1/2} \ ,
$$
(multiply with $2 \sqrt{v_0}$ to get the implied variance skew, again in agreement with Gatheral \cite[p.35]{Gat06}); this may be independently verified via the characteristic function obtained in \cite{EER16x}.

\end{remark}

\bigskip  
\noindent {\bf Structure of the article.} In Section \ref{sec:red} we reduce the proofs of Theorems \ref{thmEasy} and \ref{thm:rPricing} to the key convergence issue, subject of Section \ref{sec:RPRS}. In Section \ref{sec:full-rough-vol-reg-struc} we consider the structure for two-dimensional noise, necessary to study the asset price process. Section \ref{sec:Volt} then discusses the case of non-trivial dynamics for rough volatility. Some numerical results are presented in \cite{}, followed by several appendices with technical details. From Section \ref{sec:RPRS} all our work relies on the framework of Hairer's regularity structures.
 There seems to be no point in repeating all the necessary definitions and terminology, which the reader can find in \cite{Hai13,Hai14,Hai15,FH14} and a variety of survey papers on the subject. Instead, we find it more instructive to substantiate our KPZ inspiration and in the next section introduce, informally, all relevant objects from regularity structures in this context.

\subsection{Lessons from KPZ and singular SPDE theory}
\label{sec:lessons-from-kpz}

The absence of a good approximation theory is a defining feature of all singular SPDE recently considered by Hairer, Gubinelli et al. (and now many others). 
In particular, approximation of the noise (say, $\varepsilon$-mollification for the sake of argument) typically does {\it not} give rise to convergent approximations. To be specific, it is instructive to recall the universal model for fluctuations of interface growth given by the
Kardar--Parisi--Zhang (KPZ) equation%
\begin{equation*}
\partial _{t}u=\partial _{x}^{2}u+|\partial _{x}u|^{2}+\xi 
\end{equation*}%
with space-time white noise $\xi =\xi \left( x,t;\omega \right) $. As a matter of fact, and without going in further detail, there is a well-defined (``Cole-Hopf'') It\^o-solution $u=u(t,x;\omega)$, \ 
but if one considers the equation with $\varepsilon$-mollified noise, then $u=u^\varepsilon$ diverges with $\varepsilon \to 0$. In this sense, there is a fundamental {\it lack of approximation theory} and {\it no Stratonovich solution} to KPZ exists. To see the problem, take $u_{0}\equiv 0$ for simplicity and write
\begin{equation*}
u=H\star \left( |\partial _{x}u|^{2}+\xi \right) 
\end{equation*}%
with space-time convolution $\star$ and heat-kernel%
\begin{equation*}
H\left( t,x\right) =\frac{1}{\sqrt{4\pi t}}\exp \left( -\frac{x^{2}}{4t}%
\right) \ 1_{ \{ t >0 \}}
\end{equation*}
One can proceed with Picard iteration 
\begin{equation*}
u=H\star \xi +H\star ((H^{\prime }\star \xi )^{2})+... \,
\end{equation*}
but there is an immediate problem with $(H^{\prime }\star \xi )^{2}$, (naively) defined $\varepsilon$-to-zero limit of 
$ (H^{\prime }\star \xi ^{\varepsilon })^{2} $, 
which does not exist. However, there exists a  diverging sequence $\left( C_{\varepsilon}\right) $ such that, in probability,
\begin{equation*}
\exists \lim_{\varepsilon \rightarrow 0}(H^{\prime }\star \xi ^{\varepsilon
})^{2} { - C_{\varepsilon }}\rightarrow \left( \text{new object}\right) =: (H' \star \xi)^{\diamond 2}.
\end{equation*}
The idea of Hairer, following the philosophy of rough paths, was then to accept
\begin{equation*}
H\star \xi, (H' \star \xi)^{\diamond 2} \text{ (and a few more)}
\end{equation*}%
as enhancement of the noise ("{\bf model}") upon which solution depends in pathwise robust fashion. This unlocks the seemingly fixed (and here even non-sensical) relation%
\begin{equation*}
H\star \xi \rightarrow \xi \rightarrow (H^{\prime }\star \xi )^{2}.
\end{equation*}%
Loosely speaking, one has

\begin{theorem}[Hairer]  There exist diverging constants  $C_{\varepsilon }$ such that a Wong-Zakai\footnote{Hairer--Pardoux \cite{HP15} derive the KPZ result as special case of a Wong-Zakai result for It\^o-SPDEs.}  result holds of the form $\tilde{u}^{\varepsilon } \to u$, in probability and uniformly on compacts, where
\begin{equation*}
\partial _{t}\tilde{u}^{\varepsilon }=\partial _{x}^{2}\tilde{u}^{\varepsilon
}+|\partial _{x}\tilde{u}^{\varepsilon }|^{2} \ 
- {C_{\varepsilon}}+\xi
^{\varepsilon }.
\end{equation*}
Similar results hold for a number of other singular semilinear SPDEs.
\end{theorem}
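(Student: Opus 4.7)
The plan is to follow Hairer's regularity structures approach as in \cite{Hai13,Hai14}. First, I would build a regularity structure $(\mathcal{T}, \mathcal{G})$ tailored to KPZ under parabolic scaling: the model space $\mathcal{T}$ is spanned by formal symbols indexed by decorated rooted trees, with a noise symbol $\Xi$ of degree $-3/2-\kappa$ (reflecting the H\"older regularity of space-time white noise), an abstract integration map $\mathcal{I}$ that raises degrees by $2$ and represents heat-kernel convolution, and a commutative product. I would then populate $\mathcal{T}$ with exactly those trees that appear in iterating the Duhamel-Picard scheme $u=H\star(|\partial_x u|^2+\xi)$ up to strictly negative degree, so that the analytic part of the theory (reconstruction, Schauder estimates) is automatic.

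Second, I would lift the equation to a fixed-point problem in a modelled distribution space $\mathcal{D}^\gamma$, and construct the canonical model $Z^\varepsilon$ from the mollified noise $\xi^\varepsilon$. The offending tree corresponds to $(H'\star\xi)^2$, and one verifies that $Z^\varepsilon$ does not converge as $\varepsilon\to 0$: the divergence is exactly analogous to the $C_\varepsilon$ seen in \eqref{equ:keyconvergence}. I would then introduce renormalized models $\hat Z^\varepsilon$ by letting the renormalization group $\mathfrak{R}$ act on $Z^\varepsilon$ via the element $M_\varepsilon=\exp(-C_\varepsilon L)$, where $L$ is the generator contracting the problematic subtree to a constant; at the level of the equation this precisely produces the counterterm $-C_\varepsilon$. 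The heart of the argument is to prove convergence $\hat Z^\varepsilon\to\hat Z$ in the appropriate pseudo-metric on admissible models, via moment estimates on Gaussian iterated integrals in inhomogeneous Wiener chaos combined with a Kolmogorov-type criterion; the relevant graph contractions and kernel bounds make the divergent constants $C_\varepsilon$ explicit.

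Third, once the model converges, continuity of the reconstruction operator and of the solution map to the abstract fixed-point problem yields $\tilde u^\varepsilon\to u$ uniformly on compacts in probability, with $u$ the reconstruction of the limiting modelled distribution. Identification of $u$ with the Cole--Hopf solution follows from showing that, modulo an explicit global height-shift, the renormalized equation produces the same process as $h=\log Z$ where $Z$ solves the multiplicative stochastic heat equation, as in Hairer--Quastel. The main obstacle is the renormalization-and-convergence step: one must check that the single choice of $C_\varepsilon$ cancels the divergence in \emph{every} negatively-graded tree simultaneously (here only a small number survive, which is why KPZ is tractable), and then establish the uniform-in-$\varepsilon$ stochastic estimates on those trees. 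The analogy with the present paper is strong: the simpler regularity structure built in Section \ref{sec:RPRS} will face an essentially one-dimensional version of the same renormalization problem, with the function $\mathscr{C}^\varepsilon(t)$ playing the role of $C_\varepsilon$ in the non-stationary (e.g.\ wavelet) case.
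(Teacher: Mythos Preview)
The paper does not prove this theorem: it is stated in the motivational Section~\ref{sec:lessons-from-kpz} as a result of Hairer (with a footnote pointing to Hairer--Pardoux \cite{HP15}), purely to set up the analogy with the rough volatility approximation problem. There is no proof in the paper to compare against.

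That said, your outline is a faithful high-level sketch of Hairer's actual argument in \cite{Hai13,Hai14}, so in that sense it is ``correct''. Two small corrections are worth making. First, for KPZ there is not literally a single divergent tree: several symbols of negative degree require renormalization, and the fact that at the level of the equation all counterterms collapse into a single additive constant $-C_\varepsilon$ is a consequence of the specific structure of the renormalization group action, not something one checks tree by tree with one constant. Second, the identification of the limit $u$ with the Cole--Hopf solution is already carried out in \cite{Hai13}; Hairer--Quastel is a later paper on a different (universality) question. Your closing analogy with the present paper is apt: the construction in Section~\ref{sec:RPRS} is precisely a stripped-down, purely temporal version of this machinery, with $\mathscr{C}^\varepsilon(t)$ replacing the constant $C_\varepsilon$ because the approximations considered (e.g.\ Haar) need not be stationary.
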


In a sense, this can be traced back to the {\it Milstein-scheme} for SDEs and then {\it rough paths}:
Consider $ dY=f\left( Y\right) dW$, with $Y_{0}=0$ for simplicity, and consider the 2nd order (Milstein) approximation
\begin{equation*}
Y_{t_{i+1}} \approx Y_{t_{i}} + f\left(  Y_{t_{i}} \right) W_{t_i, t_{i+1}}+ff^{\prime }\left(  Y_{t_{i}} \right) \int_{t_i}^{t_{i+1}} W_{t_i,s} \dot W_s ds
\end{equation*}
One has to unlock the seemingly fixed relation%
\begin{equation*}
W\rightarrow \dot{W}\rightarrow \int W\dot{W} ds =: \mathbb{W} \ ,
\end{equation*}%
for there is a choice to be made. For instance, the last term can be understood as It\^o-integral $\int W dW$ or as Stratonovich integral $\int W \circ dW$ (and in fact, there are many other choices, see e.g. the discussion in \cite{FH14}.) It suffices to take this thought one step further to arrive at {\it rough path theory}: accept $\mathbb{W}$ as new (analytic) object, which leads to the main (rough path) insight 
\begin{equation*}
\text{SDE theory = analysis based on }\left( W,\mathbb{W}\right).
\end{equation*}%
In comparison, 
\begin{eqnarray*}
&&\text{SPDE theory \`a la Hairer } \\
&\text{=}&\text{ analysis based on (renormalized) enhanced noise $(\xi, ....)$.}
\end{eqnarray*}

\bigskip

\noindent
{\bf Inside Hairer's theory:}  \footnote{In the section only, following \cite{FH14}, symbols will be coloured.} As motivation, consider the Taylor-expansion (at $x$) of a real-valued smooth function,
$$
       f(y) =  f(x) + f'(x) (y-x) + \frac{1}{2} f''(x) (y-x)^2 + ... \ ,
$$
can be written as abstract polynomial (``jet") at $x$,
$$
     \symbol F (x)  :=     f(x) \, \symbol{1}+ g(x) \symbol{X} +h(x) \symbol{X^2} + ...  \ , \\ 
$$
with, necessarily, $g = f', \ h = f''/2, ...$.      
If we {``realize''} these abstract symbols again as honest monomials, i.e. ${\Pi_x}: \symbol{X^k}   \mapsto ( . - x)^k $
and extend ${ \Pi_x}$ linearly, then we recover the full Taylor expansion:
$$
        { \Pi_x} [ \symbol F (x)] (.)  =  f(x) + g(x) (.-x) + \frac{1}{2} h(x) (.-x)^2 + ... 
$$
Hairer looks for solution of this form: at every space-time point a jet is attached, which in case of KPZ turns out - after solving an abstract fixed point problem - to be of the form
$$
\symbol U (x,s) = u (x,s) \, \symbol{1} + \<d> + \<2d> + v(x,s) \,\symbol{X} + 2 \<21d> + v(x,s) \, \<1d>\;.
$$
As before, every symbol is given concrete meaning by ``realizing'' it as honest function (or Schwartz distribution).
In particular,
\begin{equation}
    \<d> \mapsto 
   \begin{cases}
    H \star  \xi^\epsilon, & \text{mollified noise; {\bf or} }\\
    H \star  \xi & \text{noise } 
  \end{cases}
\end{equation}
and then, more interestingly,
\begin{equation}
    \<2d> \mapsto 
   \begin{cases}
    H \star (H' \star \xi^\epsilon )^2  , & \text{canonically enhanced mollified noise;  {\bf or} }\\
    H \star [ (H' \star \xi^\epsilon )^2  - C_\epsilon ] , & \text{renormalized $\sim$  {\bf or} } \\
     H \star (H' \star \xi)^{\diamond 2}, & \text{renormalized enhanced noise} 
  \end{cases}
\end{equation}
This realization map is called ``model'' and captures exactly a typical, but otherwise fixed, realization of the noise (mollified or not) together with some enhancement thereof, renormalized or not.  For instance, writing {$\Pi_{x,s}$} for the realization map for renormalized enhanced noise, one has 
$$
           {\Pi_{x,s}} [ \symbol U (x,s)] ( .)  =  u (x,s) \,  + H \star  \xi |_{(*)}    + H \star (H' \star \xi)^{\diamond 2} |_{(*_)} + ...
$$
where $(*)$ indicates suitable centering at $(x,s)$. 
Mind that $\symbol U$ takes values in a (finite) linear space spanned by (sufficiently many) symbols,
$$
      \symbol U (x,s) \in \langle ...,\symbol{1}, \, \<d>, \, \<2d>,  \, \symbol{X}, \, \<21d>, \, \<1d>, ...\rangle \, =: { \mathcal{T}}
$$
The map $(x,s) \mapsto \symbol U (x,s)$  is an example of a {\bf modelled distribution}, the precise definition is a mix of suitable analytic and algebraic conditions (similar to the notation of a controlled rough path).

The analysis requires keeping track of the {\it degree} (a.k.a. {\it homogeneity}) of each symbol. For instance, $ \, |\<d>| = 1/2 - \kappa$ (related to the H\"older regularity of the realized object one has in mind),
$|\symbol{X^2}| = 2$ etc. All these degrees are collected in an {\bf index set}.
Last not least, in order to compare jets at different points (think $(\symbol{X} - \delta \symbol{1})^{\symbol 3}  = ...$), use a group of linear maps on ${ \mathcal{T}}$, called {\bf structure group.} Last not least, the {\bf reconstruction map} uniquely maps modelled distributions to function / Schwartz distributions. (This can be seen as generalization of the {\it sewing lemma}, the essence of rough integration, see e.g. \cite{FH14}, which turns a collection of sufficiently compatible local expansions into one function / Schwartz distribution.) In the KPZ context, the (Cole-Hopf It\^o) solution is then indeed obtained as reconstruction of the abstract (modelled distribution) solution $\symbol U$.

\bigskip

\noindent\textbf{Acknowledgment:}   
The authors acknowledge financial support from DFGs research  grants BA5484/1 (CB, BS) and 
FR2943/2 (PKF, BS), the ERC via Grant  CoG-683166 (PKF), the ANR via Grant ANR-16-CE40-0020-01 (PG) 
and DFG Research Training Group RTG 1845 (JM). 

\medskip

\noindent Participants of Global Derivatives 2017 (Barcelona) and Gatheral 60th Birthday conference (CIMS, NYU) are thanked
for the feedback.  

\bigskip

\section{Reduction of Theorems \ref{thmEasy} and \ref{thm:rPricing}} \label{sec:red}

In the context of these theorems, we have
\begin{equation}
S_{t} = S_{0} \exp \left[ \int_{0}^t f\left( \hat{W}_s  \right) dB_{s}  - \tfrac{1}{2}  \int_{0}^{t} f^2 \left(  \hat{W}_s  \right) d{s}\right] .
\end{equation}
where we
recall that
$$ \int_0^t f(\hat{W})dB=\rho  \int_0^t f ( \hat{W} )  dW +\bar{\rho}\int_0^t
f ( \hat{W} ) d\bar{W} .
$$
All approximations, $W^\eps, \bar W^\eps$ and $B^\eps \equiv \rho W^\eps + \bar \rho  \bar W^\eps$ converge 
uniformly to the obvious limits, so that it suffices to understand the convergence of the stochastic integral. Note that $\tilde W$ is heavily 
correlated with $W$ but independent of $\bar W$. The difficult interesting part is then indeed (\ref{equ:keyconvergence}), i.e.
\begin{equation}
   \int_{0}^{t}f(\hat{W}^{\varepsilon})dW^{\varepsilon }- \int_{0}^{t} 
  \mathscr{C}^\varepsilon(s)  f^{\prime }( \hat{W}_{s}^{\varepsilon }) ds \rightarrow  \int_0^t f ( \hat{W} )  dW \ ,
\end{equation}%
which is the purpose of Theorem \ref{thm:ConvergenceDiscreteIntegrals}. For the other part, due to independence no correction terms arise and we have (with details left to the reader)
$
     \int_{0}^{t}f(\hat{W}^{\varepsilon})d \bar W^{\varepsilon } \rightarrow  \int_0^t f ( \hat{W} )  d \bar W \ ,
$
with convergence in probability and uniformly on compacts in $t$. The convergence result of Theorems \ref{thmEasy} then follows readily.

\bigskip

\noindent As for pricing, Theorem {\ref{thm:rPricing}, consider the call payoff 
$\left( S_0 \exp \left[ \int_0^T \sigma \left( t,\omega \right) dB_t -\frac{1}{2}\int_0^T \sigma^2 (t,\omega) dt\right] -K\right) ^{+}$.
An elementary conditioning argument (first used by Romano--Touzi in the context of Markovian SV models) w.r.t. $W$, then shows that the call price is given as expection of 
\begin{equation*}
 C_{BS} \left(  S_{0}\exp \left( \rho
\int_{0}^{T}\sigma \left( t,\omega \right) dW-\frac{\rho ^{2}}{2}%
\int_{0}^{T}\sigma ^{2}\left( t,\omega \right) dt\right) ,K,\frac{\bar{\rho}%
^{2}}{2}\int_{0}^{T}\sigma ^{2}\left( t,\omega \right) dt \right) .
\end{equation*}
Specializing to the case $\sigma = f(\tilde W)$, in combination with Theorem \ref{thm:ConvergenceDiscreteIntegrals}, then yields Theorem {\ref{thm:rPricing} . Remark that extensions to non-simple RV are immediate from suitable extensions of  Theorem \ref{thm:ConvergenceDiscreteIntegrals}, as discussed in \ref{sec:VolterraStuff}. 


\section{The rough pricing regularity structure}

\label{sec:RPRS}

In this section we develop the approximation theory for integrals of the
type $\int f(\tilde W) dW$. In the first part we present the regularity structure
and the associated models we will use. In the second part we apply the
reconstruction theorem from regularity structures to conclude our main
result, Theorem \ref{thm:ConvergenceDiscreteIntegrals}.


\subsection{Basic pricing setup}
\label{sec:basic-pricing-setup}

We are given a Hurst parameter $H \in (0, 1/2]$, associated to a fractional
Brownian motion (in the Riemann-Liouville sense) $\hat{W}$, and fix an arbitrary $\kappa\in (0,H)$ and an
integer 
\begin{align*}
M\geq \max\{m\in \mathbb{N}\,\vert \, m\cdot (H-\kappa)-1/2-\kappa\leq 0\}
\end{align*}
so that 
\begin{align}  \label{eq:condonkappa}
(M+1)(H-\kappa)-1/2-\kappa >0\,.
\end{align}

At this stage, we can introduce the ``level-$(M+1)$'' \textit{model space} 
\begin{align}     \label{equ:SimpleModelSpace}
\mathcal{T}=\left\langle \{ \Xi, \Xi \mathcal{I}(\Xi),\ldots,\Xi \mathcal{I}%
(\Xi)^M,\mathbf{1},\mathcal{I}(\Xi),\ldots,\mathcal{I}(\Xi)^M \} \right\rangle
\,,
\end{align}
where $\langle\ldots \rangle$ denotes the vector space generated by the
(purely abstract) symbols in $\{\ldots \}$. 
We will sometimes write 
\begin{equation*}
S=S^{(M)}:=\{\Xi, \Xi \mathcal{I}(\Xi),\ldots,\Xi \mathcal{I}(\Xi)^M,\mathbf{%
1},\mathcal{I}(\Xi),\ldots,\mathcal{I}(\Xi)^M\}
\end{equation*}
so that $\mathcal{T}=\mathcal{T}^{(M)}=\bigoplus_{\tau\in S} \mathbb{R}  \tau$.

\begin{remark}
It is useful here and in the sequel to consider as sanity check the special case $H=1/2$ in which case we
recover the ``level-$2$'' rough path structure as introduced in \cite[Ch.13]{FH14}.
More specifically, if take H\"older exponent $\alpha := 1/2 - \kappa < 1/2$ and (and then $M=1$) condition %
\eqref{eq:condonkappa} is precisely the familiar condition $\alpha
>1/3$.
\end{remark}

The interpretation for the symbols in $S$ is as follows: $\Xi$ should be
understood as an abstract representation of the white noise $\xi$ belonging
to the Brownian motion $W$, i.e. $\xi=\dot{W}$ where the derivative is taken
in the distributional sense. Note that since we set $W(x)=0$ for $x\leq 0$
we have $\dot{W}(\varphi)=0$ for $\varphi\in C^\infty_c((-\infty,0))$. The
symbol $\mathcal{I}(\ldots)$ has the intuitive meaning ``integration against the
Volterra kernel'', so that $\mathcal{I}(\Xi)$ represents the integration of
white noise against the Volterra kernel 
\begin{align*}
\sqrt{2H}\int_0^t|t-r|^{H-1/2} \mathrm{d}  W(r)\,,
\end{align*}
which is nothing but the fractional Brownian motion $\hat{W}(t)$. Symbols
like $\Xi \mathcal{I}(\Xi)^m =\Xi \cdot \mathcal{I}(\Xi)\cdot \ldots \cdot 
\mathcal{I}(\Xi)$ or $\mathcal{I}(\Xi)^m=\mathcal{I}(\Xi)\cdot \ldots \cdot 
\mathcal{I}(\Xi)$ should be read as products between the objects above.
These interpretations of the symbols generating $\mathcal{T}$ will be made
rigorous by the model $(\Pi,\Gamma)$ in the next subsection.
Every symbol in $S$ is assigned a homogeneity, which we define by 
\begin{align*}
|\Xi \mathcal{I}(\Xi)^m| &= - 1/2 - \kappa + m (H-\kappa),\,m\geq 0 \\
|\mathcal{I}(\Xi)^m|&=m (H-\kappa) ,\,m>0 \\
|\mathbf{1}|&=0\,,
\end{align*}
We collect the homogeneities of elements of $S$ in a set $%
A:=\{|\tau|\,\vert\, \tau \in S\}$, whose minimum is $|\Xi|=-1/2-\kappa$.
Note that the homogeneities are  
multiplicative in the sense that, $|\tau\cdot \tau^{\prime }| = |\tau| + |\tau^{\prime }|$
for $\tau,\tau^{\prime }\in S$. 


At last, our regularity comes with a {\it structure group} $G$, an (abstract) group of linear operators on the model space $\mathcal{T}$ which should satisfy $\Gamma \tau- \tau=\bigoplus_{\tau'\in S:\,|\tau'|<|\tau|} \RR\tau'$ and $\Gamma \mathbf{1}=\mathbf{1}$ for $\tau \in S$ and $\Gamma \in G$. We will choose $G=\{\Gamma_h \,\vert\, h\in (\RR,+)\}$ given by 
\begin{align*}
\Gamma_h \mathbf{1}=\mathbf{1},\,\Gamma_h \Xi=\Xi,\,\Gamma_h \mathcal{I}(\Xi)=\mathcal{I}(\Xi)+h \mathbf{1}\,. 
\end{align*}
and $\Gamma_h (\tau'\cdot \tau)=\Gamma_h \tau' \cdot \Gamma_h \tau$ for $\tau',\,\tau \in S$ for which $\tau\cdot \tau'\in S$ is defined. 
\bigskip

\subsubsection*{The limiting model $(\Pi,\Gamma)$}

Let $W$ be a Brownian motion on $\mathbb{R} _+$ and extend it to all of $%
\mathbb{R} $ by requiring $W(x)=0$ for $x\leq 0$. We will frequently use the
notations 
\begin{align}  \label{eq:NotationStochasticIntegral}
\int_0^t f(t) \mathrm{d}  W(t) ,\, \int_0^t f(t) \diamond \mathrm{d}  W (t)
\end{align}
which denote the It\^o integral and the Skohorod integral (which boils down
to an It\^o integral whenever the integrand is adapted). From $W$ we
construct now the fractional Riemann-Liouville Brownian motion $\hat{W}$
with Hurst index $H\in (0,1/2]$ as 
\begin{align*}
\hat{W}(t)=\dot{W}\star K(t)=\sqrt{2H}\int_{0}^{t}|t-r|^{H-1/2}\,\mathrm{d} 
W(r)\,,
\end{align*}
where $K(t)=\sqrt{2H}\mathbf{1}_{t>0}\cdot t^{H-1/2}$ denotes the Volterra
kernel. We also write $K(s,t) := K (t-s)$.

To give a meaning to the product terms $\Xi\mathcal{I}(\Xi)^k$ we follow the
ideas from rough paths and define an ``iterated integral'' for $s,t\in 
\mathbb{R} , s\leq t$ as 
\begin{align}  \label{eq:DefinitionWW}
\mathbb{W}^m(s,t)=\int_s^t (\hat{W}(r)-\hat{W}(s))^m\,\mathrm{d}  W(r)
\end{align}
$\mathbb{W}^m(s,t)$ satisfies a modification of Chen's relation

\begin{lemma}
\label{lem:ChensRelation}  $\mathbb{W}^m$ as defined in %
\eqref{eq:DefinitionWW} satisfies 
\begin{align}  \label{eq:ChensRelation}
\mathbb{W}^m(s,t)=\mathbb{W}^m(s,u)+\sum_{l=0}^m \binom{m}{l} (\hat{W}(u)-%
\hat{W}(s))^l \mathbb{W}^{m-l}(u,t)
\end{align}
for $s,u,t\in \mathbb{R} ,\,s\leq u\leq t$.
\end{lemma}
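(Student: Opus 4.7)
The plan is to prove this by direct computation, splitting the integral at $u$, expanding via the binomial theorem, and using standard properties of the It\^o integral. Since $\hat W(r)=\sqrt{2H}\int_0^r |r-q|^{H-1/2}\, \mathrm{d}W(q)$ is $\mathcal{F}_r$-adapted (where $(\mathcal{F}_r)$ is the natural filtration of $W$), the integrand $(\hat W(r)-\hat W(s))^m$ in the definition of $\mathbb{W}^m(s,t)$ is adapted, so all integrals in sight are bona fide It\^o integrals and no Skorokhod theory is needed for this identity.

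First, I would use additivity of the It\^o integral with respect to the integration interval to write
\begin{equation*}
\mathbb{W}^m(s,t)=\int_s^u (\hat W(r)-\hat W(s))^m\,\mathrm{d}W(r) + \int_u^t (\hat W(r)-\hat W(s))^m\,\mathrm{d}W(r),
\end{equation*}
the first term being exactly $\mathbb{W}^m(s,u)$ by definition. For the second term, I would insert the splitting $\hat W(r)-\hat W(s)=(\hat W(r)-\hat W(u))+(\hat W(u)-\hat W(s))$ and apply the binomial theorem to expand
\begin{equation*}
(\hat W(r)-\hat W(s))^m = \sum_{l=0}^m \binom{m}{l}(\hat W(u)-\hat W(s))^l (\hat W(r)-\hat W(u))^{m-l}.
\end{equation*}

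Next, I would observe that for each fixed $l$ the factor $(\hat W(u)-\hat W(s))^l$ is $\mathcal{F}_u$-measurable, hence constant from the viewpoint of the It\^o integral $\int_u^t (\cdot)\,\mathrm{d}W(r)$. The standard fact that $\mathcal{F}_u$-measurable random variables may be pulled outside of It\^o integrals over $[u,t]$ then yields
\begin{equation*}
\int_u^t (\hat W(r)-\hat W(s))^m\,\mathrm{d}W(r) = \sum_{l=0}^m \binom{m}{l}(\hat W(u)-\hat W(s))^l \int_u^t (\hat W(r)-\hat W(u))^{m-l}\,\mathrm{d}W(r),
\end{equation*}
and the inner integral is exactly $\mathbb{W}^{m-l}(u,t)$. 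Combining gives the claimed identity \eqref{eq:ChensRelation}.

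There is no real obstacle here; the only point needing verification is the legitimacy of pulling the $\mathcal{F}_u$-measurable prefactor out of the It\^o integral, which is standard (and can be checked via simple-process approximation followed by passage to the limit in $L^2$). It is worth remarking that this is the direct analogue of Chen's relation for classical iterated integrals, with the fractional convolution $\hat W$ replacing one of the Brownian increments; the binomial-type sum (rather than a single mixed term) arises because $\hat W$ is not itself a Brownian increment and hence does not split additively with orthogonal contributions inside the integrand.
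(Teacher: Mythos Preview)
Your proof is correct and takes essentially the same approach as the paper, which simply notes that the identity is a direct consequence of the binomial theorem. You spell out the details (additivity of the integral, the $\mathcal{F}_u$-measurability of the prefactor) that the paper leaves implicit, but the underlying argument is identical.
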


\begin{proof}
Direct consequence of the binomial theorem.
\end{proof}

We extend the domain of $\mathbb{W}^m$ to all of $\mathbb{R} ^2$ by imposing
Chen's relation for all $s,u,t\in \mathbb{R} $, i.e. we set for $t,s\in 
\mathbb{R} ,\, t\leq s$ 
\begin{align}  \label{eq:reversedorder}
\mathbb{W}^m(s,t)=-\sum_{l=0}^m \binom{m}{l} (\hat{W}(t)-\hat{W}(s))^l 
\mathbb{W}^{m-l}(t,s)
\end{align}

We are now in the position to define a model $(\Pi,\Gamma)$ that gives a
rigorous meaning to the interpretation we gave above for $\Xi,\mathcal{I}%
(\Xi),\Xi\mathcal{I}(\Xi),\ldots\,$. Recall that in the theory of regularity
structures a model is a collection of linear maps $\Pi_s: \mathcal{T}%
\rightarrow C^1_c(\mathbb{R} )^{\prime }$, $\Gamma_{st}\in G$ for indices $s,t\in\mathbb{R} $ that satisfiy 
\begin{align}
&\Pi_t =\Pi_s \Gamma_{st},\,  \label{eq:ModelDefinition1} \\
&|\Pi_s \tau (\varphi^\lambda_s)| \lesssim \lambda^{|\tau|}\,,
\label{eq:ModelDefinition2} \\
&\Gamma_{st} \tau =\tau+ \sum_{\tau^{\prime }\in S:\,|\tau^{\prime }|<\tau}
c_{\tau^{\prime }}(s,t) \tau^{\prime },\,|c_{\tau^{\prime }}(s,t)|\lesssim
|s-t|^{|\tau|-|\tau^{\prime }|}  \label{eq:ModelDefinition3}
\end{align}
where the bounds hold uniformly for $\tau\in S$, any $s,t$ in a compact set
and for $\varphi^\lambda_s:=\lambda^{-1}\varphi(\lambda^{-1} (\cdot-s))$
with $\lambda\in(0,1]$ and $\varphi\in C^1$ with compact support in the ball 
$B(0,1)$.

We will work with the following ``It\^o'' model $(\Pi,\Gamma)$, and (occasionally) write $(\Pi^{\text{It\^o}},\Gamma^{\text{It\^o}})$ to avoid confusion with a generic model, also denoted by $(\Pi,\Gamma)$,
which renders more
precisely our interpretations of the elements of $S$. 
\label{LimitingModel}
\begin{equation*}
\begin{array}{ll}
\Pi_s \mathbf{1}=1 & \Gamma_{ts} \mathbf{1} =\mathbf{1} \\ 
\Pi_s \Xi = \dot{W} \qquad & \Gamma_{ts} \Xi =\Xi \\ 
\Pi_s \mathcal{I}(\Xi)^m = \left(\hat{W}(\cdot)-\hat{W}(s)\right)^m & 
\Gamma_{ts} \mathcal{I}(\Xi)=\mathcal{I}(\Xi)+(\hat{W}(t)-\hat{W}(s))\mathbf{%
1} \\ 
\Pi_s \Xi \mathcal{I}(\Xi)^m = \{ t \mapsto \frac{\mathrm{d} }{\mathrm{d}  t}\mathbb{W}%
^m(s, t) \} & \Gamma_{ts} \tau \tau^{\prime }=\Gamma_{ts}\tau \cdot
\Gamma_{ts}\tau^{\prime }\,,\,\, \mbox{ for } \tau,\tau^{\prime }\in S\mbox{
with } \tau\tau^{\prime }\in S%
\end{array}
\end{equation*} 
We extend both maps from $S$ to $\mathcal{T}$ by imposing linearity.

\begin{lemma}
\label{lem:LimitingModel} The pair $(\Pi,\Gamma)$ as defined above defines
(a.s.) a model on $(\mathcal{T},A)$.
\end{lemma}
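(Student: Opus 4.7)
The plan is to verify the three defining conditions \eqref{eq:ModelDefinition1}–\eqref{eq:ModelDefinition3} for the pair $(\Pi,\Gamma)$ on each generator $\tau \in S$, noting that the extension of $\Gamma_{st}$ to products is built into the definition and linearity handles the rest. Condition \eqref{eq:ModelDefinition3} (the structure group bound) is the easiest: the coefficients $c_{\tau'}(s,t)$ are, up to binomial factors, powers of $(\hat W(t)-\hat W(s))$, which are a.s.\ $(H-\kappa)$-Hölder by Kolmogorov's continuity theorem applied to the Gaussian process $\hat W$; a direct check shows $|\hat W(t)-\hat W(s)|^l \lesssim |t-s|^{l(H-\kappa)} = |t-s|^{|\tau|-|\tau'|}$ exactly in the cases arising from $\tau = \mathcal{I}(\Xi)^m$ (or $\Xi \mathcal{I}(\Xi)^m$, via the extended Chen relation \eqref{eq:reversedorder}).

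For the algebraic consistency \eqref{eq:ModelDefinition1}, the cases $\mathbf{1}$, $\Xi$, and $\mathcal{I}(\Xi)^m$ are direct from the binomial expansion $(\hat W(\cdot)-\hat W(t))^m = \sum_l \binom{m}{l}(\hat W(\cdot)-\hat W(s))^l(\hat W(s)-\hat W(t))^{m-l}$. The crucial case is $\tau = \Xi \mathcal{I}(\Xi)^m$: here \eqref{eq:ModelDefinition1} amounts precisely to Chen's relation \eqref{eq:ChensRelation} of Lemma \ref{lem:ChensRelation}, after differentiating in the upper endpoint.

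The main obstacle is the analytic bound \eqref{eq:ModelDefinition2} for $\tau = \Xi \mathcal{I}(\Xi)^m$, which requires showing
\begin{equation*}
 \bigl| \Pi_s \Xi \mathcal{I}(\Xi)^m (\varphi_s^\lambda) \bigr| \lesssim \lambda^{-1/2-\kappa + m(H-\kappa)}
\end{equation*}
uniformly in $s$ in a compact set and $\lambda \in (0,1]$. The strategy is to follow the standard scheme in \cite{Hai14}: establish an $L^p$-estimate of the form $\mathbb{E}|\Pi_s \tau(\varphi_s^\lambda)|^p \lesssim \lambda^{p|\tau|}$ for every even $p$, and then invoke a Kolmogorov-type argument (cf.\ \cite[Thm.~10.7]{Hai14}) to upgrade to an almost sure bound holding uniformly over $(s,\lambda)$. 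The key ingredient is the moment estimate
\begin{equation*}
   \mathbb{E}|\mathbb{W}^m(s,t)|^2 = \int_s^t \mathbb{E}(\hat W(r) - \hat W(s))^{2m} \, dr \lesssim |t-s|^{2mH+1},
\end{equation*}
which, since $\mathbb{W}^m(s,t)$ belongs to the $(m+1)$-st inhomogeneous Wiener chaos, self-improves by Gaussian hypercontractivity to $\mathbb{E}|\mathbb{W}^m(s,t)|^p \lesssim |t-s|^{p(mH+1/2)}$. Integration by parts (using $\mathbb{W}^m(s,s)=0$) then gives
\begin{equation*}
 \Pi_s \Xi \mathcal{I}(\Xi)^m (\varphi_s^\lambda) = -\int \mathbb{W}^m(s,t)\, \partial_t \varphi_s^\lambda(t)\, dt,
\end{equation*}
and the bound $|\partial_t \varphi_s^\lambda(t)| \lesssim \lambda^{-2}\indic{|t-s|\le\lambda}$ combined with the Kolmogorov-upgraded pointwise Hölder bound on $\mathbb{W}^m$ yields the required scaling. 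The smaller cases $\Pi_s \Xi$ (variance $\sim \lambda^{-1}$ of pairing white noise with $\varphi_s^\lambda$) and $\Pi_s \mathcal{I}(\Xi)^m$ (Hölder regularity of $\hat W$) are analogous but simpler. A small loss of $\kappa$ in the Hölder exponents is absorbed in the definition of the homogeneities, which is why $\kappa$ was introduced.
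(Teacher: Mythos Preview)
Your proposal is correct and follows essentially the same route as the paper: Chen's relation \eqref{eq:ChensRelation} for the algebraic identity \eqref{eq:ModelDefinition1} on $\Xi\mathcal{I}(\Xi)^m$, H\"older regularity of $\hat W$ for \eqref{eq:ModelDefinition3} and the function-like symbols, and for the analytic bound on $\Pi_s\Xi\mathcal{I}(\Xi)^m$ the moment estimate on $\mathbb{W}^m$ upgraded via hypercontractivity and Kolmogorov to a pathwise H\"older bound, followed by integration by parts against $(\varphi_s^\lambda)'$. The paper packages the Kolmogorov step by citing \cite[Thm.~3.1]{FH14} directly on $\mathbb{W}^m$ rather than the model-level criterion \cite[Thm.~10.7]{Hai14} you mention first, but your actual execution (``Kolmogorov-upgraded pointwise H\"older bound on $\mathbb{W}^m$'') is exactly the paper's argument.
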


\begin{proof}
The only symbol in $S$ on which \eqref{eq:ModelDefinition1} is not
straightforward is $\Xi \mathcal{I}(\Xi)^m$, where the statement follows by
Chen's relation. The bounds \eqref{eq:ModelDefinition2} and %
\eqref{eq:ModelDefinition3} follow for $\mathbf{1}$ trivially and for $%
\mathcal{I}(\Xi)^m$ by the $H-\kappa^{\prime },\,\kappa^{\prime }\in (0,H)$
H\"older regularity of $\hat{W}$. It is further straightforward to check the
condition \eqref{eq:ModelDefinition3} by using the rule $\Gamma_{ts}\tau%
\tau^{\prime }=\Gamma_{ts}\tau \cdot \Gamma_{ts}\tau^{\prime }$ so that we
are only left with the task to bound $\Pi_s \Xi\mathcal{I}%
(\Xi)^m(\varphi^\lambda_s)$. Following along the lines of proof \cite[%
Theorem 3.1]{FH14} it follows $|\mathbb{W}^m(s,t)|\leq C
|s-t|^{mH+1/2-(m+1)\kappa}$ (where $C>0$ denotes a random constant with $C\in \bigcup_{p<\infty} L^p$), so that 

\begin{align*}
|\Pi_s \mathcal{I}(\Xi)^m\Xi(\varphi^\lambda_s)|&=\left|\int
\left(\varphi^\lambda_s\right)^{\prime }(t) \mathbb{W}^m(s,t) \,\mathrm{d} 
t\right| \leq C \int \varphi^{\prime -1}(t-s)) |s-t|^{mH+1/2-(m+1)\kappa} 
\frac{\mathrm{d}  t}{\lambda^2} \\
&\leq C \lambda^{mH-1/2-(m+1)\kappa}= C\lambda^{|\mathcal{I}(\Xi)^m\Xi|}\,.
\end{align*}
\end{proof}

As we will see below in subsection \ref{subsec:ApproximationTheory} this
model is the toolbox from which we can build pathwise It\^o integrals of the
type $\int_0^t f(r,\hat{W}(r)) \, \mathrm{d}  W(r)$. For an approximation
theory for such expressions we are in need of a comparable setup that
describes approximations, which will be achieved by introducing a model $%
(\Pi^\varepsilon,\Gamma^\varepsilon)$.

\subsubsection*{The approximating model $(\Pi^\protect\varepsilon,\Gamma^%
\protect\varepsilon)$}

The whole definition of the model $(\Pi,\Gamma)$ is based on the object $%
\dot{W}$. It is therefore natural to build an approximating model by
replacing $\dot{W}$ by some modification $\dot{W}^\varepsilon$ that
converges (as a distribution) to $\dot{W}$ as $\varepsilon\rightarrow 0$.

The definition of $\dot{W}^\varepsilon$ will be based on an object $%
\delta^\varepsilon$ which should be thought of as an approximation to the
Delta dirac distribution. Our purpose to build $\delta^\varepsilon$ from
wavelets, which can be as irregular as the Haar functions. We find it
therefore convenient to allow $\delta^\varepsilon$ to take values in the
Besov space $\mathcal{B}^{\beta}_{1,\infty}(\mathbb{R} ),\, \beta >1/2+\kappa
$ which covers functions like $\mathbf{1}_{[0,1]}\in \mathcal{B}%
^{1}_{1,\infty}(\mathbb{R} )$.

\begin{remark}
We shortly recall the definition of the Besov space $\mathcal{B}%
^{\beta}_{1,\infty}(\mathbb{R} )$ (see for example \cite{MeyerWavelets})
although this will here only be explicitely used in the proof of Lemma \ref%
{lem:BesovBounds} in the appendix. Given a compactly supported wavelet basis 
$\phi_y=\phi(\cdot-y),\,y\in \mathbb{Z}$, $\psi^j_y=2^{j/2}\,\psi(2^j(
\cdot-y)),\,j \geq 0,\,y\in 2^{-j}\mathbb{Z}$ we set 
\begin{align*}
\|g\|_{\mathcal{B}^{\beta}_{1,\infty}}:=\sum_{y\in \mathbb{Z}}
|(g,\phi_y)_{L^2}| +\sup_{j\geq 0} 2^{j\beta } \sum_{y\in 2^{-j}\mathbb{Z}}
2^{- j/2} |(g,\psi^j_y)_{L^2}|
\end{align*}
and define $\mathcal{B}^{\beta}_{1,\infty}(\mathbb{R} )$ to be those $L^1$
functions $g$ (or $(C^{-\lceil \beta \rceil+1 }_c(\mathbb{R} ))^{\prime }$
distributions if $\beta\leq 0$) for which this norm is finite.
\end{remark}

\begin{definition}
\label{def:Dirac} In the following we call $\delta^{\varepsilon}:\mathbb{R}
^{2}\rightarrow\mathbb{R} $ a measurable, bounded function with the
following properties

\begin{itemize}
\item $\delta^{\varepsilon}(x,y)=\delta^{\varepsilon}(y,x)$ for all $x,y\in 
\mathbb{R} $.,

\item the map $\mathbb{R} \ni x\mapsto \delta^\varepsilon (x,\cdot)\in \mathcal{B}%
^{\beta}_{1,\infty}(\mathbb{R} ) $ is bounded and measurable for some $%
\beta>-|\Xi|=1/2+\kappa$. 

\item $\int_{\mathbb{R} }\delta^{\varepsilon}(x,\cdot)\,\mathrm{d}  x=1$,

\item $\sup_{\mathbb{R} ^{2}}|\delta^{\varepsilon}|\lesssim\varepsilon^{-1}$,

\item $\mathrm{supp}\, \delta^{\varepsilon}(x,\cdot)\subseteq B(x,c\cdot
\varepsilon)$ for any $x\in\mathbb{R} $ and some $c>0$.
\end{itemize}
\end{definition}

\begin{example}
\label{rem:Mollifier} There are two examples which are of particular
interest for our purposes

\begin{itemize}
\item We say that $\delta^\varepsilon$ ``comes from a mollifier'', by which
we mean that there is symmetric, compactly supported $L^\infty\cap \mathcal{B%
}^{\beta}_{1,\infty}(\mathbb{R} )$-function $\rho$, which integrates to $1$
such that 
\begin{align*}
\delta^\varepsilon(x,y)=\varepsilon^{-1}\cdot \rho(\varepsilon^{-1}(y-x))
\end{align*}

\item A further interesting example is the case where $\delta^\varepsilon$
``comes from a wavelet basis''. Consider only $\varepsilon=2^{-N}$ and
choose compactly supported $L^\infty\cap\mathcal{B}^\beta_{1,\infty}$-valued
father wavelets $(\phi_{k,N})_{k\in \mathbb{Z}}$ (e.g. the Haar father
wavelets $\phi_{k,N}=2^{N/2}\cdot \mathbf{1}_{[k2^{-N},(k+1)2^{-N})}$) and
set 
\begin{align*}
\delta^\varepsilon(x,y)=\sum_{k\in \mathbb{Z}} \phi_{k,N}(x) \phi_{k,N}(y)
\end{align*}
Note that we could also add some generations of mother wavelets in this
choice.
\end{itemize}
\end{example}

Note that (locally) $\dot{W}$ is contained in $\mathcal{B}%
^{|\Xi|}_{\infty,\infty}(\mathbb{R} )$ (recall: $|\Xi|=-1/2-\kappa$), so
that due to $\mathcal{B}^{|\Xi|}_{\infty,\infty}(\mathbb{R} ) \subseteq (%
\mathcal{B}^{\beta}_{1,\infty}(\mathbb{R} ))^{\prime }$ we can set 
\begin{align*}
\dot{W}^\varepsilon(t)&:=\langle \dot{W},\delta^\varepsilon(t,\cdot)\rangle 
\mathbf{1}_{\mathbb{R} _+}(t)
\end{align*}
which is a Gaussian process and pathwise measurable and locally bounded. For
(maybe stochastic) integrands $f$ we introduce the notations 
\begin{align*}
\int_0^t f(r)\, \mathrm{d}  W^\varepsilon(r)&:=\int_0^t f(r) \dot{W}%
^\varepsilon(r)\, \mathrm{d}  r
\end{align*}
and if $f$ takes values in some (non-homogeneous) Wiener chaos induced by $%
\dot{W}$ we also introduce 
\begin{align}  \label{eq:NotationApproximateStochasticIntegral}
\int_0^t f(r) \diamond \mathrm{d}  W^\varepsilon(r)&:=\int_0^t f(r) \diamond 
\dot{W}^\varepsilon(r)\, \mathrm{d}  r \,,
\end{align}
where $\diamond$ denotes the Wick product. Note that these two objects do in
general not coincide. The motive for using the same symbol ``$\diamond$'' as
in \eqref{eq:NotationStochasticIntegral} is that %
\eqref{eq:NotationApproximateStochasticIntegral} can be seen as the Skohorod
integral with respect to the Gaussian stochastic measure induced by the
Gaussian process $\dot{W}^\varepsilon$ (for the notion of Wick products and
Skohorod integrals and their links see e.g. \cite{Jan97}).

We now define an approximate fractional Brownian motion by setting 
\begin{align*}
\hat{W}^\varepsilon(t)&=K\star\dot{W}^\varepsilon=\sqrt{2H} \int_0^t
|t-r|^{H-1/2} \, \mathrm{d}  W^\varepsilon( r)
\end{align*}
which has the expected regularity as it is shown in the following lemma.

\begin{lemma}
\label{lem:ConvergenceW} On every compact time intervall $[0,T]$ we have the
estimates 
\begin{align*}
|\hat{W}^\varepsilon(t)-\hat{W}^\varepsilon(s)|\lesssim C_\varepsilon
|t-s|^{H-\kappa^{\prime }},\, |\hat{W}^\varepsilon(t)-\hat{W}%
^\varepsilon(s)-(\hat{W}(t)-\hat{W}(s))|\lesssim C |t-s|^{H-\kappa^{\prime
}}\varepsilon^{\delta \kappa^{\prime }}\,.
\end{align*}
uniformly in $\varepsilon\in (0,1]$ for any $\delta\in (0,1)$ and $%
\kappa^{\prime }\in (0,H)$ and where $C_\varepsilon,C>0$ are random constants that are (uniformly) bounded in $L^p$ for $p\in [1,\infty)$.
\end{lemma}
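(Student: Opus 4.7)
The plan is to prove both estimates via Kolmogorov's continuity criterion, reducing everything to sharp $L^{2}$-bounds on the Gaussian increments. I would first rewrite $\hat{W}^{\varepsilon}(t) = \int_{\mathbb{R}} K_{t}^{\varepsilon}(y)\, dW(y)$, where $K_{t}^{\varepsilon}(y) := \int K_{t}(r)\, \delta^{\varepsilon}(r,y)\, dr$ with $K_{t}(r) := \sqrt{2H}(t-r)^{H-1/2}\mathbf{1}_{0\leq r\leq t}$; the identity follows from Fubini together with the symmetry of $\delta^{\varepsilon}$. By the It\^o isometry, $\mathbb{E}|\hat{W}^{\varepsilon}(t) - \hat{W}^{\varepsilon}(s)|^{2} = \|K_{t}^{\varepsilon} - K_{s}^{\varepsilon}\|_{L^{2}}^{2}$ and analogously for $\hat{W}^{\varepsilon} - \hat{W}$. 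Since both processes live in the first Wiener chaos, Gaussian hypercontractivity automatically upgrades any $L^{2}$-bound on increments to a bound in every $L^{p}$, so it suffices to establish the right $L^{2}$-estimates.

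For the first estimate, the key point is that the integral operator $T_{\varepsilon}f(y) := \int f(r)\, \delta^{\varepsilon}(r,y)\, dr$ is uniformly bounded on $L^{2}(\mathbb{R})$: by Schur's test, using $\sup_{y} \int |\delta^{\varepsilon}(r,y)|\, dr \lesssim \varepsilon^{-1}\cdot \varepsilon = 1$ (and the symmetric estimate) we obtain $\|T_{\varepsilon}\|_{L^{2}\to L^{2}} \lesssim 1$. Combined with the classical fBM estimate $\|K_{t} - K_{s}\|_{L^{2}} \lesssim |t-s|^{H}$, this yields $\|K_{t}^{\varepsilon} - K_{s}^{\varepsilon}\|_{L^{2}} \lesssim |t-s|^{H}$ uniformly in $\varepsilon$. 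Kolmogorov's criterion, applied with $p$ large enough to pay the usual Hölder loss $\kappa'$, then delivers the first estimate with Hölder constant $C_{\varepsilon}$ uniformly bounded in every $L^{p}$.

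For the second estimate one needs an additional $L^{2}$-bound capturing the $\varepsilon$-smallness. Using $\int \delta^{\varepsilon}(r,y)\, dr = 1$, we rewrite
$$K_{t}^{\varepsilon}(y) - K_{t}(y) = \int [K_{t}(r) - K_{t}(y)]\, \delta^{\varepsilon}(r,y)\, dr,$$
and split the computation of $\|K_{t}^{\varepsilon} - K_{t}\|_{L^{2}}^{2}$ into the region $\{|y - t| \lesssim \varepsilon\}$, where the integrable singularity $(t-y)^{2H-1}$ contributes $\lesssim \varepsilon^{2H}$, and its complement, where a mean-value estimate $|K_{t}(r) - K_{t}(y)| \lesssim \varepsilon\, (t-y)^{H-3/2}$ gives the same rate after squaring and integrating. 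Thus $\|K_{t}^{\varepsilon} - K_{t}\|_{L^{2}} \lesssim \varepsilon^{H}$, and by the triangle inequality the joint increment satisfies $\|(K_{t}^{\varepsilon} - K_{t}) - (K_{s}^{\varepsilon} - K_{s})\|_{L^{2}} \lesssim \varepsilon^{H}$ as well.

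Interpolating between the two $L^{2}$-bounds yields, for every $\theta \in [0,1]$,
$$\|(K_{t}^{\varepsilon} - K_{t}) - (K_{s}^{\varepsilon} - K_{s})\|_{L^{2}} \lesssim \varepsilon^{H\theta}\, |t-s|^{H(1-\theta)}.$$
Choosing $\theta = \delta\kappa'/H \in (0,1)$ (admissible since $\delta < 1$ and $\kappa' < H$) and applying Gaussian hypercontractivity and Kolmogorov's criterion at sufficiently large $p$ — large enough to drop the exponent from $H - \delta\kappa'$ down to the target $H - \kappa'$ — produces the second estimate with $L^{p}$-bounded constant $C$. The main technical obstacle is the sharp bound $\|K_{t}^{\varepsilon} - K_{t}\|_{L^{2}} \lesssim \varepsilon^{H}$: the singularity of $K_{t}$ at $r = t$ interacts delicately with the scale $\varepsilon$, and the splitting described above is essential to matching the natural $\varepsilon^{H}$ rate.
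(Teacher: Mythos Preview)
Your proof is correct and follows the same overall architecture as the paper: represent $\hat W^\varepsilon$ as a Wiener integral, bound $L^2$-norms of the kernel increments, upgrade to all $L^p$ via Gaussian hypercontractivity, and finish with Kolmogorov plus interpolation. The paper carries this out in Appendix~A.

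The differences are in the elementary estimates. For the first bound, the paper applies Jensen's inequality with respect to the (normalised) measure $|\delta^\varepsilon(\cdot,u)|$ to reduce directly to the classical fBM kernel computation; your Schur-test argument for the $L^2$-boundedness of $T_\varepsilon$ achieves the same thing and is essentially equivalent. For the pointwise bound $\|K_t^\varepsilon - K_t\|_{L^2}\lesssim \varepsilon^H$, the paper instead invokes its Lemma~\ref{lem:FractionalDifferentiation} (an interpolation-type estimate for $|a^x-b^x|$) which yields the slightly weaker $\varepsilon^{H-\kappa'}$ for arbitrary $\kappa'>0$; your mean-value splitting near and away from the singularity is more direct and gives the sharp $\varepsilon^H$ rate, though the extra precision is immaterial after the final Kolmogorov step. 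One small point worth making explicit in your write-up is the boundary contribution near $y=0$ coming from the indicator $\mathbf 1_{r\ge 0}$ in $\dot W^\varepsilon$; the paper handles this separately (another $\varepsilon^{2H}$ term) and you should too.
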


\begin{proof}
The proof is elementary but a bit bulky and therefore postponed to the
appendix.
\end{proof}

Finally we can give the definition of the approximative model $%
(\Pi^\varepsilon,\Gamma^\varepsilon)$, the ``canonical'' model built from
the approximate (and hence regular) noise $W^\varepsilon$. 
\begin{equation*}
\begin{array}{ll}
\Pi^\varepsilon_s \mathbf{1} =1 & \Gamma_{st}^\varepsilon \mathbf{1}=1 \\ 
\Pi^\varepsilon_s \Xi =\dot{W}^\varepsilon & \Gamma_{st}^\varepsilon \Xi=\Xi
\\ 
\Pi^\varepsilon_s \mathcal{I}(\Xi)^m = \left(\hat{W}^\varepsilon(\cdot)-\hat{%
W}^\varepsilon(s) \right)^m & \Gamma_{st}^\varepsilon \mathcal{I}(\Xi)= 
\mathcal{I}(\Xi) +\left(\hat{W}^\varepsilon(t)-\hat{W}^\varepsilon(s)
\right) \mathbf{1} \\ 
\Pi^\varepsilon_s \mathcal{I}(\Xi)^m\Xi = (\hat{W}^\varepsilon(\cdot)-\hat{W}%
^\varepsilon(s))^m \, \dot{W}^\varepsilon(\cdot) & \Gamma_{st}^\varepsilon
\tau \tau^{\prime} = \Gamma^{\varepsilon}_{st}\tau \cdot
\Gamma^\varepsilon_{st}\tau^{\prime }\,,\,\, \tau,\tau^{\prime },\tau\cdot
\tau^{\prime }\in S%
\end{array}
\end{equation*}

\begin{lemma}
\label{lem:LimitingModel} The pair $(\Pi^\varepsilon,\Gamma^\varepsilon)$ as
defined above is a model on $(\mathcal{T},A)$.
\end{lemma}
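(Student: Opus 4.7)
The plan is to verify each of the three model axioms \eqref{eq:ModelDefinition1}--\eqref{eq:ModelDefinition3} on the basis $S$, in close parallel with the proof of Lemma \ref{lem:LimitingModel} for $(\Pi,\Gamma)$. Since $\dot W^\varepsilon$ is, for every fixed $\varepsilon>0$, an honest locally bounded measurable function, the canonical lift requires no renormalization and the argument is in fact simpler than for the limiting model; one only has to track that all bounds can be obtained with (possibly $\varepsilon$-dependent) constants.

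Conditions \eqref{eq:ModelDefinition1} and \eqref{eq:ModelDefinition3} are essentially algebraic. The identity $\Pi_t^\varepsilon = \Pi_s^\varepsilon \Gamma_{st}^\varepsilon$ is immediate on $\mathbf{1}$ and $\Xi$; on $\mathcal{I}(\Xi)^m$ and $\Xi\mathcal{I}(\Xi)^m$ it follows by writing $\hat W^\varepsilon(\cdot)-\hat W^\varepsilon(t) = [\hat W^\varepsilon(\cdot)-\hat W^\varepsilon(s)] - [\hat W^\varepsilon(t)-\hat W^\varepsilon(s)]$ and expanding binomially, together with multiplicativity of $\Gamma^\varepsilon$. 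The H\"older bounds \eqref{eq:ModelDefinition3} for $\Gamma^\varepsilon$ reduce, again via multiplicativity, to $|\hat W^\varepsilon(t)-\hat W^\varepsilon(s)|^l \lesssim |t-s|^{l(H-\kappa)}$ for $1 \le l \le M$, which is exactly the first estimate in Lemma \ref{lem:ConvergenceW}.

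The analytic bound \eqref{eq:ModelDefinition2} is checked symbol by symbol. For $\mathbf{1}$ the estimate is trivial, and for $\mathcal{I}(\Xi)^m$ one inserts Lemma \ref{lem:ConvergenceW} into $\int (\hat W^\varepsilon(r)-\hat W^\varepsilon(s))^m\,\varphi_s^\lambda(r)\,\mathrm{d}r$ to obtain $\lesssim\lambda^{m(H-\kappa)}$. For $\Xi$, local boundedness of $\dot W^\varepsilon$ yields $|\Pi_s^\varepsilon\Xi(\varphi_s^\lambda)| \le \|\dot W^\varepsilon\|_{L^\infty_{\mathrm{loc}}}\lesssim 1 \le \lambda^{-1/2-\kappa}$ on $\lambda\in(0,1]$. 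The delicate case is $\Xi\mathcal{I}(\Xi)^m$: mimicking the proof for the limit model, I introduce the approximate iterated integrals
\begin{equation*}
\mathbb{W}^{m,\varepsilon}(s,t) := \int_s^t \bigl(\hat W^\varepsilon(r)-\hat W^\varepsilon(s)\bigr)^m\,\dot W^\varepsilon(r)\,\mathrm{d}r,
\end{equation*}
establish an almost-sure H\"older estimate of the form $|\mathbb{W}^{m,\varepsilon}(s,t)|\le C_\varepsilon|t-s|^{mH+1/2-(m+1)\kappa}$, and then, since $\Pi_s^\varepsilon \Xi\mathcal{I}(\Xi)^m(\varphi_s^\lambda) = -\int (\varphi_s^\lambda)'(t)\,\mathbb{W}^{m,\varepsilon}(s,t)\,\mathrm{d}t$, use $\|(\varphi_s^\lambda)'\|_{L^1}\lesssim \lambda^{-1}$ to conclude $|\Pi_s^\varepsilon \Xi\mathcal{I}(\Xi)^m(\varphi_s^\lambda)| \lesssim \lambda^{mH-1/2-(m+1)\kappa} = \lambda^{|\Xi\mathcal{I}(\Xi)^m|}$.

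The main obstacle is thus the H\"older estimate on $\mathbb{W}^{m,\varepsilon}$. The natural route is Kolmogorov's continuity criterion after an $L^p$ bound; since $\mathbb{W}^{m,\varepsilon}(s,t)$ sits in a finite inhomogeneous Wiener chaos of $W$ of order at most $m+1$, equivalence of moments reduces the problem to controlling $\mathbb{E}[|\mathbb{W}^{m,\varepsilon}(s,t)|^2]$. Unlike in the limit case, where adaptedness lets one treat $\mathbb{W}^m(s,t)$ as an It\^o integral and apply the isometry, here $\dot W^\varepsilon(r)$ is coupled to $\hat W^\varepsilon(r)-\hat W^\varepsilon(s)$ through $\delta^\varepsilon$, so one must Wick-decompose the product, bound each resulting pairing against the Volterra kernel $K$, and exploit the third and fourth properties of $\delta^\varepsilon$ in Definition \ref{def:Dirac}. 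Because this lemma only asserts the model property at each fixed $\varepsilon$, the constants are allowed to blow up with $\varepsilon$; the much more delicate uniform-in-$\varepsilon$ control (together with the renormalization needed for the singular products $\Xi\mathcal{I}(\Xi)^m$ when $H<1/2$) is deferred to Section \ref{subsec:ApproximationTheory}.
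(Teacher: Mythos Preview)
Your proposal is correct, but you make the $\Xi\mathcal{I}(\Xi)^m$ case much harder than it needs to be. The paper's proof dispenses with the iterated integrals $\mathbb{W}^{m,\varepsilon}$, the Kolmogorov argument, and the Wick decomposition entirely: since $\dot W^\varepsilon$ is a locally bounded function (the same observation you already used for $\Xi$), one can bound the integrand pointwise and obtain
\[
\bigl|\Pi_s^\varepsilon \Xi\mathcal{I}(\Xi)^m(\varphi_s^\lambda)\bigr|
\;\le\; \int \bigl|\varphi_s^\lambda(r)\bigr|\,\bigl|\hat W^\varepsilon(r)-\hat W^\varepsilon(s)\bigr|^m\,\bigl|\dot W^\varepsilon(r)\bigr|\,\mathrm{d}r
\;\lesssim\; C_\varepsilon\,\lambda^{m(H-\kappa)},
\]
which is even \emph{better} than the required $\lambda^{m(H-\kappa)-1/2-\kappa}$ since $\lambda\le 1$. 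Your route via $L^2$-moments and hypercontractivity would also work (and is essentially the machinery needed later for the uniform-in-$\varepsilon$ convergence of the renormalized model in Theorem~\ref{prop:ModelConvergence}), but for the present lemma, where constants are allowed to depend on $\varepsilon$, it is overkill. The moral is simply that a canonical model built from smooth noise needs no stochastic analysis at all.
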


\begin{proof}
The identity $\Pi_t=\Gamma_{ts}\Pi_s$ is straightforward to check. The
bounds \eqref{eq:ModelDefinition2} and \eqref{eq:ModelDefinition3} on $%
\Gamma_{st}$ and on $\Pi_s \mathcal{I}(\Xi)^m$ follow from the regularity of 
$\hat{W}^\varepsilon$ as proved in Lemma \ref{lem:ConvergenceW}. The blow-up
of $\Pi_s \Xi\mathcal{I}(\Xi)^m(\varphi^\lambda_s)$ however is even better
than we need, since by the choice of $\delta^\varepsilon$ we have $|\dot{W}%
^\varepsilon|\leq C_\varepsilon $, for some random constant $C_\varepsilon$, on compact sets.
\end{proof}

The definition of this model is justified by the fact that application of
the reconstruction operator (as in Lemma \ref{lem:ReconstructionIdentity})
yields integrals 
\begin{align}  \label{eq:ApproximateIntegralNotRenormalized}
\int_0^t f(r,\hat{W}^\varepsilon(r))\, \mathrm{d}  W^\varepsilon(r)\,.
\end{align}
As we pointed out already in section \ref{sec:Introduction}, there is no
hope that integrals of this type will converge as $\varepsilon\rightarrow 0$
if $H<1/2$. This can be cured by working with a renormalized model $(\hat{\Pi%
}^\varepsilon,\Gamma^\varepsilon)$ instead.

\subsubsection*{The renormalized model $\hat{\Pi}^\protect\varepsilon$}

From the perspective of regularity structures the fundamental reason why
integrals like \eqref{eq:ApproximateIntegralNotRenormalized} fail to
converge to 
\begin{align*}
\int_0^t f(r,\hat{W}(r)) \,\mathrm{d}  W( r)
\end{align*}
lies in the fact that the corresponding models will not satisfy $%
(\Pi^\varepsilon,\Gamma^\varepsilon) \rightarrow (\Pi,\Gamma)$ in a suitable
norm. To see what is going on we will first rewrite $\Pi_s\Xi\mathcal{I}(\Xi)^k$

\begin{lemma}
\label{lem:ReshapingLimit} For $\varphi\in C^\infty_c(\mathbb{R} ),\,s\in 
\mathbb{R} ,\,m\in \{1,\ldots,M\}$ we have 
\begin{eqnarray*}
\Pi_{s}\Xi\mathcal{I}(\Xi)^m(\varphi) & = & \int_{0}^{\infty}\varphi(t)\,(%
\hat{W}(t)-\hat{W}(s))^{m}\diamond \mathrm{d}  W(t) \\
& & -m\int_{0}^{\infty}\,\varphi(t)\,K(s-t)\,(\hat{W}(t)-\hat{W}(s))^{m-1}\,%
\mathrm{d}  t
\end{eqnarray*}

where $\diamond$ denotes the Skorokhod integral and $K(t)=\sqrt{2H}\mathbf{1}%
_{t>0} t^{H-1/2}$ denotes the Volterra kernel. Note that in the second term
the domain of integration is actually $(0,s)$.
\end{lemma}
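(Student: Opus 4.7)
The plan is to first prove a pointwise (in $s,t$) identity for $\mathbb{W}^m(s,t)$ expressing it as a Skorokhod integral plus a trace correction, and then pass this identity to the test function by distributional differentiation in $t$. Concretely, I would establish, for all $s,t\in\mathbb{R}$,
\begin{equation}\label{eq:planident}
\mathbb{W}^m(s,t) \;=\; \int_s^t (\hat W(r)-\hat W(s))^m \diamond dW(r) \;-\; m\int_s^t K(s-r)(\hat W(r)-\hat W(s))^{m-1}\,dr,
\end{equation}
with the convention $\int_s^t=-\int_t^s$. For $t>s$ the identity is essentially definitional: the Skorokhod integral on the right coincides with the It\^o integral because its integrand is $\mathcal{F}_r$-adapted, and the correction vanishes since $K(s-r)=0$ for $r>s$. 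The substance is the case $t<s$.

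For $t<s$ the plan is to start from the Chen extension \eqref{eq:reversedorder}, expand the integrand of the Skorokhod integral binomially around $\hat W(t)$, $(\hat W(r)-\hat W(s))^m=\sum_l \binom{m}{l}(\hat W(t)-\hat W(s))^l(\hat W(r)-\hat W(t))^{m-l}$, and pull each factor $(\hat W(t)-\hat W(s))^l$ out of the Skorokhod integral via the Malliavin duality rule $\delta(F\,G)=F\,\delta(G)-\int (D_rF)\,G(r)\,dr$. With $G(r)=\mathbf{1}_{[t,s]}(r)(\hat W(r)-\hat W(t))^{m-l}$ (adapted, so $\delta(G)=\mathbb{W}^{m-l}(t,s)$) and $F=(\hat W(t)-\hat W(s))^l$, the Malliavin derivative $D_r\hat W(u)=K(u-r)\mathbf{1}_{r\le u}$ gives, for $r\in(t,s)$, $D_rF=-l(\hat W(t)-\hat W(s))^{l-1}K(s-r)$. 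Summing over $l$ weighted by $\binom{m}{l}$, the ``$F\delta(G)$'' pieces assemble (by Chen's relation) into $-\mathbb{W}^m(s,t)$; using $\binom{m}{l}l=m\binom{m-1}{l-1}$ and a further binomial collapse, the trace pieces assemble into $m\int_t^s K(s-r)(\hat W(r)-\hat W(s))^{m-1}\,dr$. Rearranging yields \eqref{eq:planident} for $t<s$.

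With \eqref{eq:planident} in hand, the lemma follows by applying $-\int\varphi'(t)\,\cdot\,dt$ to both sides. The trace term is classical: $t\mapsto\int_s^t K(s-r)(\hat W(r)-\hat W(s))^{m-1}\,dr$ is absolutely continuous in $t$ with derivative $K(s-t)(\hat W(t)-\hat W(s))^{m-1}$, so ordinary integration by parts produces $\int\varphi(t)K(s-t)(\hat W(t)-\hat W(s))^{m-1}\,dt$. For the Skorokhod term, a stochastic Fubini argument distributes the deterministic $-\int\varphi'(t)\,dt$ through the Skorokhod integration in $r$ and converts $-\int\varphi'(t)\int_s^t F(r)\diamond dW(r)\,dt$ into $\int\varphi(t)(\hat W(t)-\hat W(s))^m\diamond dW(t)$.

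The main obstacle is the Skorokhod manipulation on the non-adapted segment $t<s$: the duality formula and the stochastic Fubini against $\varphi'$ both involve an anticipating integrand. These are standard once one observes that $F=(\hat W(t)-\hat W(s))^l$ lies in $\mathbb{D}^{1,p}$ for every $p$ (being a polynomial in Gaussians), but care is needed in tracking the domains of integration, in particular the fact that $K(s-\cdot)$ is supported on $(-\infty,s)$, which is what ensures that the correction in the lemma is genuinely a $(0,s)$-integral.
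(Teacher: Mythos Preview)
Your proposal is correct and follows essentially the same route as the paper: both prove the identity for $\mathbb{W}^m(s,t)$ (trivial when $t>s$, and for $t<s$ via Chen's relation \eqref{eq:reversedorder} followed by pulling the non-adapted factor $(\hat W(t)-\hat W(s))^l$ out of the Skorokhod integral), then pass to the distributional derivative in $t$. The only cosmetic difference is that the paper phrases the key step through the Wick-product identity $U_1^l\cdot(V\diamond U_2^{k-l})=V\diamond(U_1^lU_2^{k-l})+l\,\mathbb{E}[VU_1]\,U_1^{l-1}U_2^{k-l}$ from Janson, whereas you use the equivalent Malliavin duality formula $\delta(FG)=F\delta(G)-\langle DF,G\rangle$; the computations coincide line by line.
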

\begin{remark} Our notation reflects a close relation between the Skorokhod
  integral and the Wick product. Indeed, when $g = \sum X_s
  \mathbf{1}_{[s,t]}$, with summation over a finite partition of $[0,T]$, and
  each $X_s$ a (non-adapted) random variable in a finite Wiener-It\^o chaos,
  it follows from \cite[Thm 7.40]{Jan97} that $\int g \delta W = \sum X_s \diamond W_{s,t}$. Passage to $L^2$-limits is then standard. See also \cite{Nua13} and the references therein. 
\end{remark} 
\begin{proof}
We prove this by reexpressing 
$\mathbb{W}^k(s,t)$. For $s<t$ we have already
$$
\mathbb{W}^{k}(s,t)=\int_s^t \mathrm{d}  W(r)\diamond(\hat{W}(r)-\hat{W}%
(s))^{k}
$$ so that it remains to see what happens for $t<s$. With relation %
\eqref{eq:reversedorder} we have in this case 
\begin{align*}
\mathbb{W}^{k}(s,t) = -\sum_{l=0}^{k}\binom{k}{l}(\hat{W}(t)-\hat{W}%
(s))^{l}\cdot\int \mathrm{d}  r \,\dot{W}(r)\diamond(\hat{W}(r)-\hat{W}%
(t))^{k-l}\mathbf{1}_{t<r<s}\,,
\end{align*}
where we use for the sake of concision formal notation, which is easy to
translate to a rigorous formulation. Using the fact that for Gaussians $%
U_1,V,U_2$ we have 
\begin{equation}
U_1^l \cdot (V\diamond U_2^{k-l})=V\diamond (U_1^l U_2^{k-l})+l\mathbb{E}[V
U_1]\,U_1^{l-1} U_2^{k-l}         \label{U1VU2}
\end{equation}
(a consequence of \cite[Theorems 3.15, 7.33]{Jan97}), we obtain 
\begin{align*}
\mathbb{W}^{k}(s,t) & = -\int \mathrm{d}  r \,\dot{W}(r)\diamond(\hat{W}(r)-%
\hat{W}(s))^{k}\mathbf{1}_{t<r<s} \\
& -\sum_{l=0}^{k}\binom{k}{l}l\cdot\int\mathrm{d}  r\,\mathbb{E} [\dot{W}%
(r)\cdot(\hat{W}(t)-\hat{W}(s))]\cdot(\hat{W}(t)-\hat{W}(s))^{l-1}\cdot(\hat{%
W}(r)-\hat{W}(t))^{k-l}\,.
\end{align*}
Using $\binom{k}{l}=k\binom{k-1}{l-1}$ and $\mathbb{E} [\dot{W}(r)\cdot(\hat{%
W}(t)-\hat{W}(s))]= - K(s-r) \mathbf{1}_{r>0}$ 
for $t<r<s$ we can
reformulate this and obtain 
\begin{align*}
\mathbb{W}^{k}(s,t) & = -\int\mathrm{d}  W(r)\diamond(\hat{W}(r)-\hat{W}%
(s))^{k}\mathbf{1}_{t<r<s} +k\int\mathrm{d}  r K(s-r) (\hat{W}(r)-%
\hat{W}(s))^{k-1}\mathbf{1}_{r>0}\,.  \notag
\end{align*}
(An alternative derivation of the above Skorokohod form can be given in terms
of \cite[Thm 3.2]{NP88}.)  Since $\Pi_s \Xi \mathcal{I}(\Xi)^m (\varphi)=\int \varphi(t) \,\mathrm{d}_t  
\mathbb{W}^{m}(s,t)$ the claim follows.
\end{proof}

Let us also reexpress the approximating model in suitable form.

\begin{lemma}
\label{lem:ReshapingApproximation} For $\varphi\in C^\infty_c(\mathbb{R}
),\,s\in\mathbb{R} ,\,m\in\{1,\ldots,M\}$ we have 
\begin{align*}
\Pi_{s}^{\varepsilon}\Xi\mathcal{I}(\Xi)^m(\varphi) & =
\int_0^\infty\varphi(t)\,(\hat{W}^{\varepsilon}(t)-\hat{W}%
^{\varepsilon}(s))^{m}\diamond \mathrm{d}  W^\varepsilon(t) \\
& -m\int_0^\infty\,\varphi(t)\,\mathscr{K}^{\varepsilon}(s,t)(\hat{W}%
^{\varepsilon}(t)-\hat{W}^{\varepsilon}(s))^{m-1} \mathrm{d}  t \\
& +m\int_0^\infty \varphi(t)\,\mathscr{K}^{\varepsilon}(t,t)(\hat{W}%
^{\varepsilon}(t)-\hat{W}^{\varepsilon}(s))^{m-1}\,\mathrm{d}  t
\end{align*}
where $\diamond$ is defined as in %
\eqref{eq:NotationApproximateStochasticIntegral} and where 
\begin{align}  \label{eq:MollifiedVolterraKernel}
\mathscr{K}^\varepsilon(u,v):=\mathbb{E}[\hat{W}^\varepsilon(u)\dot{W}%
^\varepsilon(v)]=\mathbf{1}_{u,v\geq 0}\int_0^\infty \int_0^\infty
\delta^\varepsilon(v,x_1) \delta^\varepsilon(x_1,x_2) K(u-x_2) \,\mathrm{d} 
x_1 \mathrm{d}  x_2\,.
\end{align}
\end{lemma}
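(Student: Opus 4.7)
The plan is to mimic the reshaping done for $\Pi_s$ in Lemma~\ref{lem:ReshapingLimit}, but now at the smooth (Gaussian) level, since $\dot W^{\varepsilon}$ is a genuine locally bounded process and no singular integration against $dW$ appears. By the very definition of the model, testing against $\varphi\in C^\infty_c(\mathbb{R})$ gives
\begin{equation*}
\Pi^{\varepsilon}_{s}\Xi\mathcal{I}(\Xi)^m(\varphi)
= \int_{0}^{\infty}\varphi(t)\,(\hat W^{\varepsilon}(t)-\hat W^{\varepsilon}(s))^{m}\,\dot W^{\varepsilon}(t)\,\mathrm{d}t,
\end{equation*}
so the task is to convert the \emph{ordinary} product
$(\hat W^{\varepsilon}(t)-\hat W^{\varepsilon}(s))^m\,\dot W^{\varepsilon}(t)$
into a Wick product (which, after integration in $t$, becomes $\diamond \,\mathrm{d}W^{\varepsilon}$ in the sense of \eqref{eq:NotationApproximateStochasticIntegral}) plus explicit covariance correctors.

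The key algebraic input is the Gaussian identity \eqref{U1VU2} specialised to $U_1 = U_2 = \hat W^{\varepsilon}(t)-\hat W^{\varepsilon}(s)$, $V = \dot W^{\varepsilon}(t)$ and $l=k=m$, which yields
\begin{equation*}
(\hat W^{\varepsilon}(t)-\hat W^{\varepsilon}(s))^{m}\,\dot W^{\varepsilon}(t)
= \dot W^{\varepsilon}(t)\diamond (\hat W^{\varepsilon}(t)-\hat W^{\varepsilon}(s))^{m}
 + m\,\mathbb{E}\bigl[\dot W^{\varepsilon}(t)(\hat W^{\varepsilon}(t)-\hat W^{\varepsilon}(s))\bigr]\,(\hat W^{\varepsilon}(t)-\hat W^{\varepsilon}(s))^{m-1}.
\end{equation*}
Since $W^{\varepsilon}$ is a smooth Gaussian process, all three factors on the left lie in a finite (inhomogeneous) Wiener chaos, so \cite[Thms 3.15, 7.33]{Jan97} apply without further work. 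Identifying the covariance via \eqref{eq:MollifiedVolterraKernel} gives
$\mathbb{E}[\dot W^{\varepsilon}(t)\,\hat W^{\varepsilon}(u)]=\mathscr{K}^{\varepsilon}(u,t)$, so that the Gaussian correction reads
$\mathscr{K}^{\varepsilon}(t,t)-\mathscr{K}^{\varepsilon}(s,t)$.

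Integrating the pointwise identity against $\varphi(t)\,\mathrm{d}t$ and splitting the correction into its two pieces yields exactly the three terms in the claim (with the convention \eqref{eq:NotationApproximateStochasticIntegral} to interpret the Wick piece as a Skorokhod integral against $W^{\varepsilon}$). The only subtle point to articulate carefully is the passage from the \emph{pointwise-in-$t$} Wick identity to the \emph{integrated} Skorokhod form: this uses that $t\mapsto \varphi(t)(\hat W^{\varepsilon}(t)-\hat W^{\varepsilon}(s))^m$ is a compactly supported, $L^{2}$-bounded, finite-chaos-valued integrand, for which the Skorokhod integral against $\dot W^{\varepsilon}$ coincides with the $t$-wise Wick product integrated in Lebesgue sense—precisely the remark made after Lemma~\ref{lem:ReshapingLimit} applied to the smooth noise $\dot W^{\varepsilon}$ instead of $\dot W$. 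No further estimates or limiting arguments are needed, since the whole computation takes place at fixed $\varepsilon>0$ where every object is a bona fide function.
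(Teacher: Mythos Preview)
Your proof is correct and follows essentially the same route as the paper: start from the definition of $\Pi^{\varepsilon}_s\Xi\mathcal{I}(\Xi)^m(\varphi)$ as an ordinary integral, apply the Gaussian Wick identity $VU^m = V\diamond U^m + m\,\mathbb{E}[VU]\,U^{m-1}$ (the paper obtains it from \eqref{U1VU2} by setting $U_2=1$, you by setting $l=k=m$ so that $U_2^{k-l}=1$---same thing), and then identify the covariance $\mathbb{E}[\dot W^{\varepsilon}(t)(\hat W^{\varepsilon}(t)-\hat W^{\varepsilon}(s))]=\mathscr{K}^{\varepsilon}(t,t)-\mathscr{K}^{\varepsilon}(s,t)$. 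Your additional remarks on passing from the pointwise Wick identity to the integrated Skorokhod form are a welcome clarification but not a different argument.
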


\begin{proof}
Using that for Gaussian $V,\,U$ we have $V U^m=V\diamond U^m+m \mathbb{E}%
[VU] U^{m-1}$ (
this is (\ref{U1VU2}) with $U_2=1$) we can rewrite 
\begin{align*}
\Pi_{s}^{\varepsilon}\Xi\mathcal{I}(\Xi)^m(\varphi) & =
\int_0^\infty\varphi(t)\,(\hat{W}^{\varepsilon}(t)-\hat{W}%
^{\varepsilon}(s))^{m}\diamond \mathrm{d}  W^\varepsilon(t) \\
& +m\int_0^\infty \mathrm{d}  t\,\varphi(t)\,\mathbb{E} [\dot{W%
}^{\varepsilon}(t)\,(\hat{W}^{\varepsilon}(t)-\hat{W}^{\varepsilon}(s))](%
\hat{W}^{\varepsilon}(t)-\hat{W}^{\varepsilon}(s))^{m-1}\cdot
\end{align*}
Inserting $\mathbb{E} [\dot{W}^{\varepsilon}(t)\,(\hat{W}^{\varepsilon}(t)-%
\hat{W}^{\varepsilon}(s))] =\mathscr{K}^{\varepsilon}(t,t)-\mathscr{K}%
^{\varepsilon}(s,t) $ 
shows
the identity.
\end{proof}

Comparing the expressions in Lemma \ref{lem:ReshapingApproximation} and \ref%
{lem:ReshapingLimit} we see that we morally have to subtract 
\begin{align*}
m\int\,\varphi(t)\,\mathscr{K}^{\varepsilon}(t,t)(\hat{W}^{\varepsilon}(t)-%
\hat{W}^{\varepsilon}(s))^{m-1}\,\mathrm{d}  t
\end{align*}
from the model, which will give us a new model $\hat{\Pi}^\varepsilon$. Of
course we have to be careful that this step preserves ``Chen's relation'' $%
\hat{\Pi}_s^\varepsilon\Gamma_{st}=\hat{\Pi}_t^\varepsilon$, see Theorem \ref{prop:ModelConvergence} below.

If we interpret $\mathscr{K}^\varepsilon$ as an approximation to the
Volterra-kernel we see that the expression $$\mathscr{C}^\varepsilon(t) := \mathscr{K}^\varepsilon(t,t),\,t%
\geq 0$$

will correspond to something like ``$0^{H-1/2}=\infty$'' in the
limit $\varepsilon\rightarrow 0$. We have indeed the following upper bound.

\begin{lemma}
\label{lem:ApproximateVolterraEstimate} For all $s,t\in \mathbb{R} $ we have 
\begin{align*}
|\mathscr{K}^\varepsilon(s,t)|\lesssim \varepsilon^{H-1/2}\,.
\end{align*}

\end{lemma}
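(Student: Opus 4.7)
The key observation is that $\delta^\varepsilon$ is essentially a bump of height $\varepsilon^{-1}$ and width $\varepsilon$, so the double integral defining $\mathscr{K}^\varepsilon(u,v)$ really averages the Volterra kernel $K(u-\cdot)$ over a $\varepsilon$-neighborhood of $v$. Since $K(r) \sim r^{H-1/2}$ has a negative exponent at the origin (as $H < 1/2$), the worst-case blow-up of this average arises when $u$ and $v$ are within distance $\varepsilon$ of each other, and a direct computation then yields the exponent $H - 1/2$.

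\textbf{Reduction step.} Using the pointwise bound $|\delta^\varepsilon| \lesssim \varepsilon^{-1}$ together with the support condition $\operatorname{supp} \delta^\varepsilon(x, \cdot) \subseteq B(x, c\varepsilon)$, I would first estimate
\[
|\mathscr{K}^\varepsilon(u,v)| \lesssim \varepsilon^{-2} \int\!\!\int \mathbf{1}_{|v - x_1| \le c\varepsilon}\, \mathbf{1}_{|x_1 - x_2| \le c\varepsilon}\, K(u - x_2)\, \mathrm{d}x_1 \mathrm{d}x_2.
\]
Performing the $x_1$-integration first and observing that the intersection of the two indicator sets has length at most $2c\varepsilon$ and is empty unless $|v - x_2| \le 2c\varepsilon$, I reduce to
\[
|\mathscr{K}^\varepsilon(u,v)| \lesssim \varepsilon^{-1} \int_{|v - x_2| \le 2c\varepsilon} K(u - x_2)\, \mathrm{d}x_2.
\]

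\textbf{Two cases.} I then split according to the size of $|u-v|$ relative to $\varepsilon$. In the \emph{near-diagonal} regime $|u - v| \le 4c\varepsilon$, the integration variable $x_2$ stays within distance $\sim \varepsilon$ of $u$, and a change of variables yields $\int_0^{C\varepsilon} r^{H-1/2}\, \mathrm{d}r \lesssim \varepsilon^{H+1/2}$, so that $|\mathscr{K}^\varepsilon(u,v)| \lesssim \varepsilon^{-1} \cdot \varepsilon^{H+1/2} = \varepsilon^{H-1/2}$. In the \emph{off-diagonal} regime $|u-v| > 4c\varepsilon$, for any $x_2$ in the domain of integration we have $|u - x_2| \ge |u-v|/2$, so $K(u - x_2) \lesssim |u-v|^{H-1/2}$; integrating over an interval of length $4c\varepsilon$ gives $|\mathscr{K}^\varepsilon(u,v)| \lesssim |u-v|^{H-1/2}$, which is bounded by $\varepsilon^{H-1/2}$ precisely because $H - 1/2 < 0$.

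\textbf{Comments.} Combining the two cases gives the claim. There is no real obstacle here; the argument is a routine reduction once one recognizes the support/size structure of $\delta^\varepsilon$. The bound is sharp in order, and correctly recovers $|\mathscr{K}^\varepsilon| \lesssim 1$ at the marginal value $H = 1/2$, where the Volterra kernel becomes an indicator and no blow-up occurs. One small subtlety is the indicator $\mathbf{1}_{u,v \ge 0}$ and the restriction of $K$ to positive arguments, but these only further restrict the domain of integration and do not affect the bound.
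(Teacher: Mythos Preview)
Your proof is correct and follows essentially the same approach as the paper: bound $|\delta^\varepsilon|\lesssim \varepsilon^{-1}$, use the $c\varepsilon$-support to reduce to integrating $|s-u|^{H-1/2}$ over an $\varepsilon$-ball, and read off the exponent. The paper compresses this into a single line without your case split, since $\int_{B(x,c\varepsilon)}|s-u|^{H-1/2}\,\mathrm{d}u\lesssim \varepsilon^{H+1/2}$ holds uniformly in $s,x$ (the worst case being $s\in B(x,c\varepsilon)$); your near/off-diagonal distinction is therefore unnecessary but harmless.
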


\begin{proof}
$|\mathscr{K}^{\varepsilon}(s,t)| \lesssim
\varepsilon^{-2}\int_{B(t,c\varepsilon)}\mathrm{d}  x\int_{B(x,c\varepsilon)}%
\mathrm{d}  u\,|s-u|^{H-1/2} \lesssim \varepsilon^{H-1/2}\,. $
\end{proof}

Our hope is now that the new model $\hat{\Pi}^\varepsilon$ converges to $\Pi$
in a suitable sense. Similar to \cite[(2.17)]{Hai14}
we define the distance between two models $(\Pi,\Gamma)$ and $(\tilde{\Pi},%
\tilde{\Gamma})$ on a compact time interval $[0,T]$ as 
\begin{align}  \label{eq:ModelDistance}
{\vert\kern-0.25ex\vert\kern-0.25ex\vert (\Pi,\Gamma);(\tilde{\Pi},\tilde{%
\Gamma}) \vert\kern-0.25ex\vert\kern-0.25ex\vert}_{T}:=\sup_{{\scriptsize 
\begin{array}{c}
\mathrm{supp}\,\varphi\subseteq B(0,1), \\ 
\lambda\in (0,1], \\ 
s\in [0,T],\tau\in S%
\end{array}%
}} \lambda^{-|\tau|} |(\Pi_s-\tilde{\Pi}_s)\tau(\varphi^\lambda_s)|+\sup_{%
{\scriptsize 
\begin{array}{c}
t,s\in [0,T], \\ 
\tau \in S,A\ni\beta<|\tau|%
\end{array}%
}} \frac{|\Gamma_{ts}\tau - \tilde\Gamma_{ts}\tau|_\beta}{%
|t-s|^{|\tau|-\beta}}\,,
\end{align}
where $|\cdot|_\beta$ denotes the absolute value of the coefficient of the
symbol $\tau^{\prime }$ with $|\tau^{\prime }|=\beta$ and where the first
supremum runs over $\varphi\in C^1_c$ with $\|\varphi\|_{C^1}\leq 1$. We
will also need 
\begin{align*}
\|\Pi\|_T=\sup_{{\scriptsize 
\begin{array}{c}
\mathrm{supp}\,\varphi\subseteq B(0,1), \\ 
\lambda\in (0,1], \\ 
s\in [0,T],\tau\in S%
\end{array}%
}} \lambda^{-|\tau|} |\Pi_s \tau(\varphi^\lambda_s)|\,.
\end{align*}

We are now ready to give the fundamental result of this subsection which
plays a key role in our approximation theory. Recall that the (minimal)
homogeneity $|\Xi| = -1/2 - \kappa$ which corresponds to $W$ being H\"older
with exponent $1/2 - \kappa$.

\begin{theorem}
\label{prop:ModelConvergence} Define, for every $s\in[0,T]$, 
the linear map $\hat{\Pi}_s^\varepsilon: \mathcal{T} \to C_c^1 (\mathbb{R})^\prime$ given by, for $%
m\in\{1,\ldots,M\}$ 
\begin{align*}
\hat{\Pi}^\varepsilon_s \Xi\mathcal{I}(\Xi)^m=\Pi^\varepsilon_s\Xi\mathcal{I}%
(\Xi)^m-m\mathscr {C}^\varepsilon(\cdot) \Pi_s^\varepsilon(\mathcal{I}%
(\Xi)^{m-1})
\end{align*}
and $\hat{\Pi}_s^\varepsilon=\Pi_s^\varepsilon$ on all remaining symbols in $S$.
Then 
\begin{equation*}
(\hat{\Pi}^\varepsilon,\hat \Gamma^\varepsilon) := (\hat{\Pi}^\varepsilon,
\Gamma^\varepsilon)
\end{equation*}
defines a (``renormalized'') model on $(\mathcal{T},A)$ and on compact time
intervals we have 
\begin{align}  \label{eq:ConvergenceModel}
 \left\|  {\vert\kern-0.25ex\vert\kern-0.25ex\vert (\hat{\Pi}^\varepsilon,\hat
\Gamma^\varepsilon);(\Pi,\Gamma) \vert\kern-0.25ex\vert\kern-0.25ex\vert}%
_T \right\|_{L^p}\lesssim \varepsilon^{\delta\kappa}\,.
\end{align}
for any $\delta\in (0,1)$ and $p\in [1,\infty)$. In particular, we have ``almost rate $H$'' for $%
M=M(\kappa,H)$ large enough.
\end{theorem}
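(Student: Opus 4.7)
The proof naturally splits into two parts: (i) verifying that the pair $(\hat{\Pi}^\varepsilon, \Gamma^\varepsilon)$ satisfies the algebraic and analytic axioms of a model on $(\mathcal{T},A)$, and (ii) establishing the quantitative convergence rate \eqref{eq:ConvergenceModel}. For (i), the only symbols affected by renormalization are the $\Xi\mathcal{I}(\Xi)^m$. The analytic bound \eqref{eq:ModelDefinition2} on $\hat\Pi^\varepsilon$ is readily obtained from the reshaping in Lemma \ref{lem:ReshapingApproximation}, which after subtracting the diagonal correction $-m\mathscr{C}^\varepsilon(\cdot)\Pi^\varepsilon_s \mathcal{I}(\Xi)^{m-1}$ leaves precisely the two-term representation
\[
\hat\Pi^\varepsilon_s \Xi\mathcal{I}(\Xi)^m(\varphi) = \int \varphi(t)(\hat W^\varepsilon(t) - \hat W^\varepsilon(s))^m \diamond dW^\varepsilon(t) - m\int\varphi(t)\mathscr{K}^\varepsilon(s,t)(\hat W^\varepsilon(t)-\hat W^\varepsilon(s))^{m-1}dt,
\]
which mirrors exactly the limiting form in Lemma \ref{lem:ReshapingLimit} (with $\mathscr{K}^\varepsilon(s,t)$ replacing $K(s-t)\mathbf{1}_{t<s}$). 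Chen's relation $\hat\Pi^\varepsilon_t = \hat\Pi^\varepsilon_s \Gamma^\varepsilon_{st}$ is a direct binomial computation: $(\Pi^\varepsilon,\Gamma^\varepsilon)$ already satisfies it, and since $\mathscr{C}^\varepsilon(\cdot)$ is evaluated only at the free time variable (not at the basepoint $s$), the correction transforms under $\Gamma^\varepsilon_{st}$ exactly as $\Pi^\varepsilon_s \mathcal{I}(\Xi)^{m-1}$ does, leaving the identity intact.

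For (ii), the key estimate is the $L^p$ bound on $(\hat\Pi^\varepsilon_s - \Pi_s)\Xi\mathcal{I}(\Xi)^m(\varphi^\lambda_s)$. Using the parallel Skorokhod representations from Lemmas \ref{lem:ReshapingApproximation} and \ref{lem:ReshapingLimit}, this difference decomposes into a Skorokhod part and a kernel part. Both lie in a Wiener chaos of fixed order at most $m+1$, so by Nelson's hypercontractivity the $L^p$ bound reduces to an $L^2$ (covariance) computation. The kernel part is controlled via $|\mathscr{K}^\varepsilon(s,t) - K(s-t)\mathbf{1}_{t<s}| \to 0$, combined with the uniform bound $|\mathscr{K}^\varepsilon| \lesssim \varepsilon^{H-1/2}$ from Lemma \ref{lem:ApproximateVolterraEstimate} and the H\"older regularity of $\hat W^\varepsilon - \hat W$ from Lemma \ref{lem:ConvergenceW}. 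The Skorokhod part is handled via the Skorokhod isometry, which produces integrals involving products of $\mathscr{K}^\varepsilon$'s and increments of $\hat W^\varepsilon$ that differ from their limits by expressions governed by the same two lemmas. Finally, the $\Gamma^\varepsilon - \Gamma$ comparison reduces immediately to Lemma \ref{lem:ConvergenceW}, as $\Gamma^\varepsilon_{st}$ acts by multiplication with $\hat W^\varepsilon(t)-\hat W^\varepsilon(s)$.

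The main obstacle is the extraction of the genuine rate $\varepsilon^{\delta\kappa}$ (with $\delta<1$) from the $L^2$ computation. The difficulty is that $\mathscr{K}^\varepsilon$ and $\hat W^\varepsilon$ have the \emph{same} order of singularity as their respective limits, so pointwise estimates alone yield no decay in $\varepsilon$ after integration against the rescaled test function $\varphi^\lambda_s$. The resolution is an interpolation: one bounds each factor of the form $\mathscr{K}^\varepsilon(s,t) - K(s-t)\mathbf{1}_{t<s}$ (and analogously for increments of $\hat W^\varepsilon$) by geometric interpolation between (a) an a priori estimate with the correct $\lambda$-scaling but no $\varepsilon$-decay, and (b) a pointwise convergence estimate with full $\varepsilon^\kappa$-decay but bad scaling; the interpolation parameter is precisely $\delta \in (0,1)$. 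This delicate balancing, combined with the splitting of the integration domain according to whether times lie within a common $\varepsilon$-box, produces the stated rate. The condition that $M$ be large (i.e.~\eqref{eq:condonkappa}) ensures that all relevant symbols have homogeneity bounded away from $-1/2$, so that the small factor $\varepsilon^{\delta\kappa}$ is never consumed by the $\lambda$-scaling and one may take $\kappa$ and $\delta$ arbitrarily close to $H$ respectively $1$, yielding the stated near-$H$ rate.
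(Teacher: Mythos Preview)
Your proposal is correct and follows essentially the same route as the paper: the Chen-relation check via the binomial identity, the Skorokhod/kernel decomposition from Lemmas~\ref{lem:ReshapingLimit} and~\ref{lem:ReshapingApproximation}, reduction of $L^p$ to $L^2$ via equivalence of moments in a fixed Wiener chaos, and the $\lambda\lessgtr\varepsilon$ case-split combined with the kernel estimate of Lemma~\ref{lem:ApproximateVolterraEstimate} and the increment estimate of Lemma~\ref{lem:ConvergenceW}. Two minor remarks: the paper verifies the analytic bound \eqref{eq:ModelDefinition2} for $\hat\Pi^\varepsilon$ more directly (just $\sup_t|\mathscr{C}^\varepsilon(t)|<\infty$ for fixed $\varepsilon$ plus the known bound on $\Pi^\varepsilon_s\mathcal{I}(\Xi)^{m-1}$, rather than the reshaped form), and your final sentence about homogeneities being ``bounded away from $-1/2$'' is not quite the mechanism---the point is simply that large $M$ allows $\kappa$ to be chosen close to $H$ in \eqref{eq:condonkappa}, which is exactly what you conclude.
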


\begin{remark}
In the special case of the level-2 Brownian rough path (i.e. $H = 1/2,\, M=1$%
) the above result is in precise agreement with known results (even though
the situation here is simpler since we are dealing with \textit{scalar}
Brownian). More specifically, we don't see the usual (strong) rate
``almost'' $1/2$ but have to subtract the H\"older exponent used in the
rough path / model topology (here: $1/2 - \kappa$) which exactly leads to
the rate ``almost $\kappa$''. Since $M=1$ entails the condition $1/2 -
\kappa > 1/3$, we see that $\kappa < 1/6$, exactly as given e.g. in in \cite[%
Ex. 10.14]{FH14}. A better rate can be achieved by working with
higher-level rough path (here: $M>1$) and indeed the special case of $H=1/2$%
, but general $M$, can be seen as a consequence of \cite{FrizRiedel}: at the
price of working with $\sim 1/(1/2 - \kappa)$ levels, one can choose $\kappa$
arbitrarily close to $1/2$ and so recover the usual ``almost'' $1/2$ rate.
Of course, the case $H<1/2$ is out of reach of rough path
considerations. 
\end{remark}

\begin{proof}
Since due to Lemma \ref{lem:ApproximateVolterraEstimate} we have, for fixed $\varepsilon$, that
 $\sup_{ t \in [0,T] } |\mathscr{C}^\varepsilon(t)|< \infty $ 
 and $|\Pi_s\mathcal{I}%
(\Xi)^m|\lesssim |\cdot-s|^{mH}$ the bound \eqref{eq:ModelDefinition2} is
still satisfied. The modification $\hat{\Pi}^\varepsilon_s \Xi\mathcal{I}%
(\Xi)^m-\Pi^\varepsilon_s \Xi\mathcal{I}(\Xi)^m$ does not lead to a violation of ``Chen's
relation''.  
Indeed, using validity of \eqref{eq:ModelDefinition1} for the original model, we have 
\begin{align*}
&\hat{\Pi}^\varepsilon_t\Gamma_{ts}^\varepsilon(\Xi \mathcal{I} (\Xi )^k)=%
\hat{\Pi}^\varepsilon_t\left(\sum_{l=0}^k \binom{k}{l} (\hat{W}%
^\varepsilon(t)-\hat{W}^\varepsilon(s))^l \Xi\mathcal{I} (\Xi )^{k-l} \right)
\\
&=\Pi_s^\varepsilon(\Xi\mathcal{I} (\Xi )^k)-\sum_{l=0}^k \binom{k}{l} (\hat{W}%
^\varepsilon(t)-\hat{W}^\varepsilon(s))^l (k-l)\mathscr{C}%
^\varepsilon(\cdot)(\hat{W}^\varepsilon(\cdot)-\hat{W}%
^\varepsilon(t))^{k-l-1} \\
&=\Pi_s^\varepsilon(\Xi\mathcal{I} (\Xi )^k)-k\mathscr{C}^\varepsilon(\cdot)\,\sum_{l=0}^{k-1}\binom{k-1}{l}\,(\hat{W}^\varepsilon(t)-\hat{W}%
^\varepsilon(s))^l\,(\hat{W}^\varepsilon(\cdot)-\hat{W}%
^\varepsilon(t))^{k-l-1} \\
&=\Pi_s^\varepsilon(\Xi \mathcal{I} (\Xi )^k)-k\mathscr{C}^\varepsilon(\cdot)(\hat{W}^\varepsilon(\cdot)-\hat{W}^\varepsilon(s))^k =\hat{\Pi}%
^\varepsilon_s(\Xi(\mathcal{I} (\Xi )^k)\,.
\end{align*} 
We so see that \eqref{eq:ModelDefinition1} is also satisfied after our modification, and then
easily conclude that $(\hat{\Pi}^\varepsilon,\Gamma^\varepsilon)$ is still
a model on $(\mathcal{T},A)$. At last, the bound \eqref{eq:ConvergenceModel} is a bit technical and left to Appendix~\ref{app:AR}.\end{proof}

\subsection{Approximation and renormalization theory}

\label{subsec:ApproximationTheory} 

We now address to central question of how the integral $\int_0^t
f(\hat{W}^\varepsilon(r),r)\, \mathrm{d%
}W^\varepsilon(r)$ has to be modified to make it convergent against $%
\int_0^t f(W(r),r) \mathrm{d} W(r)$.

The key idea is to combine the convergence result from Theorem \ref{prop:ModelConvergence} 
with Hairer's reconstruction theorem, which we state
below.

We first recall the notion of a modelled distribution, compare \cite[%
Definition 3.1]{Hai14}. We say that a map $F:\mathbb{R}
\rightarrow \mathcal{T}$ is in the space $\mathcal{D}_T^\gamma(\Gamma),\,%
\gamma>0$ for some time horizon $T>0$ if 
\begin{align}  \label{eq:ModelledDistribution}
\|F\|_{\mathcal{D}_T^\gamma(\Gamma)}:=\sup_{A\ni\beta<\gamma, s\in [0,T]}
|F(s)|_{\beta}+\sup_{A\ni\beta <\gamma,s,t\in[0,T],\,s\neq t} \frac{%
|F(t)-\Gamma_{ts}F(s)|_\beta}{|t-s|^{\gamma-\beta}}<\infty \,,
\end{align}
where as above $|\cdot|_\beta$ denotes the absolute value of the coefficient
of the vector $\tau$ with $|\tau|=\beta$. Given two models $(\Pi,\Gamma)$
and $(\overline{\Pi},\overline{\Gamma})$ and two $F,\overline{F}:\,\mathbb{R}
\mapsto \mathcal{T}$ it is also usefull to have the notion of a distance 
\begin{align*}
{\vert\kern-0.25ex\vert\kern-0.25ex\vert F ; \overline{F} \vert\kern%
-0.25ex\vert\kern-0.25ex\vert}_{\mathcal{D}_T^\gamma(\Gamma),\mathcal{D}%
_T^\gamma(\overline{\Gamma})}:=&\sup_{A\ni\beta<\gamma,\, t\in [0,T]} |F(t)-%
\overline{F}(t)|_{\beta} \\
&+\sup_{A\ni\beta<\gamma,\, s,t\in [0,T],\,s\neq t} \frac{%
|F(t)-\Gamma_{ts}F(s)-(\overline{F}(t)-\overline{\Gamma}_{ts}\overline{F}%
(s))|_\beta}{|t-s|^{\gamma-\beta}}\,.
\end{align*}

The reconstruction theorem now states that for $\gamma>0$ a map $F\in 
\mathcal{D}^\gamma_T(\Gamma)$ can be uniquely identified with a distribution
that behaves locally like $\Pi_\cdot F(\cdot)$.

\begin{theorem}
\cite[Theorem 3.10]{Hai14} \label{thm:Reconstruction}

Given a model $(\Pi,\Gamma)$, $\gamma>0$ and a $T>0$ there is a unique
continuous operator\footnote{$\mathcal{C}^{|\Xi|}(\mathbb{R} )$ denotes the
space of distributions that are locally in the Besov space $\mathcal{B}%
^{|\Xi|}_{\infty,\infty}(\mathbb{R} )$ (cmp.~\cite[Remark 3.8]{Hai14}).}
$\mathcal{R}:\,\mathcal{D}_T^{\gamma}(\Gamma)
\rightarrow \mathcal{C}^{|\Xi|}(\mathbb{R} )$ such that for any $s\in [0,T]$
and $\varphi \in C^1_c (B(0,1))$ 
\begin{equation}
|(\mathcal{R}F-\Pi_{s}F(s))(\varphi_{s}^{\lambda})|\lesssim \|\Pi\|_T\,\lambda^{\gamma}%
\,.  \label{eq:ReconstructionEstimate1}
\end{equation}
For two different models $(\Pi,\Gamma)$ and $(\overline{\text{$\Pi$}},%
\overline{\Gamma})$ we further have 
\begin{align}  \label{eq:ReconstructionEstimate2}
&\left|(\mathcal{R}F-\Pi_{s}F(s)-(\overline{\mathcal{R}}\,\overline{F}%
-\Pi_{s}\overline{F}(s)))(\varphi_{s}^{\lambda})\right|  \notag \\
&\lesssim\lambda^{\gamma}\,\left(\|F\|_{\mathcal{D}^{\gamma}_T(\Gamma)}\,{%
\vert\kern-0.25ex\vert\kern-0.25ex\vert (\Pi,\Gamma);(\overline{\Pi},%
\overline{\Gamma}) \vert\kern-0.25ex\vert\kern-0.25ex\vert}_{T}+ \|\Pi\|_T{%
\vert\kern-0.25ex\vert\kern-0.25ex\vert F;\overline{F} \vert\kern-0.25ex\vert%
\kern-0.25ex\vert}_{\mathcal{D}^{\gamma}_T(\Gamma);\mathcal{D}^\gamma_T(%
\overline{\Gamma})}\right)
\end{align}
for $F\in\mathcal{D}_T^{\gamma}(\Gamma),\,\overline{F}\in\mathcal{D}%
_T^{\gamma}(\overline{\Gamma})$.
\end{theorem}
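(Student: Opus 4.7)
The plan is to first prove uniqueness by a scaling argument, then construct $\mathcal{R}F$ as the limit of a dyadic wavelet approximation, and finally obtain the two-model bound \eqref{eq:ReconstructionEstimate2} by running exactly the same scheme on differences.

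Uniqueness follows from \eqref{eq:ReconstructionEstimate1} alone: if $R_1,R_2\in\mathcal{C}^{|\Xi|}(\mathbb{R})$ both satisfy the bound, then $D:=R_1-R_2$ obeys $|D(\varphi_s^\lambda)|\lesssim\lambda^\gamma$ uniformly in $s\in[0,T]$ and $\lambda\in(0,1]$. Since $\gamma>0$, this forces $D$ to be locally $\gamma$-Hölder (in particular a continuous function), and testing against a mollifier and sending $\lambda\to 0$ gives $D\equiv 0$.

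For existence, fix a compactly supported $C^1$ father wavelet $\phi$ with $\int\phi=1$ generating a multi-resolution of $L^2(\mathbb{R})$, and let $\phi_y^n:=2^{n/2}\phi(2^n\cdot -y)$ for $y\in\mathbb{Z}$ and $n\ge 0$. Define dyadic approximants
\begin{equation*}
\mathcal{R}_n F := \sum_{y\in\mathbb{Z}} \bigl(\Pi_{x_y^n}F(x_y^n)\bigr)(\phi_y^n)\,\phi_y^n,\qquad x_y^n:=2^{-n}y,
\end{equation*}
which are ordinary locally bounded functions at each fixed scale. The key telescope uses the consistency relation $\Pi_t=\Pi_s\Gamma_{st}$ to re-centre between neighbouring dyadic points:
\begin{equation*}
\bigl(\Pi_{x_y}F(x_y)-\Pi_{x_{y'}}F(x_{y'})\bigr)(\phi_y^n) = \Pi_{x_{y'}}\bigl(\Gamma_{x_{y'}x_y}F(x_y)-F(x_{y'})\bigr)(\phi_y^n).
\end{equation*}
Combining the modelled-distribution bound $|F(x_y)-\Gamma_{x_yx_{y'}}F(x_{y'})|_\beta\lesssim|x_y-x_{y'}|^{\gamma-\beta}$ for each homogeneity $\beta<\gamma$ with the model bound $|\Pi_{x_{y'}}\tau(\phi_y^n)|\lesssim \|\Pi\|_T\cdot 2^{-n|\tau|}$ yields, after summing over the wavelets intersecting the support of $\varphi_s^\lambda$ at scales $2^{-n}\lesssim\lambda$,
\begin{equation*}
\bigl|(\mathcal{R}_{n+1}F-\mathcal{R}_nF)(\varphi_s^\lambda)\bigr| \lesssim \|\Pi\|_T\,\|F\|_{\mathcal{D}_T^\gamma(\Gamma)}\,2^{-n\gamma}
\end{equation*}
(up to the scaling factor from testing at scale $\lambda$). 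Since $\gamma>0$, the telescope converges in $n$, so $\mathcal{R}_nF$ is Cauchy in $\mathcal{C}^{|\Xi|}$; its limit $\mathcal{R}F$ satisfies \eqref{eq:ReconstructionEstimate1} by choosing $n$ with $2^{-n}\sim\lambda$, summing the remaining telescope tail, and combining with the pointwise comparison $(\Pi_sF(s)-\Pi_{x_y^n}F(x_y^n))(\varphi_s^\lambda)=O(\lambda^\gamma)$ from the same re-centering.

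For \eqref{eq:ReconstructionEstimate2}, re-run the identical wavelet scheme on $\mathcal{R}F-\overline{\mathcal{R}}\,\overline{F}$ and split at every dyadic point
\begin{equation*}
\Pi_yF(y)-\overline{\Pi}_y\overline{F}(y) = (\Pi_y-\overline{\Pi}_y)F(y) + \overline{\Pi}_y\bigl(F(y)-\overline{F}(y)\bigr);
\end{equation*}
the two summands generate respectively the contributions $\|F\|_{\mathcal{D}_T^\gamma(\Gamma)}\,\vert\kern-0.25ex\vert\kern-0.25ex\vert(\Pi,\Gamma);(\overline{\Pi},\overline{\Gamma})\vert\kern-0.25ex\vert\kern-0.25ex\vert_T$ and $\|\overline{\Pi}\|_T\,\vert\kern-0.25ex\vert\kern-0.25ex\vert F;\overline{F}\vert\kern-0.25ex\vert\kern-0.25ex\vert_{\mathcal{D}_T^\gamma(\Gamma),\mathcal{D}_T^\gamma(\overline{\Gamma})}$ when the dyadic bookkeeping is repeated verbatim. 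The main obstacle is managing the wavelet telescope so that the large-scale decay $2^{-n\gamma}$ (crucially using $\gamma>0$) beats the small-scale blow-up $\lambda^{|\Xi|}$ inherent to distributions of negative regularity, uniformly over test functions; once that Cauchy estimate is secured, uniqueness, the construction, and the two-model stability all follow formally.
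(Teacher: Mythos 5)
The paper does not prove this statement itself; it cites Hairer's Theorem~3.10 of~\cite{Hai14}, and the proof there is exactly the dyadic wavelet construction you outline. Your sketch reproduces Hairer's argument in essentially all its key steps: the uniqueness-by-scaling argument (a distribution that vanishes to positive order $\gamma>0$ at every point of a set must vanish there), the wavelet approximants $\mathcal{R}_n F$ with re-centering via $\Pi_t=\Pi_s\Gamma_{st}$ and the $\mathcal{D}^\gamma$ bound driving a telescope estimate of order $2^{-n\gamma}$, and the standard two-term splitting $(\Pi_y-\overline\Pi_y)F(y)+\overline\Pi_y(F(y)-\overline F(y))$ for the two-model stability bound.

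Two remarks on your write-up. First, the claim
$$\bigl|(\mathcal{R}_{n+1}F-\mathcal{R}_nF)(\varphi_s^\lambda)\bigr|\lesssim\|\Pi\|_T\,\|F\|_{\mathcal{D}_T^\gamma(\Gamma)}\,2^{-n\gamma}$$
hides the crucial interplay between the wavelet scale $2^{-n}$ and the test-function scale $\lambda$: the decay $2^{-n\gamma}$ only dominates once $2^{-n}\lesssim\lambda$, while for coarse levels $2^{-n}\gtrsim\lambda$ the contribution is estimated differently (by the model bound directly at scale $\lambda$). Splitting the sum at the level $n_0$ with $2^{-n_0}\sim\lambda$ and then comparing $\mathcal{R}_{n_0}F$ with $\Pi_sF(s)$ is exactly what produces the $\lambda^\gamma$ on the right of \eqref{eq:ReconstructionEstimate1}; the phrase ``up to the scaling factor'' gestures at this but does not do it. Second, in your two-model estimate the second contribution naturally carries $\|\overline\Pi\|_T$ rather than $\|\Pi\|_T$ as written in \eqref{eq:ReconstructionEstimate2}; this is harmless (the roles are symmetric, or one absorbs the difference into the model-distance term), but worth noting since it is a frequent source of sign/role confusion in these arguments. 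With these points made precise your outline matches the cited proof.
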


As mentioned earlier we want ourselves to work with compactly supported
functions $\varphi \in \mathcal{B}^{\beta}_{1,\infty}(\mathbb{R}
^d),\,\beta>-|\Xi|$ which includes objects like the Haar wavelets. The
following Lemma allows us to carry over all bounds.

\begin{lemma}
\label{lem:BesovBounds} The bounds \eqref{eq:ModelDefinition2}, %
\eqref{eq:ModelDistance}, \eqref{eq:ReconstructionEstimate1} and %
\eqref{eq:ReconstructionEstimate2} do still hold for $\varphi\in \mathcal{B}%
^{\beta}_{1,\infty}(\mathbb{R} ^d),\beta>-|\Xi|$ with compact support in $%
B(0,1)$ (after a change of constants).
\end{lemma}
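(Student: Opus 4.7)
The plan is to reduce the Besov test-function estimates to the $C^1_c$ estimates already established, via a wavelet decomposition of the test function. Pick a smooth compactly supported wavelet basis $\{\phi_y, \psi^j_y\}$ for $L^2(\mathbb{R}^d)$, with each $\phi, \psi \in C^1_c$. Given $\varphi \in \mathcal{B}^\beta_{1,\infty}$ supported in $B(0,1)$, the defining expansion reads $\varphi = \sum_y a_y \phi_y + \sum_{j \geq 0} \sum_y b^j_y \psi^j_y$, and by the very definition of the $\mathcal{B}^\beta_{1,\infty}$-norm recalled before Definition~\ref{def:Dirac}, we have $\sum_y |a_y| + \sup_{j} 2^{j\beta} \sum_y 2^{-j/2} |b^j_y| \lesssim \|\varphi\|_{\mathcal{B}^\beta_{1,\infty}}$. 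Because both $\varphi$ and the wavelets are compactly supported, only $O(1)$ father wavelets and $O(2^{jd})$ mother wavelets at level $j$ contribute to the sum.

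A direct computation gives the rescaled decomposition
$$\varphi^\lambda_s = \sum_y a_y \phi^\lambda_{s+\lambda y} + \sum_{j \geq 0}\sum_y b^j_y\, 2^{-j/2}\, \psi^{\lambda 2^{-j}}_{s+\lambda y},$$
in which each summand is again a $C^1_c$-test function (with $C^1$-norm bounded by a universal constant coming from the wavelet basis), centered at $s + \lambda y$ and supported in a ball of radius $O(\lambda)$, respectively $O(\lambda 2^{-j})$. Since the relevant $y$ lie in a bounded set, the centers $s + \lambda y$ lie within $O(\lambda)$ of $s$, so the standard bound \eqref{eq:ModelDefinition2} applies (possibly after shifting the base point via $\Pi_s = \Pi_{s+\lambda y} \Gamma_{s+\lambda y, s}$, whose action is controlled by \eqref{eq:ModelDefinition3}) and yields $|\Pi_s \tau(\phi^\lambda_{s+\lambda y})| \lesssim \lambda^{|\tau|}$ and $|\Pi_s \tau(\psi^{\lambda 2^{-j}}_{s+\lambda y})| \lesssim (\lambda 2^{-j})^{|\tau|}$. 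Summing termwise,
$$|\Pi_s \tau(\varphi^\lambda_s)| \lesssim \lambda^{|\tau|} \|\varphi\|_{\mathcal{B}^\beta_{1,\infty}} \Bigl(1 + \sum_{j \geq 0} 2^{-j(\beta + |\tau|)}\Bigr),$$
and the geometric series converges precisely thanks to the strict inequality $\beta > -|\Xi| \geq -|\tau|$, which holds for every $\tau \in S$ since $|\Xi|$ is the minimum of $A$.

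The bounds \eqref{eq:ModelDistance}, \eqref{eq:ReconstructionEstimate1} and \eqref{eq:ReconstructionEstimate2} are all of the same shape, i.e.\ a linear functional of the test function whose $C^1_c$-version scales like $\lambda^{\text{exponent}}$; therefore exactly the same wavelet decomposition, combined with linearity, carries them over to the Besov class at the cost of the same geometric series in $j$. The main obstacle is the bookkeeping: tracking how the $L^1$-rescaling $\varphi \mapsto \lambda^{-d}\varphi(\lambda^{-1}(\cdot - s))$ interacts with the $2^{j/2}$-normalization of the mother wavelets, and verifying that the resulting exponent $-(\beta + |\tau|)$ in the geometric sum is precisely the one for which the Besov hypothesis $\beta > -|\Xi|$ is both necessary and sufficient.
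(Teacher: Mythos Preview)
Your approach is essentially the same as the paper's: wavelet expansion of the test function, application of the known $C^1$-bounds to each piece, and a geometric sum in the wavelet scale that converges precisely by the hypothesis $\beta>-|\Xi|$. The paper in fact carries this out for the reconstruction bound \eqref{eq:ReconstructionEstimate1} (rather than the model bound), assuming $\lambda=2^{-j_0}$ dyadic and expanding at scales $j\ge j_0$, and re-centres via the splitting $(\mathcal{R}F-\Pi_sF_s)=(\mathcal{R}F-\Pi_yF_y)+\Pi_y(F_y-\Gamma_{ys}F_s)$; your rescaled decomposition of $\varphi$ is the same computation in different coordinates.

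There is one imprecision you should fix. After shifting the base point via $\Pi_s=\Pi_{s+\lambda y}\Gamma_{s+\lambda y,s}$, the action of $\Gamma$ spreads $\tau$ over lower symbols $\tau'$, and what you actually obtain is
\[
|\Pi_s\tau(\psi^{\lambda 2^{-j}}_{s+\lambda y})|\ \lesssim\ \sum_{\tau'\in S:\,|\tau'|\le|\tau|}|\lambda y|^{\,|\tau|-|\tau'|}\,(\lambda 2^{-j})^{|\tau'|}\ \lesssim\ \lambda^{|\tau|}\sum_{\tau'}2^{-j|\tau'|},
\]
not $(\lambda 2^{-j})^{|\tau|}$ as you write. (For instance, for $\tau=\Xi\mathcal{I}(\Xi)$ the cross-term $\lambda^{H-\kappa}(\lambda 2^{-j})^{|\Xi|}$ dominates $(\lambda 2^{-j})^{|\tau|}$.) This does not break the argument: after multiplying by the Besov weight $2^{-j\beta}$ and summing in $j$, each term is controlled by $\sum_j 2^{-j(\beta+|\tau'|)}$, and since every $\tau'$ that appears satisfies $|\tau'|\ge|\Xi|$, convergence follows from $\beta+|\Xi|>0$. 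So the conclusion survives, but your intermediate bound and the displayed series $\sum_j 2^{-j(\beta+|\tau|)}$ should be replaced by the correct one involving the minimal homogeneity arising from $\Gamma\tau$. The paper's version avoids this slip by always evaluating $\Pi_y$ at test functions centred at $y$, so that the model bound applies directly and the modelled-distribution increment supplies the $|y-s|^{\gamma-\alpha}$ factors.
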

\begin{remark}
This covers in particular functions like $\mathbf{1}_{[0,1]}\in \mathcal{B}^{1}_{1,\infty} (\mathbb{R})$.
\end{remark}

\begin{proof}
We prove this via wavelet methods in the appendix.
\end{proof}

By the notation $X^{(\varepsilon)}$ we mean in the following both $X$ and $%
X^{\varepsilon}$.

To study objects like $\int_0^t f(\hat{W}^{(\varepsilon)}(r),r) \, \mathrm{d}
W^{(\varepsilon)}( r)$ with the reconstrution theorem we first ``expand''
the integrand $f(\hat{W}^{(\varepsilon)}(r),r)$ in the regularity structure $%
\mathcal{T}$ 
\begin{align*}
F^{(\varepsilon)}(s):=\sum_{m=0}^M \frac{1}{m!} \partial_{1}^{m} f(\hat{W}%
^{(\varepsilon)}(s),s) \mathcal{I}(\Xi)^{m}
\end{align*}
On the level of the regularity structure these objects can be multiplied
with ``noise'' $\Xi$ which gives a modelled distribution on $\mathcal{T}$.

We will analyze $F^{(\varepsilon)}$ by writing it as composition of a (random) modelled distribution with the smooth function $f$.  To this end we need
\begin{lemma} \label{lem:kerneleasy}
On the regularity structure $(\mathcal{T},A,G)$ introduced in Section \ref{sec:basic-pricing-setup}, consider a model $(\Pi,\Gamma)$ which is {\it admissible} in the sense $$\Pi_t\mathcal{I} (\Xi) = (K \ast\Pi_t\Xi)(\cdot) - (K \ast\Pi_t\Xi)(t) \ . $$
Then \begin{equation} \label{equ:firstK}
\mathcal{K} \Xi (t) = \mathcal{I} (\Xi) + (K \ast \Pi_t \Xi)(t) \mathbf{1}
\end{equation}
defines a modelled distribution. More precisely, $\mathcal{K} \Xi \in \mathcal{D}_T^\infty := \bigcup_{\gamma < \infty} \mathcal{D}^\gamma_T$.
\end{lemma}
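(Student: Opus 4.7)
The plan is as follows. First I would observe that in the regularity structure $(\mathcal{T},A,G)$ at hand, $\Gamma_{ts}\Xi=\Xi$ for every $\Gamma_{ts}\in G$, so the consistency relation $\Pi_t=\Pi_s\Gamma_{ts}$ forces $\Pi_t\Xi$ to be independent of the base point $t$. Consequently $K\ast\Pi_t\Xi$ does not depend on $t$ either, and I may simply write $\widehat W^{\Pi}(u):=(K\ast\Pi\Xi)(u)$; admissibility then reads $\Pi_t\mathcal{I}(\Xi)(\cdot)=\widehat W^{\Pi}(\cdot)-\widehat W^{\Pi}(t)$, and $\mathcal{K}\Xi(t)=\mathcal{I}(\Xi)+\widehat W^{\Pi}(t)\,\mathbf 1$.

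Next I would verify the ``size'' part of the $\mathcal{D}^\gamma_T$-norm \eqref{eq:ModelledDistribution}. The only non-trivial components are the coefficient $1$ of $\mathcal{I}(\Xi)$ (homogeneity $H-\kappa$) and the coefficient $\widehat W^{\Pi}(t)$ of $\mathbf 1$ (homogeneity $0$). The former is obviously bounded; for the latter, the admissibility identity
\begin{equation*}
\widehat W^{\Pi}(t)-\widehat W^{\Pi}(s)=\Pi_s\mathcal{I}(\Xi)(t)
\end{equation*}
together with the model bound \eqref{eq:ModelDefinition2} applied to $\mathcal{I}(\Xi)$ (which has strictly positive homogeneity $H-\kappa$, so $\Pi_s\mathcal{I}(\Xi)$ is genuinely a Hölder continuous function of exponent $H-\kappa$) shows that $t\mapsto \widehat W^{\Pi}(t)$ is Hölder continuous on $[0,T]$, and hence bounded.

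Then I would check the translation-invariant part. Using the explicit form $\Gamma_{ts}\mathcal{I}(\Xi)=\mathcal{I}(\Xi)+(\widehat W^{\Pi}(t)-\widehat W^{\Pi}(s))\mathbf 1$ (which is itself forced by admissibility and $\Pi_t=\Pi_s\Gamma_{ts}$), a direct computation gives
\begin{align*}
\Gamma_{ts}\mathcal{K}\Xi(s)
&=\Gamma_{ts}\bigl[\mathcal{I}(\Xi)+\widehat W^{\Pi}(s)\,\mathbf 1\bigr]\\
&=\mathcal{I}(\Xi)+\bigl(\widehat W^{\Pi}(t)-\widehat W^{\Pi}(s)\bigr)\mathbf 1+\widehat W^{\Pi}(s)\,\mathbf 1\\
&=\mathcal{K}\Xi(t),
\end{align*}
so $\mathcal{K}\Xi(t)-\Gamma_{ts}\mathcal{K}\Xi(s)\equiv 0$ identically. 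Hence the second supremum in \eqref{eq:ModelledDistribution} vanishes for every $\gamma>0$.

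Combining the two bounds yields $\mathcal{K}\Xi\in\mathcal{D}^\gamma_T(\Gamma)$ for each $\gamma>0$, i.e.\ $\mathcal{K}\Xi\in\mathcal{D}^\infty_T$. I do not anticipate a genuine obstacle: the proposition is essentially the tautological statement that the constant term $\widehat W^{\Pi}(t)\mathbf 1$ in $\mathcal{K}\Xi(t)$ is precisely engineered to cancel the ``Taylor defect'' in $\Pi_t\mathcal{I}(\Xi)$ built into admissibility, making $\mathcal{K}\Xi$ locally constant relative to the structure group.
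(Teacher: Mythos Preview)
Your proof is correct and follows essentially the same approach as the paper: both identify the explicit action $\Gamma_{ts}\mathcal{I}(\Xi)=\mathcal{I}(\Xi)+(\widehat W^{\Pi}(t)-\widehat W^{\Pi}(s))\mathbf{1}$ from admissibility and the consistency relation, then observe that $\Gamma_{ts}\mathcal{K}\Xi(s)=\mathcal{K}\Xi(t)$ identically, so the second supremum in \eqref{eq:ModelledDistribution} vanishes for every $\gamma$. Your version is in fact slightly more complete than the paper's, since you also explicitly verify the first (``size'') supremum via boundedness of $\widehat W^{\Pi}$ on $[0,T]$, a point the paper leaves implicit.
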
 

\begin{remark}  Our notion of admissibility mimics \cite[Def 5.9]{Hai14}, which however is not directly applicable here (due to failure of Assumption 5.4 in \cite{Hai14}). 
\end{remark} 

\begin{proof}
By definition of the modelled distribution space we need to understand the action of $\Gamma_{st}$ on all constituting symbols. Since $\{ \mathbf{1}, \mathcal{I} (\Xi) \}$ span a sector, i.e. a space invariant by the action of the structure group, it is clear that 
$$
      \Gamma_{st} \mathcal{I} (\Xi)  = \mathcal{I} (\Xi)  + (...) \mathbf{1} .
$$
Application of the realization map $\Pi_s$, followed by evaluation at $s$, immediately identifies $(....)$ as 
$$ 
\Pi_t \mathcal{I} (\Xi)(s) - \Pi_s \mathcal{I} (\Xi)(s) = \Pi_t \mathcal{I} (\Xi)(s) = (K \ast\Pi_s\Xi)(s) - (K \ast\Pi_t\Xi)(t)
$$ 
where we used admissibility and $\Pi_s\Xi = \Pi_t\Xi$ in the last step, a general fact due to the trivial action of the structure group on the symbol with lowest degree. As a consequence $\Gamma_{st} \mathcal{K} \Xi (t) \equiv \mathcal{K} \Xi (s)$, so that, trivially, $\mathcal{K} \Xi \in \mathcal{D}_T^\gamma$ for any $\gamma < \infty$.
\end{proof} 

%
%
For a given (sufficiently smooth) function $f$, and a generic model $(\Pi,\Gamma)$ on our regularity structure, define
\begin{equation*}
F^\Pi : s \mapsto \sum_{m=0}^M \frac{1}{m!} \partial_{1}^{m} f(
(\mathcal{R} \mathcal{K} \Xi(s),s ) 
\mathcal{I}(\Xi)^{m}    \ .
\end{equation*}
Remark that $\mathcal{K} \Xi(s)$ is function-like, i.e. with values in the span of symbols with non-negative degree. From \cite[Prop. 3.28]{Hai14} we then have 
$$ 
\mathcal{R} \mathcal{K} \Xi(s)  =  \langle   \mathcal{K} \Xi(s),  \mathbf{1}  \rangle =  K \ast\Pi_s\Xi \ .   
$$
(In particular, we see that $F^{(\varepsilon)}(s)$ coincides with $F^\Pi$ when $\Pi$ is taken as either approximate or renormalized approximate model.) 
We can also define $\Xi F^\Pi$ simply obtained by multiplying it with $\Xi$. The properties of $F^\Pi$ and $\Xi F^\Pi$ are summarized in the
following lemma. 

\begin{lemma}
\label{lem:FDgamma} Given $f\in C_b^{2M+3}([0,T]\times \mathbb{R} )$, there
exists $N>0$ such that, for all $\gamma$ $\in$ $(1/2+\kappa ,1) , $ 
\begin{align*}
\|F^\Pi\|_{\mathcal{D}_T^\gamma(\Gamma)} \lesssim \|\Pi\|_T^N , &\;\;\;\;
\|\Xi F^\Pi\|_{\mathcal{D}_T^{\gamma+|\Xi|}(\Gamma)} \lesssim \|\Pi\|_T^N \ .
\end{align*}
We have further for two given models $(\Pi,\Gamma)$ and $(\Pi^{\prime
},\Gamma^{\prime })$, 
\begin{align}  \label{eq:ConvergenceF}
{\vert\kern-0.25ex\vert\kern-0.25ex\vert F^\Pi; F^{\Pi^{\prime }} \vert\kern%
-0.25ex\vert\kern-0.25ex\vert}_{\mathcal{D}^\gamma_T(\Gamma);\mathcal{D}%
^\gamma_T(\Gamma^{\prime })} &\lesssim \left( \|\Pi\|_T^N + \|\Pi^{\prime
}\|_T^N\right) {\vert\kern-0.25ex\vert\kern-0.25ex\vert
(\Pi,\Gamma);(\Pi^{\prime },\Gamma^{\prime }) \vert\kern-0.25ex\vert\kern%
-0.25ex\vert}_{T}, \\
{\vert\kern-0.25ex\vert\kern-0.25ex\vert \Xi F^\Pi; \Xi F^{\Pi^{\prime }}
\vert\kern-0.25ex\vert\kern-0.25ex\vert}_{\mathcal{D}^{\gamma+|\Xi|}_T(%
\Gamma);\mathcal{D}^{\gamma+|\Xi|}_T(\Gamma^{\prime })} &\lesssim \left(
\|\Pi\|_T^N + \|\Pi^{\prime }\|_T^N\right) {\vert\kern-0.25ex\vert\kern%
-0.25ex\vert (\Pi,\Gamma);(\Pi^{\prime },\Gamma^{\prime }) \vert\kern%
-0.25ex\vert\kern-0.25ex\vert}_{T} \ ,
\end{align}
where the proportionality constants are, in particular, uniform over all $f$ with bounded $C^{2M+3}$-norm.
\end{lemma}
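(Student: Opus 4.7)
The plan is to exploit the explicit structure of $F^\Pi$ as the truncated Taylor expansion of $f$ around $\hat W_s := \mathcal R \mathcal K \Xi(s) = \langle \mathcal K \Xi(s), \mathbf 1\rangle$, so that all required estimates reduce to classical Taylor remainders combined with the Hölder regularity of $\hat W$ encoded in the model norm. A preliminary step is the pointwise bound $|\hat W_t - \hat W_s| \lesssim \|\Pi\|_T \, |t-s|^{H-\kappa}$, which follows from admissibility and the model bound \eqref{eq:ModelDefinition2} on $\mathcal I(\Xi)$ tested against a suitable smooth bump. The coefficients $|F^\Pi(s)|_{m(H-\kappa)} = |\partial_1^m f(\hat W_s, s)|/m!$ are then bounded uniformly by $\|f\|_{C^M_b}$, so only the Hölder part of the $\mathcal D_T^\gamma$-norm needs work.

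For the Hölder part, exploit that $\Gamma_{ts}$ is multiplicative and $\Gamma_{ts}\mathcal I(\Xi) = \mathcal I(\Xi) + (\hat W_t - \hat W_s)\mathbf 1$. A binomial expansion identifies the coefficient of $\mathcal I(\Xi)^k$ in $\Gamma_{ts} F^\Pi(s)$ as the order-$(M-k)$ Taylor polynomial of $y \mapsto \partial_1^k f(y,s)/k!$ centred at $\hat W_s$ and evaluated at $\hat W_t$. Subtracting the coefficient $\partial_1^k f(\hat W_t, t)/k!$ of $F^\Pi(t)$ and separating the change in the time argument of $f$, Taylor's theorem gives
\begin{align*}
|F^\Pi(t) - \Gamma_{ts}F^\Pi(s)|_{k(H-\kappa)} \lesssim \|f\|_{C^{M+1}_b} \, |\hat W_t - \hat W_s|^{M-k+1} + \|\partial_2 \partial_1^k f\|_\infty \, |t-s|.
\end{align*}
Plugging in the preliminary bound on $\hat W$ yields $\lesssim (\|\Pi\|_T^{M+1} + 1)\,|t-s|^{\gamma - k(H-\kappa)}$, valid as long as $\gamma \leq (M+1)(H-\kappa)$, which is consistent with the stated range of $\gamma$ in view of condition \eqref{eq:condonkappa}. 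The first claim then follows with $N = M + 1$. The bound on $\Xi F^\Pi$ is structurally identical: $\Xi$ is $\Gamma$-invariant, and its only effect is to shift every homogeneity by $|\Xi|$, so the same combinatorial identity controls $|\Xi F^\Pi(t) - \Gamma_{ts}(\Xi F^\Pi)(s)|_{|\Xi|+k(H-\kappa)}$.

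For the convergence bound, I would decompose
$$\bigl(F^\Pi(t) - \Gamma_{ts}F^\Pi(s)\bigr) - \bigl(F^{\Pi'}(t) - \Gamma'_{ts}F^{\Pi'}(s)\bigr)$$
into a telescoping sum of three types of terms: (i) differences of $f$-derivatives evaluated at $\hat W^\Pi_\bullet$ versus $\hat W^{\Pi'}_\bullet$, estimated by the mean value theorem using $|\hat W^\Pi - \hat W^{\Pi'}| \lesssim \vertiii{(\Pi,\Gamma);(\Pi',\Gamma')}_T$ (which in turn follows from the same admissibility argument applied to the model-distance instead of the model); (ii) differences of the binomial weights $(\hat W^\Pi_t - \hat W^\Pi_s)^j - (\hat W^{\Pi'}_t - \hat W^{\Pi'}_s)^j$ handled via $a^j - b^j = (a-b)\sum_i a^i b^{j-1-i}$, which again produces a factor of the model distance; and (iii) Taylor remainders exactly as in the first part. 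Each mean-value step consumes an additional derivative of $f$, and applying the argument once more to the $\Xi$-lifted version is what makes $C_b^{2M+3}$ sufficient and uniformity in $f$ manifest.

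The main obstacle is expected to be the bookkeeping in step (iii): one must track carefully which individual terms absorb the factor $\vertiii{\cdot}_T$ (at most once each), which carry pure polynomial factors $\|\Pi\|_T^i \|\Pi'\|_T^j$, and which contribute Hölder gains $|t-s|^\alpha$, so that the final bound organises itself into the advertised form $(\|\Pi\|_T^N + \|\Pi'\|_T^N) \, \vertiii{\cdot}_T$ with a single exponent $N$ depending only on $M$. Once this combinatorial arrangement is set up, the remaining steps are a routine if tedious expansion.
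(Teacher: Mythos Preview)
Your approach is correct and self-contained, but the paper takes a much shorter route: it simply observes that $F^\Pi$ is the composition, in the sense of \cite[Sec.~4.2]{Hai14}, of the smooth function $f$ with the modelled distributions $\mathcal K\Xi \in \mathcal D_T^\infty$ (from Lemma~\ref{lem:kerneleasy}) and $s\mapsto s\mathbf 1$, and then invokes Hairer's general composition theorem \cite[Thm~4.16]{Hai14} directly, noting that the polynomial dependence on $\|\Pi\|_T$ is implicit in Hairer's proof.

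Your explicit Taylor-remainder computation is precisely the specialization of Hairer's proof of that theorem to this particular structure, so the two arguments coincide at the level of ideas. What your route buys is transparency: the role of the $C_b^{2M+3}$ regularity and the value of the exponent $N$ become visible, and no external reference is needed. What the paper's route buys is brevity and the signal that nothing non-standard is going on.

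One small point: your claim that ``$\gamma \le (M+1)(H-\kappa)$ is consistent with the stated range in view of \eqref{eq:condonkappa}'' is slightly optimistic. Condition~\eqref{eq:condonkappa} only guarantees $(M+1)(H-\kappa) > 1/2+\kappa$, not that it reaches~$1$, so strictly speaking your argument (and equally the paper's, via the same constraint in Hairer's theorem) covers $\gamma \in (1/2+\kappa,\, (M+1)(H-\kappa)\wedge 1)$. This is harmless for the applications, which only need $\gamma$ just above $1/2+\kappa$.
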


\begin{proof}
  The map $F^\Pi$ is simply the composition (in the sense of
  \cite[Sec. 4.2]{Hai14}) of the function $f$ with the, thanks to the previous lemma, modelled distributions $%
  \mathcal{K} \Xi$ and $s \mapsto s \mathbf{1}$. The result then follows from
  \cite[Thm 4.16]{Hai14} (polynomial dependence in $\|\Pi\|_T$
  is not stated there but is clear from the proof).
\end{proof}

\begin{remark} \label{rem:unbdf}
In the case when $f$ $\in$ $C^{2M+3}$ but with no global bounds, the result
still holds since we only consider the values of $f$ on the range of the
continuous function $\mathcal{R} \mathcal{K} \Xi$ (which is bounded by some $%
R\geq 0$). The resulting bounds then depend linearly on $\|f\|_{C^{2M+3}(B_R
\times [0,T])}$. 
\end{remark}

In the case of the It\^o model $(\Pi,\Gamma)$ (resp. the approximating
renormalized models $(\hat{\Pi}^\varepsilon,\Gamma^\varepsilon)$) we simply
denote $F^\Pi$ by $F$ (resp. $F^\varepsilon$). We are then allowed to apply
Hairer's reconstruction Theorem \ref{thm:Reconstruction}. Note that since we
have two models we have two reconstruction operators $\mathcal{R}$ and $%
\mathcal{R}^\varepsilon$. The objects $\mathcal{R}^{(\varepsilon)}\Xi
F^{(\varepsilon)}$ can be written down explicitely.

\begin{lemma}
\label{lem:ReconstructionIdentity}We have (a.s.) 
\begin{eqnarray*}
\mathcal{R}F\Xi(\varphi) & = & \int\,\varphi(t)\,f(\hat{W}(t),t) \,\mathrm{d}
W(t)\,, \\
\mathcal{R}^{\varepsilon}F^{\varepsilon}\Xi(\varphi) & = &
\int\,\varphi(t)\,f(\hat{W}^{\varepsilon}(t),t)\, \mathrm{d} 
W^\varepsilon(t)-\int\,\mathscr{K}^{\varepsilon}(t,t)\partial_{1}f(\hat{W}%
^\varepsilon(t),t)\varphi(t)\,\mathrm{d}  t\,.
\end{eqnarray*}
\end{lemma}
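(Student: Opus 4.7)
The strategy is to appeal to the uniqueness clause of the reconstruction theorem (Theorem~\ref{thm:Reconstruction}). By Lemma~\ref{lem:FDgamma}, we have $\Xi F^{(\varepsilon)} \in \mathcal{D}_T^{\gamma+|\Xi|}$ for some $\gamma \in (1/2+\kappa, 1)$ with $\gamma + |\Xi| > 0$. It therefore suffices to check that the explicit distributions $g$ and $g^\varepsilon$ appearing on the right-hand sides of the stated identities satisfy the local bound $|(g - \Pi_s \Xi F(s))(\varphi_s^\lambda)| \lesssim \lambda^{\gamma + |\Xi|}$ uniformly in $s \in [0,T]$, $\lambda \in (0,1]$ and $\varphi \in C_c^1(B(0,1))$, since uniqueness then forces $g = \mathcal{R} \Xi F$ a.s.\ (and similarly for $g^\varepsilon$).

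For the It\^o model, the integrand $\varphi(t) f(\hat W(t), t)$ is adapted in $t$, so $g(\varphi) = \int \varphi f(\hat W, t)\, dW = \int \varphi f(\hat W, t) \diamond dW$. I would then substitute the Skorohod representation of Lemma~\ref{lem:ReshapingLimit} into $\Pi_s \Xi F(s)(\varphi) = \sum_{m=0}^M \tfrac{1}{m!} \partial_1^m f(\hat W(s), s) \Pi_s \Xi \mathcal{I}(\Xi)^m(\varphi)$, and pull each random prefactor inside the Skorohod integral via the Malliavin product rule $F \cdot \int h \diamond dW = \int F h \diamond dW + \int (D_u F)\, h(u)\, du$, using $D_u \hat W(s) = K(s-u)\mathbf{1}_{0<u<s}$. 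A telescoping computation then shows that all Malliavin boundary terms cancel against the kernel corrections from Lemma~\ref{lem:ReshapingLimit} except for the $m=M$ truncation-edge term, yielding
\begin{align*}
g(\varphi) - \Pi_s \Xi F(s)(\varphi) &= \int \varphi(t)\, R(s, t) \diamond dW(t) \\
&\quad - \tfrac{1}{M!} \partial_1^{M+1} f(\hat W(s), s) \int \varphi(t) K(s-t) (\hat W(t) - \hat W(s))^M\, dt,
\end{align*}
with truncated Taylor remainder $R(s,t) := f(\hat W(t), t) - \sum_{m=0}^M \tfrac{1}{m!} \partial_1^m f(\hat W(s), s) (\hat W(t) - \hat W(s))^m$.

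To bound these two pieces, splitting the Taylor remainder according to each variable of $f$ and using Lemma~\ref{lem:ConvergenceW} yields $\|R(s,t)\|_{L^p} \lesssim |t-s|^{\rho}$ with $\rho := \min\{(M+1)(H-\kappa), 1\}$, together with an analogous bound for the Malliavin derivative. Meyer's inequality for Skorohod integrals (cf.~\cite{Nua13}) then gives $\|\int \varphi_s^\lambda R \diamond dW\|_{L^p} \lesssim \lambda^{\rho - 1/2}$, while the residual kernel term is bounded by $\lambda^{(M+1)(H-\kappa) + \kappa - 1/2}$ by direct integration against the singularity $K(s-t) = \sqrt{2H}(s-t)^{H-1/2}$. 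Condition~\eqref{eq:condonkappa} ensures the exponent $\gamma \in (1/2+\kappa, 1)$ from Lemma~\ref{lem:FDgamma} can be chosen so that both exponents strictly exceed $\gamma + |\Xi|$. A standard Kolmogorov/chaining argument over dyadic $\lambda = 2^{-n}$ and a countable dense set of points $s$ then upgrades these $L^p$ bounds to the almost-sure local bound with random constant, and uniqueness in Theorem~\ref{thm:Reconstruction} identifies $\mathcal{R}\Xi F$ with $g$.

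The renormalized approximating model is handled analogously, using Lemma~\ref{lem:ReshapingApproximation} in place of Lemma~\ref{lem:ReshapingLimit}: the $\mathscr{K}^\varepsilon(t,t)$ diagonal terms appearing there are precisely those absorbed by the renormalization built into $\hat\Pi^\varepsilon$ (Theorem~\ref{prop:ModelConvergence}), so that only the $\mathscr{K}^\varepsilon(s,t)$ off-diagonal kernel remains, and it is again cancelled by the Malliavin-type pull-through. The extra $\mathscr{K}^\varepsilon(t,t)\,\partial_1 f$ correction appearing in the statement of $\mathcal{R}^\varepsilon F^\varepsilon \Xi(\varphi)$ arises as the price of converting the resulting Wick integral $\int \varphi f(\hat W^\varepsilon, t) \diamond dW^\varepsilon$ into the pointwise expression $\int \varphi f(\hat W^\varepsilon, t)\, dW^\varepsilon$ via the identity $f(X)Y = f(X) \diamond Y + f'(X)\,\mathbb{E}[XY]$ applied to $X = \hat W^\varepsilon(t)$, $Y = \dot W^\varepsilon(t)$. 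The principal technical obstacle throughout is the Malliavin bookkeeping: one must verify carefully that all but a single harmless residual kernel correction cancel with the pull-through terms, and that the residual is of sufficiently high order in $|t-s|$ to be absorbed into the reconstruction estimate.
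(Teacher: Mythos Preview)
Your approach is correct in outline, but the paper takes a considerably lighter route that avoids both the Malliavin bookkeeping and the Kolmogorov/chaining step.

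The paper first restricts to Haar test functions $\varphi = \mathbf{1}_{[0,1)}$, so that $\varphi_s^\lambda$ is supported on $[s,s+\lambda)$. On this one-sided support every object is adapted: $\Pi_s\Xi\mathcal{I}(\Xi)^m(\varphi_s^\lambda)=\lambda^{-1}\mathbb{W}^m(s,s+\lambda)$ is a genuine It\^o integral, and the $\mathcal{F}_s$-measurable prefactors $\partial_1^m f(\hat W(s),s)$ can be pulled inside without any Skorohod/Malliavin product rule. The difference $g(\varphi_s^\lambda)-\Pi_s\Xi F(s)(\varphi_s^\lambda)$ is therefore simply $\lambda^{-1}\int_s^{s+\lambda} R(s,t)\,dW(t)$ with $R$ the Taylor remainder, and It\^o isometry gives the $L^2$ bound directly---no telescoping of kernel corrections is needed, because on $t>s$ the term $K(s-t)$ in Lemma~\ref{lem:ReshapingLimit} vanishes. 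The extension from Haar wavelets to general $\varphi$ then comes for free from the Besov argument of Lemma~\ref{lem:BesovBounds}. For the $\varepsilon$-case the same works verbatim with Lemma~\ref{lem:ApproximateIto} replacing It\^o isometry.

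Second, the paper never upgrades the $L^2$ bound to an almost-sure one. Instead it fixes a smooth test function $\psi$, mollifies it to $\psi^\delta=\psi\ast\eta^\delta$, writes $\mathcal{R}F\Xi(\psi^\delta)-g(\psi^\delta)=\int\psi(x)\big[\mathcal{R}F\Xi(\eta_x^\delta)-g(\eta_x^\delta)\big]\,dx$, inserts $\pm\Pi_x\Xi F(x)(\eta_x^\delta)$ inside the bracket, and applies the $L^2$ local bound together with the reconstruction estimate \eqref{eq:ReconstructionEstimate1} to send $\delta\to 0$ in $L^2(\mathbb{P})$. Since both $\mathcal{R}F\Xi(\psi^\delta)$ and $g(\psi^\delta)$ also converge to $\mathcal{R}F\Xi(\psi)$ and $g(\psi)$ respectively, equality holds for each fixed $\psi$ almost surely, hence a.s.\ as distributions by separability.

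Your route---general test functions, Skorohod representation, Malliavin pull-through, Meyer inequality, then Kolmogorov---works too and has the virtue of making the cancellation mechanism with the kernel terms of Lemmas~\ref{lem:ReshapingLimit}--\ref{lem:ReshapingApproximation} fully explicit. The paper's route buys brevity by exploiting the freedom to choose one-sided test functions and by trading pathwise uniqueness for an $L^2$ limiting argument.
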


\begin{proof}
The proof is in the appendix.
\end{proof}

If we take $\varphi=\mathbf{1}_{[0,T)}$ we obtain $\mathcal{R}F\Xi(\mathbf{1}%
_{[0,T)})=\int_0^T\,\,f(\hat{W}(t),t) \,\mathrm{d}  W(t)$, so that it is
natural to choose 
$ {\tilde{\mathscr{I}}^\eps_f(T)} =\mathcal{R}^\varepsilon
\Xi F^{\varepsilon}(\mathbf{1}_{[0,T)})$ as an approximation. However, note
that the key property of the reconstruction operator $\mathcal{R}%
^{(\varepsilon)}$ is that it is locally close to the corresponding model $%
\Pi^{(\varepsilon)}$ so that we have in fact two natural approximations:

\begin{definition}
\label{def:DiscreteIntegral}For $F,\,F^{\varepsilon}$ as in Lemma \ref%
{lem:FDgamma} and $t\geq0$ we set 
\begin{align*}
{\tilde{\mathscr{I}}^\eps_f(t)} & := \mathcal{R}^{\varepsilon}\Xi F^{\varepsilon}(%
\mathbf{1}_{[0,t]})= \int_{0}^{t}\,f(\hat{W}^{\varepsilon}(r),r)\, \mathrm{d}
W^\varepsilon(r)-\int_{0}^{t}\mathscr{C}^{\varepsilon}(r)\partial_{1}f(%
\hat{W}^\varepsilon(r),r)\, \mathrm{d}  r\,.
\end{align*} 

For a (fixed) partition $\{[t_{l}^{\varepsilon},t_{l+1}^{\varepsilon})\}$ of $[0,t)$
with $\left|t_{l+1}^{\varepsilon}-t_{l}^{\varepsilon}\right|\lesssim%
\varepsilon$ we further set 
\begin{align*}
{\tilde{\mathscr{J}}^\eps_{f,M}(t)} & = \sum_{[t_{l}^\varepsilon,t_{l+1}^\varepsilon)} \hat{\Pi}%
_{t_{l}}^{\varepsilon} \Xi F_{t_{l}}^{\varepsilon}(\mathbf{1}_{[t_{l}^{\varepsilon},t_{l+1}^{%
\varepsilon})})  \label{eq:IDiscrete} \\
& = \sum_{[t_{l}^\varepsilon,t_{l+1}^\varepsilon)}\sum_{m=0}^{M}\frac{1}{m!}\partial_{1}^{m}f(\hat{W}%
^{\varepsilon}(t_{l}^\varepsilon),t_{l}^\varepsilon)\int_{t_{l}^\varepsilon}^{t_{l+1}^\varepsilon}\left(\hat{W}%
^{\varepsilon}(r)-\hat{W}^{\varepsilon}(t_{l}^\varepsilon)\right)^{m} \mathrm{d} 
W^\varepsilon(r) -  \notag \\
& -\sum_{m=1}^{M}\frac{1}{(m-1)!}\partial_{1}^{m}f(\hat{W}%
^{\varepsilon}(t_{l}^\varepsilon),t_{l}^\varepsilon)\int_{t_{l}^\varepsilon}^{t_{l+1}^\varepsilon}\mathscr{C}%
^{\varepsilon}(r)\,\left(\hat{W}^{\varepsilon}(r)-\hat{W}%
^{\varepsilon}(t_{l}^\varepsilon)\right)^{m-1}\,\mathrm{d}  r \,.  \notag
\end{align*}
\end{definition}
We might drop the indices $f$ and $f,M$ on $\tilde{\mathscr{I}}^\varepsilon$ and $\tilde{\mathscr{J}}^\varepsilon$ if there is no risk of confusion. 

The following theorem, which can be seen as the fundamental theorem of our
regularity structure approach to rough pricing shows that these
approximations do both converge.

\begin{theorem} 
\label{thm:ConvergenceDiscreteIntegrals} Fix $T>0$. For $f$ smooth, bounded with bounded derivatives, and $\tilde{\mathscr{I}}^\eps_f, \ \tilde{\mathscr{J}}^\eps_{f,M}$ as in Definition %
\ref{def:DiscreteIntegral} we have 

\begin{enumerate}[(i)]

\item for any $\delta\in (0,1)$ and any $p< \infty$ there exists $C$ such that
\begin{equation}
\left\|
\sup_{t\in[0,T]}\left|{\tilde{\mathscr{I}}^\eps_f(t)}-\int_{0}^{t}f(\hat{W}%
(r),r)\mathrm{d}  W(r)\right|   \right\|_{L^p}   \le C   \varepsilon^{\delta H}\,,
\label{eq:ConvergenceContinuousApproximation}
\end{equation}

\item for every $\delta \in (0,1)$ we can pick $M=M(\delta, H)$ large enough, such that, for any $p< \infty$ there exists $C$ such that
\begin{equation}
\left\|
\sup_{t\in[0,T]}\left|{\tilde{\mathscr{J}}^\eps_{f,M}(t)}-\int_{0}^{t}f(\hat{W}(r),r)%
\mathrm{d}  W(r)\right|   \right\|_{L^p}   \le C \varepsilon^{\delta H}\,.
\label{eq:ConvergenceDiscreteApproximation}
\end{equation}
\end{enumerate}
\end{theorem}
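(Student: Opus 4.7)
Both statements reduce to Hairer's reconstruction theorem (Theorem~\ref{thm:Reconstruction}). By Lemma~\ref{lem:ReconstructionIdentity}, the target integral is exactly $\mathcal{R}\Xi F(\mathbf{1}_{[0,t]})$ while $\tilde{\mathscr{I}}^\varepsilon_f(t) = \mathcal{R}^\varepsilon \Xi F^\varepsilon(\mathbf{1}_{[0,t]})$. The key ingredients are the model-distance estimate $\|\vertiii{(\hat{\Pi}^\varepsilon,\hat\Gamma^\varepsilon);(\Pi,\Gamma)}_{T}\|_{L^p}\lesssim \varepsilon^{\delta'\kappa}$ from Theorem~\ref{prop:ModelConvergence}, the resulting modelled-distribution bounds of Lemma~\ref{lem:FDgamma}, and the Besov-type extension of the reconstruction estimates (Lemma~\ref{lem:BesovBounds}), which legitimizes testing against the indicator $\mathbf{1}_{[0,t]} \in \mathcal{B}^1_{1,\infty}$.

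\textbf{Part (i).} Fix $\gamma > 1/2 + \kappa$ (so that $\gamma + |\Xi| > 0$). For $0 \le t_1 < t_2 \le T$, view $\mathbf{1}_{[t_1,t_2]} = (t_2-t_1)\,\varphi^{t_2-t_1}_{(t_1+t_2)/2}$ with $\varphi = \mathbf{1}_{[-1/2,1/2]}$. Apply the Besov form of \eqref{eq:ReconstructionEstimate2} to control the reconstruction-error difference, and handle the remaining ``local'' piece
\begin{equation*}
(\hat{\Pi}^\varepsilon_{t_1} \Xi F^\varepsilon(t_1) - \Pi_{t_1}\Xi F(t_1))(\mathbf{1}_{[t_1,t_2]})
\end{equation*}
by direct expansion in the symbol basis: for each $\tau \in S$, $|(\Pi - \hat{\Pi}^\varepsilon)\Xi\tau(\mathbf{1}_{[t_1,t_2]})|\lesssim (t_2-t_1)^{1+|\Xi\tau|}\vertiii{\Pi;\hat\Pi^\varepsilon}_T$ and the coefficient difference $(F - F^\varepsilon)(t_1)_\tau$ is bounded via Lemma~\ref{lem:FDgamma}. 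Using Gaussian hypercontractivity to give uniform-in-$\varepsilon$ $L^q$ control of $\|\Pi\|_T, \|\hat\Pi^\varepsilon\|_T$, the worst $\tau = \mathbf{1}$ yields, for any $\delta' \in (0,1)$,
\begin{equation*}
\bigl\|(\mathcal{R}^\varepsilon \Xi F^\varepsilon - \mathcal{R}\Xi F)(\mathbf{1}_{[t_1,t_2]})\bigr\|_{L^p} \lesssim |t_2-t_1|^{1/2-\kappa}\,\varepsilon^{\delta'\kappa}.
\end{equation*}
Kolmogorov--Chentsov applied to the two-parameter field $(t_1, t_2) \mapsto (\mathcal{R}^\varepsilon \Xi F^\varepsilon - \mathcal{R}\Xi F)(\mathbf{1}_{[t_1, t_2]})$ converts this into the uniform-in-$t$ estimate \eqref{eq:ConvergenceContinuousApproximation}. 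Since $\kappa \in (0,H)$ and $\delta' \in (0,1)$ are free, choosing $\kappa$ close to $H$ and $\delta'$ close to $1$ yields the rate $\varepsilon^{\delta H}$ for any prescribed $\delta \in (0,1)$.

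\textbf{Part (ii).} By part (i) it suffices to bound
\begin{equation*}
\tilde{\mathscr{J}}^\varepsilon_{f,M}(t) - \tilde{\mathscr{I}}^\varepsilon_f(t) = \sum_{l} \bigl(\hat{\Pi}^\varepsilon_{t_l^\varepsilon}\Xi F^\varepsilon(t_l^\varepsilon) - \mathcal{R}^\varepsilon\Xi F^\varepsilon\bigr)(\mathbf{1}_{[t_l^\varepsilon, t_{l+1}^\varepsilon)}).
\end{equation*}
Each summand is a single-scale reconstruction error at scale $\lambda \sim \varepsilon$: by the Besov form of \eqref{eq:ReconstructionEstimate1}, and using that $\mathbf{1}_{[t_l^\varepsilon, t_{l+1}^\varepsilon)}$ contributes a mass factor $\sim \varepsilon$, it is bounded by $\varepsilon \cdot \varepsilon^{\gamma + |\Xi|}\|\hat\Pi^\varepsilon\|_T$. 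Summing over the $\sim T/\varepsilon$ partition intervals gives a total of order $\varepsilon^{\gamma - 1/2 - \kappa}\|\hat\Pi^\varepsilon\|_T$. Taking $M$ large so that $\gamma$ can be chosen close to $(M+1)(H-\kappa)$ (an evident extension of Lemma~\ref{lem:FDgamma} once $f \in C_b^{2M+3}$ and the truncation level of $\mathcal{T}^{(M)}$ are raised accordingly), together with $\kappa$ close to $H$, makes $\gamma - 1/2 - \kappa$ exceed $\delta H$ for any fixed $\delta \in (0,1)$, yielding \eqref{eq:ConvergenceDiscreteApproximation}.

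\textbf{Main obstacle.} The principal subtlety is the application of the reconstruction theorem to the discontinuous test function $\mathbf{1}_{[0,t]}$: the Besov extension of Lemma~\ref{lem:BesovBounds} is essential, and the ``local'' difference $\hat\Pi^\varepsilon_s \Xi F^\varepsilon(s) - \Pi_s\Xi F(s)$ carries the naive blow-up $\lambda^{|\Xi|} = \lambda^{-1/2-\kappa}$ at small scale, which is exactly cancelled by the mass factor $\lambda$ of the indicator, producing the integrable H\"older exponent $1/2 - \kappa$ needed for Kolmogorov--Chentsov. All remaining work is bookkeeping of the parameters $\kappa, \gamma, M$ so that the combined exponent reaches $\delta H$, and upgrading pointwise $L^p$ bounds to the uniform-in-$t$ statement via Gaussian hypercontractivity and a standard Garsia--Rodemich--Rumsey / Kolmogorov argument.
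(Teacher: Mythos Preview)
Your proposal is correct and follows essentially the same route as the paper: identify $\tilde{\mathscr{I}}^\varepsilon_f(t)-\int_0^t f(\hat W,r)\,dW = (\mathcal{R}^\varepsilon\Xi F^\varepsilon-\mathcal{R}\Xi F)(\mathbf{1}_{[0,t]})$, split into the local piece $(\hat\Pi^\varepsilon_s\Xi F^\varepsilon(s)-\Pi_s\Xi F(s))$ plus the reconstruction remainder, invoke Lemma~\ref{lem:BesovBounds} for the indicator, and feed in Theorem~\ref{prop:ModelConvergence} and Lemma~\ref{lem:FDgamma}; for~(ii), bound $\tilde{\mathscr{J}}^\varepsilon-\tilde{\mathscr{I}}^\varepsilon$ cell-by-cell via~\eqref{eq:ReconstructionEstimate1}. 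The only notable difference is that the paper centers once at $s=0$ with $\lambda=t\le T\le 1$ and observes that the resulting pathwise bound is \emph{already} uniform in $t$ (model and reconstruction norms are suprema over $s$ and $\lambda$, and $F^\varepsilon(0)=F(0)$), so your Kolmogorov--Chentsov step is an unnecessary detour.
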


\begin{remark} With regard to (i): although ${\tilde{\mathscr{I}}^\eps_f(t)}$ does not depend on any choice of $M$, and nor does its (It\^o) limit, 
the choice of $M$ affects the entire regularity structure and so, implicitly also the reconstruction operator $\mathcal{R}^{\varepsilon}$
used in the definition of $\tilde{\mathscr{I}}^\eps_f$, as well as the modelled distribution $F^\eps$. The latter, in turn, requires $f \in C^M$ 
for the construction to make sense. If $\delta$ is chosen arbitrarily close to one, $f$ needs to have derivatives of arbitrary order, hence our smoothness assumption.
\end{remark} 

\begin{remark} ($f$ of exponential form; \cite{GJR14}) By an easy localization argument one shows that for $f$ smooth (but without any further bounds) ones still has 
$$
   \sup_{\varepsilon \in (0,1]} \mathbb{P} \left(   \sup_{t\in[0,T]}\left|{\tilde{\mathscr{I}}^\eps_f(t)}-\int_{0}^{t}f(\hat{W}%
(r),r)\mathrm{d}  W(r)\right|   \le C   \varepsilon^{\delta H}   \right) \to 0 
$$ with $C \to \infty$. 
The original rough vol model due to \cite{GJR14} makes a point that $f$ should be of exponential form. Now, the result with $L^p$-estimates
still holds since we only consider the values of $f$ on the range of the
continuous function $\mathcal{R} \mathcal{K} \Xi$ (which is bounded by some $%
R\geq 0$). As pointed out in Remark \ref{rem:unbdf}, the bounds then depend linearly on $\|f\|_{C^{M+2}(B_R
\times [0,T])}$. Since, for us, $(\Pi,\Gamma)$ is always a Gaussian
model, $\mathcal{R} \mathcal{K} \Xi$ is a Gaussian process (say, $\hat W$ or $\hat{W}^\varepsilon)$ hence
we have (Fernique) Gaussian concentration for $\sup_{t \in [0,T]}  |\mathcal{R} \mathcal{K} \Xi (t)  | $.
So, for instance if $f$ and its derivatives have exponential growth we do have the $L^p$ bounds of the above theorem, for all $p<\infty$.
This remark justifies in particular the choice $f(x) = \exp (x)$ and $p=2$ in the numerical discussion of Section \ref{sec:ChristianBen}. 
\end{remark}

\begin{proof}
Without loss of generality $T\leq 1$, otherwise split $[0,T]$ in
subintervals. Let us show \eqref{eq:ConvergenceContinuousApproximation}.  
\begin{align*}
&{\tilde{\mathscr{I}}^\eps_f(t)}-\int_0^t f(\hat{W}(r),r) \mathrm{d}  W(r)=(%
\mathcal{R}^\varepsilon (F^\varepsilon\Xi)-\mathcal{R}(F\Xi))(\mathbf{1}%
_{[0,t]}) \\
&=t \left(\hat{\Pi}^\varepsilon_{0} \Xi F^\varepsilon(0)-\Pi_0 \Xi
F(0))\right)( t^{-1}\mathbf{1}_{[0,t]}) \\
&+ t \left(\mathcal{R}^\varepsilon \Xi F^\varepsilon-\hat{\Pi}%
^\varepsilon_{0} \Xi F^\varepsilon(0) -(\mathcal{R} \Xi F-\Pi_0 \Xi
F(0))\right)( t^{-1}\mathbf{1}_{[0,t)})\,.
\end{align*}
We then obtain the rate $\varepsilon^{\delta \kappa},\delta\in (0,1)$ using
Theorem \ref{prop:ModelConvergence}, Lemma \ref{lem:FDgamma} and %
\eqref{eq:ModelDistance} for the first term and also Theorem \ref%
{thm:Reconstruction} for the second term. 
Letting $\kappa \uparrow H$ and $M\uparrow \infty$ our total rate can be
chosen arbitrary close to $H$.

To obtain the second estimate we can bound 
$\tilde{\mathscr{I}}^\eps_f (t) -  \tilde{\mathscr{J}}^\eps_{f,M} (t)$
with the first inequality in Theorem \ref%
{thm:Reconstruction}.
\end{proof}

\bigskip

{\bf Non-constant vs. constant renormalization}   

If $\delta^\varepsilon$ comes from a mollifier (cf. Example \ref{rem:Mollifier}) the renormalization
$  \mathscr{C}^\varepsilon = \mathscr{K}^\varepsilon(\cdot,\cdot)$ that was applied in Theorem \ref{prop:ModelConvergence} 
and thus in Definition \ref{def:DiscreteIntegral}
is a constant, which is the familiar concept one encounters in the study of
singular SPDE \cite{Hai13, Hai14, CH16}. 
If $\delta^\varepsilon$ comes from wavelets such as the Haar
basis, $\mathscr{K}^\varepsilon(\cdot,\cdot)$ is usually not constant but a
periodic function with period $\varepsilon$. Thus we see that our analysis
gives rise to a ``non-constant renormalization''. It is natural to ask if one can do with 
constant renormalization after all. For the sake of argument, consider $  \mathscr{C}^\varepsilon $, periodic with period $\varepsilon$, with mean
$$
     C_\eps = \frac{1}{\eps} \int_0^\eps \mathscr{C}^\varepsilon (t) dt .
$$
From Lemma \ref{lem:ApproximateVolterraEstimate} it follows that $  \mathscr{C}^\varepsilon $ (and its mean) are bounded by $\eps^{H-1/2}$, uniformly in $t$. Putting all this together
it easily follows that 
$ | \langle \mathscr{C}^\varepsilon - C_\eps, \varphi  \rangle | \lesssim \eps^{\alpha + H -1/2}$, 
uniformly over all $\varphi$ bounded in $C^\alpha$, with convergence to zero when $\alpha  > 1/2-H$. 
As a consequence, taking $\varphi (t) = f (\hat W^\eps)$, for smooth $f$, we clearly can apply this with any $\alpha < H$. 
Hence, by equating the constraints on $\alpha$, we arrive at $H > 1/4$. 
The practical consequence then is, with focus on the convergence stated in part (i) of 
Theorem \ref{eq:ConvergenceContinuousApproximation} that we can indeed replace non-constant renormalization by a constant, 
however at the prize of restricting to $H>1/4$ and with an according loss on the convergence rate. 
Interestingly, our numerical simulation suggest that no loss occurs and constant renormalization works for any $H>0$. While we have refrained from investigation this (technical) point further,
\footnote{Some computations led us to believe that this question can be settled with the aid of mixed $(1,\rho)$-variation of the covariance function of the Volterra process, cf. \cite{FGGR16}, which we expected to hold uniformly over approximation. However the amount of work seems in no relation to the main theme of this article.}  we can understand the mechanism at work by looking at the following toy example: Consider the Ito-integral $\int_0^1 W^h dW$ where $W^H$ is a fBM, but now with Hurst parameter $H>1/2$, built, say, as Volterra process over $W$. Using Young integration theory, one can give a pathwise argument that shows that Riemann-Stieltjes approximation converge a.s. (with vanishing rate as $H \to 1/2^+$). However, we know from stochastic theory (It\^o integration) that this convergence works in $L^2$ (and then in probability) for any $H>0$. We would thus expect that, when $H \in (0,1/4]$, constant renormalization is still valid, but now the difference only vanishes in mean-square sense (which is what we did in the numerics section).

\subsection{The case of the Haar basis}
\label{subsec:haar}

The following special case of the approximations above to $\int_0^t f(\hat{W}%
(r),r) \mathrm{d}  W(r)$ is of particular interest for our purposes. We here
collect some more concrete formulas that arise in this case.

Let $\varepsilon=2^{-N}$, $\phi:=\mathbf{1}_{[0,1)}$ and $\phi_{l,N}=2^{N/2}
\phi(2^N \cdot-l),l\in\mathbb{Z} $ and the corresponding $\delta^\varepsilon$
coming from this wavelet is then for $x,y\in\mathbb{R} $. 
\begin{align*}
\delta^\varepsilon(x,y)=\sum_{l\in \mathbb{Z} }
\phi_{l,N}(x)\phi_{l,N}(y)=2^{N} \mathbf{1}_{[\lfloor x 2^N \rfloor
2^{-N},(\lfloor x 2^N \rfloor+1) 2^{-N})}(y)
\end{align*}
The mollified Volterra-kernel \eqref{eq:MollifiedVolterraKernel} then takes
the form 
\begin{align*}
\mathscr{K}^\varepsilon(u,v) &=\int_0^\infty \int_0^\infty
\delta^\varepsilon(v,x_1) \delta^\varepsilon(x_1,x_2) K(u-x_2) \mathrm{d} 
x_1 \mathrm{d}  x_2 \\
&= \sqrt{2H} \cdot 2^{N} \int_{[\lfloor v 2^N \rfloor 2^{-N},(\lfloor v 2^N
\rfloor+1) 2^{-N}\wedge u )} |u-x|^{H-1/2} \mathbf{1}_{\lfloor v 2^N \rfloor
2^{-N}\leq u} \,\mathrm{d}  x \\
&=\frac{\sqrt{2H}}{1/2+H} 2^N \times \\
&\times \left( |u-\lfloor v 2^N \rfloor 2^{-N}|^{1/2+H}-|u-(\lfloor v 2^N
\rfloor+1) 2^{-N}\wedge u )|^{1/2+H} \right) \mathbf{1}_{\lfloor v 2^N
\rfloor 2^{-N}\leq u}\,.
\end{align*}
A special role is played by \textit{diagonal function} as a renormalization,
\begin{align}  \label{eq:RenormalizationHaarBasis}
\mathscr{C}^\varepsilon(t) = \mathscr{K}^\varepsilon(t,t) =\frac{\sqrt{2H}\,2^N}{1/2+H} |t-\lfloor t
2^N \rfloor 2^{-N}|^{1/2+H}\,.
\end{align}
We have moreover 
\begin{align*}
\hat{W}^\varepsilon(t) &=\int_0^t K(t-r) \,\mathrm{d} 
W^\varepsilon(r)=\sum_{l=0}^\infty Z_l \int_0^t K(t-r) \phi_{k,N}(r) \mathrm{%
d}  r \\
&=\sum_{l=0}^\infty 2^{-N/2} \mathscr{K}^\varepsilon(t,l2^{-N}) Z_l
=\sum_{l=0}^{\lfloor t 2^{N} \rfloor} 2^{-N/2} \mathscr{K}%
^\varepsilon(t,l2^{-N}) Z_l\,,
\end{align*}
where $Z_l=\langle \dot{W}, \phi_{l,N} \rangle$ are i.i.d. $N(0,1)$
variables. As approximation we can finally take $\mathscr{I}_f^{\varepsilon} (t)$ from
Definition \ref{def:DiscreteIntegral} with partition $\{[t_l,t_{l+1})\}=%
\{[l2^{-N},(l+1)2^{-N} \wedge t)\}$ which gives us 
\begin{align*}
{\tilde{\mathscr{J}}^\eps_{f,M}(t)} &= \sum_{l=0}^{\lceil t 2^N \rceil-1}\sum_{m=0}^{M}%
\frac{1}{m!}\partial_{1}^{m}f(\hat{W}^{\varepsilon}(t_{l}),t_{l}) 2^{N/2}
Z_l \int_{t_{l}}^{t_{l+1}}\,\left(\hat{W}^{\varepsilon}(r)-\hat{W}%
^{\varepsilon}(t_{l})\right)^{m}\,\mathrm{d}  r - \\
& -\sum_{m=1}^{M}\frac{1}{(m-1)!}\partial_{1}^{m}f(\hat{W}%
^{\varepsilon}(t_{l}),t_{l+1})\int_{t_{l}}^{t_{l+1}}\mathscr{C}
^{\varepsilon}(r)\,\left(\hat{W}^{\varepsilon}(r)-\hat{W}%
^{\varepsilon}(t_{l})\right)^{m-1}\,\mathrm{d}  r
\end{align*}
and 
\begin{align*}
{\tilde{\mathscr{I}}^\eps_f(t)}=\sum_{l=0}^{\lceil t 2^N \rceil -1}
\int_{t_l}^{t_{l+1}} [2^{N/2} Z_l \cdot f(\hat{W}^\varepsilon(r),r) \,%
\mathrm{d}  r -\mathscr{C}^\varepsilon(r) \,\partial_1 f(\hat{W}^
\varepsilon(r),r) ]\,\mathrm{d}  r \,.
\end{align*}
As explained at the end of the last section, $\mathscr{C}^\varepsilon(r)$ in these formulas could be replaced by its local mean, the constant
\begin{equation*}
2^N \int_{0}^{2^{-N}} \mathscr{C}^\varepsilon(r) \, \mathrm{d}  r = \frac{%
\sqrt{2H}}{(H+1/2)(H+3/2)} 2^{N(1/2-H)}\,.
\end{equation*}

\section{The full rough volatility regularity structure}  
\label{sec:full-rough-vol-reg-struc}

\subsection{Basic setup}
\label{sec:basic-setup-full-rough-vol-reg-struc}

We want to add an independent Brownian motion, so that we take an additional
symbol $\bar{\Xi}$. We again fix $M$ and define a (larger) collection of symbols $\bar{S}$, with $S \subset \bar{S}$, and then

\begin{align}
\bar{\mathcal{T}} = \bigoplus_{\tau \in \bar{S}} \mathbb{R}\tau \cong \mathcal{T} + \left\langle \{ \bar{\Xi}, \bar{\Xi} \mathcal{%
I}({\Xi}),\ldots,\bar{\Xi} \mathcal{I}(\Xi)^M \}\right\rangle .
\end{align}
Again we fix $|\bar{\Xi}| =-1/2-\kappa$ and the homogeneity of the other
symbols are defined multiplicatively as before.

Also as before, we set $\hat{W}%
_t=\int_0^t K(s,t) dW_s$ with $K(s,t) = \sqrt{2H} |t-s|^{H-1/2}\mathbf{1}_{t>s}$, where $W$ and also $\bar{W}$ are independent Brownian motions.

We extend the canonical model $(\Pi,\Gamma)$ to this regularity structure by
defining 
\begin{equation*}
\Pi_s \bar{\Xi} \mathcal{I}(\Xi)^m  =
\left\{ t \mapsto 
 \frac{\mathrm{d} }{\mathrm{d}  t}
\left(\int_s^t \left(\hat{W}(u) - \hat{W}(s)\right)^m d\bar{W}(u) \right) \right\}
\end{equation*}
(the above integral being in It\^o sense), and \footnote{Upon setting $\Gamma_{ts}\left( \bar{\Xi} \right) = \bar{\Xi}$, the given relation is precisely implied by multiplicativity of $\Gamma$.}
\begin{equation*}
\Gamma_{ts}\left( \bar{\Xi} \mathcal{I}(\Xi)^m\right) = \bar{\Xi}
\Gamma_{ts}\left( \mathcal{I}(\Xi)^m\right).
\end{equation*}

Arguments similar to the proof of Lemma \ref{lem:LimitingModel} show that
this indeed defines a model on $\bar{\mathcal{T}}$.

\subsection{Small noise model large deviation}
\label{sec:small-noise-model-ld}

Given $\delta>0$ we consider the "small-noise" model $(\Pi^\delta,\Gamma^%
\delta)$ on $\tilde{T}$ obtained by replacing $W, \bar{W}$ by $\delta W,
\delta \bar{W}$, which simply means that 
\begin{align*}
\Pi^\delta \mathbf{1}&=1 \\
\Pi^\delta \mathcal{I}(\Xi)^m &= \delta^{m} \Pi\mathcal{I}(\Xi)^m \\
\Pi^\delta \Xi \mathcal{I}(\Xi)^m &=\delta^{m + 1} \Pi \Xi \mathcal{I}(\Xi)^m
\\
\Pi^\delta \bar{\Xi} \mathcal{I}(\Xi)^m &=\delta^{m + 1} \Pi \bar{\Xi} 
\mathcal{I}(\Xi)^m,
\end{align*}
and 
\begin{align*}
\Gamma^\delta_{ts} \mathbf{1}& =\mathbf{1},\Gamma^\delta_{ts} \Xi
=\Xi,\Gamma^\delta_{ts} \bar{\Xi}=\bar{\Xi}, \\
\Gamma^\delta_{ts} \mathcal{I}(\Xi)&=\mathcal{I}(\Xi)+\delta (\hat{W}(t)-%
\hat{W}(s))\mathbf{1} \\
\Gamma^\delta_{ts} \tau \tau^{\prime} &= \Gamma^\delta_{ts}\tau \cdot
\Gamma^\delta_{ts}\tau^{\prime }\,,\,\,\mbox{ for } \tau,\tau^{\prime }\in 
\bar{S} .
\end{align*}

Finally, for $h=(h_1,h_2)$ in $\mathcal{H}:=L^2([0,T])^2$, we consider the
deterministic model $(\Pi^h,\Gamma^h)$ defined by 
\begin{align*}
\Pi^h\mathbf{1}&=1, \\
\Pi^h_s \Xi = h_1, & \;\; \Pi^h_s \bar{\Xi} = h_2, \\
\Pi^h_s \mathcal{I}(\Xi) (t)&= \int_0^{t \vee s} (K(u,t) - K(u,s)) h_1(u) du,
\\
\Pi^h \tau \tau^{\prime} &= \Pi^h\tau \Pi^h \tau ^{\prime }\mbox{ for }
\tau,\tau^{\prime }\in \bar{S}
\end{align*}
and 
\begin{align*}
\Gamma^h_{ts} \mathbf{1}& =\mathbf{1},\Gamma^h_{ts} \Xi =\Xi,\Gamma^h_{ts} 
\bar{\Xi}=\bar{\Xi}, \\
\Gamma^h_{ts} \mathcal{I}(\Xi)&=\mathcal{I}(\Xi)+ (\int_0^{t \vee s} (K(u,t)
- K(u,s)) h_1(u) du)\mathbf{1} \\
\Gamma^h_{ts} \tau \tau^{\prime} &= \Gamma^h_{ts}\tau \cdot \Gamma^h_{ts}\tau^{\prime
}\,,\,\,\mbox{ for } \tau,\tau^{\prime }\in \bar{S}.
\end{align*}
The following lemma and theorem are proved in Appendix \ref{app:LD}.
\begin{lemma}
\label{lem:Pih} For each $h \in \mathcal{H}$, $\Pi^h$ does define a model.
In addition, the map $h \in \mathcal{H} \mapsto \Pi^h$ is continuous.
\end{lemma}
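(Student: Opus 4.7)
The plan is to verify the model axioms and then the continuity statement via three essentially routine steps.

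First I would check the algebraic relation $\Pi^h_t \Gamma^h_{ts} = \Pi^h_s$. Since both $\Pi^h$ and $\Gamma^h$ are defined multiplicatively on products in $\bar S$, it suffices to check it on the generating symbols $\mathbf 1, \Xi, \bar\Xi, \mathcal{I}(\Xi)$. The cases $\mathbf 1, \Xi, \bar\Xi$ are immediate since $\Gamma^h_{ts}$ acts as the identity on them and $\Pi^h_s\Xi = h_1$, $\Pi^h_s\bar\Xi = h_2$ are independent of $s$. For $\mathcal{I}(\Xi)$, one rewrites $\Pi^h_s \mathcal{I}(\Xi)(r) = \hat h_1(r) - \hat h_1(s)$ with $\hat h_1(r) := \int_0^r K(u,r) h_1(u)\,du$, so that $\Gamma^h_{ts}\mathcal{I}(\Xi) = \mathcal{I}(\Xi) + (\hat h_1(t)-\hat h_1(s))\mathbf 1$ is exactly the recentering from $s$ to $t$; this is a one-line check.

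Next I would verify the homogeneity bounds \eqref{eq:ModelDefinition2} and \eqref{eq:ModelDefinition3}. By multiplicativity, everything reduces to two elementary Cauchy--Schwarz estimates using $h_1, h_2 \in L^2$. For the noise symbols, $|\langle h_i, \varphi^\lambda_s\rangle| \le \|h_i\|_{L^2} \lambda^{-1/2} \|\varphi\|_{L^2}$, which is already better than the required $\lambda^{|\Xi|} = \lambda^{-1/2-\kappa}$. For $\mathcal{I}(\Xi)$, Cauchy--Schwarz combined with the standard Volterra-kernel estimate $\|K(\cdot,t)-K(\cdot,s)\|_{L^2} \lesssim |t-s|^H$ (the deterministic analog of the $L^2$-isometry for Riemann--Liouville fBM) gives $|\hat h_1(t)-\hat h_1(s)| \lesssim \|h_1\|_{L^2} |t-s|^H$. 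Composing these on $B(s,\lambda)$ yields
\begin{equation*}
|\Pi^h_s \Xi\mathcal{I}(\Xi)^m(\varphi^\lambda_s)| \lesssim \|h_1\|_{L^2}^{m+1} \lambda^{mH-1/2} ,
\end{equation*}
and likewise for $\bar\Xi\mathcal{I}(\Xi)^m$; these are strictly better (by a factor $\lambda^{(m+1)\kappa}$) than the declared homogeneities. The bound on the scalar coefficient appearing in $\Gamma^h_{ts}\mathcal{I}(\Xi)$ is exactly the second estimate above, and the bounds on the multiplicative extensions of $\Gamma^h$ follow from Leibniz-type expansions and the same ingredients.

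For continuity, I would simply note that each coefficient of $\Pi^h_s \tau$ and $\Gamma^h_{ts}\tau$ depends polynomially on $h_1, h_2$ through the linear functional $h_1 \mapsto \hat h_1$, with total degree at most $M+1$. A telescoping argument $(h)^k - (h')^k = \sum (h)^{k-1-j}(h-h')(h')^j$ and the two estimates from Step 2 then give, on any bounded subset of $\mathcal H$, a bound of the form
\begin{equation*}
\vertiii{(\Pi^h,\Gamma^h);(\Pi^{h'},\Gamma^{h'})}_T \lesssim \bigl(1+\|h\|_{\mathcal H}+\|h'\|_{\mathcal H}\bigr)^N \|h-h'\|_{\mathcal H},
\end{equation*}
for some $N=N(M)$, which is more than enough to conclude continuity in the model topology. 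I do not expect any serious obstacle: once the kernel estimate $\|K(\cdot,t)-K(\cdot,s)\|_{L^2} \lesssim |t-s|^H$ is in hand, everything else is bookkeeping on homogeneities, and the slack of $\kappa$ at each occurrence of the noise makes all inequalities hold with margin.
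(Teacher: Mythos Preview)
Your proposal is correct and follows essentially the same route as the paper's proof: algebraic relations are immediate, the analytic bounds for $\Xi,\bar\Xi$ come from Cauchy--Schwarz (the paper phrases this as the Sobolev embedding $L^2\subset C^{-1/2}$), the H\"older regularity of $\hat h_1$ is the key input for $\mathcal I(\Xi)$, and the product symbols are handled by the same Cauchy--Schwarz estimate $\|h_1\|_{L^2}\|\varphi^\lambda_s\|_{L^2}\lesssim\lambda^{-1/2}$ combined with $\sup_{|r-s|\le\lambda}|\hat h_1(r)-\hat h_1(s)|^m\lesssim\lambda^{mH}$. The only cosmetic difference is that the paper obtains the $C^H$-regularity of $\hat h_1$ by citing the mapping property $\|K\ast h\|_{C^H}\lesssim\|h\|_{C^{-1/2}}$ of the Riemann--Liouville fractional integral, whereas you derive it directly from the kernel increment estimate $\|K(\cdot,t)-K(\cdot,s)\|_{L^2}\lesssim|t-s|^H$; your route is arguably more self-contained. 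For continuity the paper simply says ``similar arguments, left to the reader'', so your telescoping argument with polynomial dependence is exactly what is intended.
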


\begin{theorem}
\label{thm:LD} The models ${\Pi}^{\delta}$ satisfy a large deviation
principle (LDP) in the space of models with rate $\delta^2$ and rate
function given by 
\begin{equation*}
J(\Pi) = \left\{%
\begin{array}{ll}
\frac{1}{2} \|h\|_{\mathcal{H}}^2 & \mbox{if }\Pi = \Pi^h \mbox{ for some }
h \in \mathcal{H}, \\ 
+\infty, & otherwise.%
\end{array}%
\right..
\end{equation*}
\end{theorem}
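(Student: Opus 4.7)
The strategy is the standard one for proving an LDP for nonlinear functionals of Gaussian noise: combine a Schilder-type LDP for the underlying noise with a continuous-functional contraction principle, and bridge between the smooth approximations and the limiting model via exponential equivalence. The only non-standard ingredient here is that the ``functional'' is the entire model $\Pi^\delta$, regarded as an element of the (Polish) space of models equipped with the metric ${\vert\kern-0.25ex\vert\kern-0.25ex\vert \cdot ; \cdot \vert\kern-0.25ex\vert\kern-0.25ex\vert}_T$ from \eqref{eq:ModelDistance}. Throughout, write $\xi = (\dot W, \dot{\bar W})$ for the driving white noise, with Cameron--Martin space $\mathcal{H} = L^2([0,T])^2$, and let $\xi^\eps$ denote an $\eps$-mollification (as in Example \ref{rem:Mollifier}), yielding a ``canonical'' smooth model $\Pi^{\eps}$ and, for $h \in \mathcal{H}$, its shift $\Pi^{\eps, h}$ obtained by replacing $\xi^\eps$ by $\xi^\eps + h \star \delta^\eps$.

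First I would check that, for each fixed $\eps>0$, the map $(\delta W, \delta \bar W) \mapsto \Pi^{\delta, \eps}$ (built by mollifying first and then carrying out all iterated-integral constructions) is continuous from $C([0,T])^2$ into the space of models. This is elementary: each symbol in $\bar S$ is realized as a fixed polynomial expression in finitely many values of the smoothed noise and its Volterra integral. Schilder's theorem then gives an LDP at speed $\delta^{-2}$ for $\delta \xi^\eps$, and the contraction principle yields an LDP for $\Pi^{\delta,\eps}$ with rate function
\begin{equation*}
J^\eps(\Pi) = \inf \bigl\{ \tfrac{1}{2} \|h\|_{\mathcal{H}}^2 : \Pi = \Pi^{h,\eps} \bigr\}.
\end{equation*}
Lemma \ref{lem:Pih} provides the continuity of $h \mapsto \Pi^h$, and a short calculation shows $\Pi^{h,\eps} \to \Pi^h$ as $\eps \to 0$, so that $J^\eps$ $\Gamma$-converges to $J$.

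Next I would establish the \emph{exponential equivalence} of $\Pi^\delta$ and $\Pi^{\delta, \eps}$ (as $\eps \to 0$, uniformly in $\delta$ after the natural rescaling): for every $\eta > 0$,
\begin{equation*}
\limsup_{\eps \to 0} \limsup_{\delta \to 0} \delta^2 \log \mathbb{P}\bigl( {\vert\kern-0.25ex\vert\kern-0.25ex\vert \Pi^\delta; \Pi^{\delta, \eps} \vert\kern-0.25ex\vert\kern-0.25ex\vert}_T > \eta \bigr) = -\infty.
\end{equation*}
This is where the Gaussian structure is exploited crucially: every $\Pi_s \tau$ lies in a fixed inhomogeneous Wiener chaos of order at most $n(\tau) \le M+1$, hence by hypercontractivity (Borell's inequality) Gaussian concentration gives tails of the form $\exp(-c (t/ \sigma)^{2/n(\tau)})$ where $\sigma$ is the $L^2$-norm of the relevant chaos element. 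The $L^p$ estimates on ${\vert\kern-0.25ex\vert\kern-0.25ex\vert \Pi^\eps; \Pi \vert\kern-0.25ex\vert\kern-0.25ex\vert}_T \lesssim \eps^{\delta \kappa}$ obtained in Theorem \ref{prop:ModelConvergence} feed directly into these tail bounds, and the homogeneity of the $\delta$-scaling $\Pi^\delta_s \tau \eqlaw \delta^{n(\tau)} \Pi_s \tau$ converts moment control into the required exponential bounds at speed $\delta^{-2}$.

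Finally, combining the LDP for $\Pi^{\delta,\eps}$ with the above exponential equivalence via the standard approximating-LDP theorem (see e.g.\ Dembo--Zeitouni, Thm.\ 4.2.16) yields the LDP for $\Pi^\delta$ with rate function $J(\Pi) = \inf_{\Pi = \Pi^h} \tfrac12 \|h\|_{\mathcal{H}}^2$, as desired. The main obstacle, and the only genuinely technical point, is the exponential equivalence step: one has to upgrade the $L^p$ convergence of the renormalized model to a \emph{uniform} (over test functions and base points) Gaussian-type concentration bound on ${\vert\kern-0.25ex\vert\kern-0.25ex\vert \cdot; \cdot \vert\kern-0.25ex\vert\kern-0.25ex\vert}_T$. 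This is essentially a Kolmogorov-type argument inside a finite chaos, made quantitative via hypercontractivity; its execution parallels the analogous statements in the Gaussian rough-path LDP literature (Friz--Victoir, Millet--Sanz-Sol\'e) and in the regularity-structures LDP of Hairer--Weber, adapted to the Volterra kernel $K$ used here.
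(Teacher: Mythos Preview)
Your strategy is sound and would work, but it takes a genuinely different route from the paper. The paper does not run Schilder--contraction--exponential-equivalence by hand; instead it recognizes that each $\Pi_s\tau$ lives in a fixed inhomogeneous Wiener chaos and invokes Theorem~3.5 of Hairer--Weber \cite{HW15} as a black box for Banach-valued Wiener polynomials. The only remaining work in the paper is the identification of the rate function, which reduces to checking $(\Pi\tau)^{hom}(h)=\Pi^h\tau$ for each symbol $\tau$; this is done by passing to the mollified model, observing that $(\hat\Pi^\eps\tau)^{hom}(h)=\Pi^{h^\eps}\tau$ (since renormalization terms sit in strictly lower chaos and are killed by the ``hom'' projection), and then letting $\eps\to 0$ using Lemma~\ref{lem:Pih}.

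What you do is essentially reprove the relevant special case of \cite{HW15} from scratch. The advantage of your route is that it is more self-contained and makes the mechanism transparent: your observation that the renormalization corrections, being in chaos of order strictly below $n(\tau)$ but scaled by $\delta^{n(\tau)}$, are automatically negligible at speed $\delta^{-2}$ is exactly the probabilistic content of the ``hom'' projection. The cost is that the exponential-equivalence step requires uniform-in-basepoint-and-test-function Gaussian concentration for the model metric, i.e.\ a Kolmogorov argument inside each chaos, which is genuine work (though routine). The paper's route buys brevity by outsourcing precisely this step to \cite{HW15}. Both approaches lead to the same rate function and the same identification of $\Pi^h$ as the relevant skeleton; you should just be aware that what you call ``the main obstacle'' is already packaged in the cited reference.
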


As an immediate corollary we have 

\begin{corollary}
\label{cor:LD} For $\delta$ small, $  \mathbb{P} ( Y_1^\delta \approx y)  \approx \exp [ - I (y) / \delta^2 ] $, 
in the precise sense of a large deviation principle (LDP) for  
\begin{equation*}
Y^\delta_1 :=\int_0^1 f( \delta^H \hat{W}_s) \delta \left( \rho dW_s + \bar{%
\rho} d\bar{W}_s \right)
\end{equation*}
with speed $\delta^2$, and rate function given by 
\begin{equation}  \label{eq:rate}
I(y) = \inf_{h_1 \in L^2([0,1])} \{ \frac{1}{2} \|h_1\|_{L^2}^2 + \frac{%
\left(y-I_1(h_1)\right)^2}{2I_2(h_1)} \} 
\end{equation}
where $$I_1(h_1)= \rho \int_0^1 f\left(\int_0^s K(u,s) h_1(u) du\right) h_1(s)
ds, \ \ \ \ I_2(h_1)= \int_0^1 f\left(\int_0^s K(u,s) h_1(u) du\right)^2 ds \ . $$
\end{corollary}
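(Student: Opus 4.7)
The plan is to apply the contraction principle to the model-level LDP of Theorem \ref{thm:LD}. The key step is to realize $Y^\delta_1$ as a continuous deterministic image of the random model $\Pi^\delta$ on the enlarged structure $\bar{\mathcal{T}}$ of Section \ref{sec:basic-setup-full-rough-vol-reg-struc}.

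First, for an arbitrary admissible model $\Pi$ on $\bar{\mathcal{T}}$, define the modelled distributions
\[
F^\Pi(s) := \sum_{m=0}^M \tfrac{1}{m!} \partial^m f(\mathcal{R}\mathcal{K}\Xi(s))\,\mathcal{I}(\Xi)^m, \qquad G^\Pi := (\rho \Xi + \bar\rho \bar\Xi)\, F^\Pi,
\]
using Lemma \ref{lem:kerneleasy} and the obvious extension of Lemma \ref{lem:FDgamma} to $\bar{\mathcal{T}}$, and set $\mathcal{Y}(\Pi) := (\mathcal{R}^\Pi G^\Pi)(\mathbf{1}_{[0,1]})$. Pairing with $\mathbf{1}_{[0,1]} \in \mathcal{B}^1_{1,\infty}$ is licit by Lemma \ref{lem:BesovBounds}, since $\gamma + |\Xi| > -1$ for the exponent $\gamma$ supplied by Lemma \ref{lem:FDgamma}. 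Unrolling the reconstruction identity as in Lemma \ref{lem:ReconstructionIdentity} (combined with the scaling built into $\Pi^\delta$), one identifies $\mathcal{Y}(\Pi^\delta) = Y^\delta_1$ on the random side, and on the skeleton side
\[
\mathcal{Y}(\Pi^h) = \rho \int_0^1 f(\hat{h}_1(s))\, h_1(s)\, ds + \bar\rho \int_0^1 f(\hat{h}_1(s))\, h_2(s)\, ds =: Y^h,
\]
where $\hat{h}_1(s) = \int_0^s K(u,s) h_1(u)\, du$ and $(h_1, h_2) \in \mathcal{H}$.

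Second, continuity of $\Pi \mapsto \mathcal{Y}(\Pi)$ on the space of models follows by combining the two-model reconstruction bound \eqref{eq:ReconstructionEstimate2}, the polynomial-in-$\|\Pi\|$ control from Lemma \ref{lem:FDgamma}, and the Besov extension of Lemma \ref{lem:BesovBounds}. Together with Theorem \ref{thm:LD}, the contraction principle then delivers an LDP for $Y^\delta_1$ with speed $\delta^2$ and rate
\[
I(y) = \inf\bigl\{ \tfrac 12 \|h_1\|_{L^2}^2 + \tfrac 12 \|h_2\|_{L^2}^2 \,:\, (h_1, h_2) \in \mathcal{H}, \ Y^h = y\bigr\}.
\]
To obtain the stated formula, minimize over $h_2$ with $h_1$ fixed: the constraint is the single linear condition $\langle \bar\rho f(\hat{h}_1), h_2\rangle_{L^2} = y - \rho \int_0^1 f(\hat{h}_1) h_1\, ds$, whose Cauchy--Schwarz minimum gives $\min \|h_2\|^2 = (y - I_1(h_1))^2 / (\bar\rho^2 I_2(h_1))$; substituting and collecting terms recovers the expression for $I(y)$.

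The main obstacle is verifying that $\mathcal{Y}$ is a genuinely continuous map of the model, despite the non-smooth test function $\mathbf{1}_{[0,1]}$ and despite the fact that $G^\Pi$ has strictly negative minimal homogeneity; this is precisely what Lemma \ref{lem:BesovBounds} together with the enlarged-structure versions of Lemmas \ref{lem:kerneleasy}--\ref{lem:FDgamma} are designed to handle. A secondary technicality, relevant if one wishes to accommodate $f$ of exponential type as in \cite{GJR14} rather than bounded $f$, is a standard localization along the lines of Remark \ref{rem:unbdf}, exploiting Fernique-type concentration of $\sup_{t \in [0,1]} |\mathcal{R}\mathcal{K}\Xi(t)|$.
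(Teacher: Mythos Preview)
Your proposal is correct and follows the same route as the paper: realize $Y^\delta_1$ as a continuous function of the model via reconstruction, apply the contraction principle on top of Theorem \ref{thm:LD}, identify the skeleton value at $\Pi^h$, and optimize over $h_2$. You are in fact more careful than the paper about the continuity argument (invoking Lemma \ref{lem:BesovBounds} for the indicator test function and the two-model bound \eqref{eq:ReconstructionEstimate2}) and about the explicit Cauchy--Schwarz minimization; note that your computation yields $(y-I_1(h_1))^2/(2\bar\rho^2 I_2(h_1))$, which is what the paper's own one-line optimization also gives, so the missing $\bar\rho^2$ in the stated formula \eqref{eq:rate} is a typo in the corollary rather than a flaw in either argument.
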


\begin{remark} This improves a similar result in \cite{FZ17} in the sense that $f$ of exponential form, as required in rough volatility modelling 
\cite{GJR14, BFG16, BFGHS17x}, is now covered.
\end{remark}
\begin{proof}
Note that 
\begin{equation*}
Y^\delta_1 = \left\langle \mathcal{R}^\delta F^\delta \cdot (\rho \Xi + \bar{%
\rho} \bar{\Xi}), 1_{[0,1]} \right\rangle
\end{equation*}
where $F^\delta \equiv F^{\Pi^\delta}$ as defined in Lemma \ref{lem:FDgamma}. By
the contraction principle and the continuity estimate from Theorem \ref%
{thm:Reconstruction}, it holds that $Y^\delta_1$ satisfies a LDP, with rate
function given by 
\begin{equation*}
I(y) = \inf \{ \frac{1}{2} \left( \|h_1\|_{L^2}^2 + \|h_2\|^2_{L^2} \right),
\;\;\; y = \left\langle \mathcal{R}^h F^h \cdot (\rho \Xi + \bar{\rho} \bar{%
\Xi}), 1_{[0,1]} \right\rangle \},
\end{equation*}

where we used $F^h \equiv F^{\Pi^h}$.  It then suffices to note that
\begin{equation*}
\left\langle 
\mathcal{R}^h \left( F^h \cdot (\rho \Xi + \bar{\rho} \bar{\Xi})\right), 1_{[0,1]}
\right\rangle = \int_0^1 f\left(\int_0^s K(u,s) h_1(u) du\right) \left(\rho
h_1(s) ds + \bar{\rho} h_2(s) ds\right)
\end{equation*}
and optimizing over $h_2$ for fixed $h_1$ we obtain \eqref{eq:rate}.
\end{proof}

We note that thanks to Brownian resp. fractional Brownian scaling, small noise large deviations translate 
immediately to short time large deviations, cf. \cite{FZ17}.

Although the rate function here is not given in a very useful form, it is
possible \cite{BFGHS17x} to expand it in small $y$ and so compute
(explicitly in terms of the model parameters) higher order moderate
deviations which relate to implied volatility skew expansions.

\section{Rough Volterra dynamics for volatility}   \label{sec:Volt}

\subsection{Motivation from market micro-structure} 

Rosenbaum and coworkers, \cite{EEFR16x, EER16x,EER17x}, show that stylized facts of modern market microstructure naturally give rise to fractional dynamics and leverage effects. Specifically, they construct a sequence of Hawkes processes suitably rescaled in time and space that converges in law to a rough volatility model of rough Heston form
\begin{eqnarray}
dS_{t}/S_{t} &=&\sqrt{v_t}dB_{t}\equiv \sqrt{v}\left( \rho dW_{t}+\bar{\rho}d%
\bar{W}_{t}\right) \text{ } \ , \label{equ:rHeston} \\
v_t &=&  v_0 + \int_0^t \frac{a-bv_s}{(t-s)^{1/2-H}} ds + \int_0^t 
\frac{c\sqrt{v_s}}{(t-s)^{1/2-H}}dW_s \nonumber \ .
\end{eqnarray}
(As earlier, $W,\bar{W}$ independent Brownians.) Similar to the case of the classical Heston model, the square-root provides both pain (with regard to any methods that rely on sufficient smooth coefficients) and comfort (an {\it affine structure}, here infinite-dimensional, which allows for closed form computations of moment-generating functions).  Arguably, there is no real financial reason for the square-root dynamics\footnote{This is also a frequent remark for the classical Heston model.} and ongoing work attempts to modify the above square-root dynamics, such as to obtain (something close to) {\it log-normal volatility}, put forward as important rough volatility feature by Gatheral et al. \cite{GJR14}. This motivates the study of more general dynamic rough volatility models of the form

\begin{eqnarray}
dS_{t}/S_{t} &=&f({Z_t})dB_{t}\equiv f({Z_t}) \left( \rho dW_{t}+\bar{\rho}d\bar{W}_{t}\right) \text{ } \ , \label{equ:grHeston} \\
Z_t &=&  z + \int_0^t K(s,t)  v(Z_s) ds +  \int_0^t K(s,t)  u(Z_s) dW_s \label{eq:Volt}
\end{eqnarray}%
with sufficiently nice functions $f, u,v$. (While $f(x)=\sqrt{x}$ is still OK in what follows, we assume $u,v \in C^3$ for a local solution theory and then in fact impose $u,v \in C_b^3$ for global existence. (One clearly expects non-explosion under e.g. linear growth, but in order not to stray too far from our main line of investigation we refrain from a discussion.) Remark that $f(z)$ plays the role of spot-volatility. Further note that the 
choice $z=0, \ v \equiv 0, \ u \equiv 1$ brings us back to the ``simple'' case with (rough stochastic) volatility $f(Z_t) = f(\hat W_t)$ considered in earlier sections.

\bigskip

With some good will,\footnote{We are not aware of any literature on mixed It\^o-Volterra systems (although expect no difficulties). Here of course, it suffices to first solves for $Z$ and then construct $S$ as stochastic exponential.}
equation (\ref{equ:grHeston}) fits into the existing theory of stochastic Volterra equations with singular kernels (e.g. \cite{PP90} or \cite{CD01}). 
%

\subsection{Regularity structure approach} \label{sec:VolterraStuff}

We insist that (\ref{equ:grHeston}) is not a classical It\^o-SDE (solutions will not be semimartingales), nor a rough differential equations (in the sense of rough paths, driven by a Gaussian rough path as in \cite[Ch.10]{FH14}).
If rough paths have established themselves as a powerful tool to analyze classical It\^o-SDE, we here make the point that Hairer's theroy is an equally powerful tool to analyze stochastic Volterra (resp. mixed It\^o-Volterra) equations in the singular regime of interest.

As preliminary step, we have to have to find the correct model space, spanned by symbols which arise by formal Picard iteration. To this end, rewrite  \eqref{equ:grHeston} formally, or as equation for modelled distributions,
\begin{equation}  \label{eq:VoltD}
\mathcal{Z} =  \mathcal{I} (U(\mathcal{Z}) \cdot \Xi  ) + (....)  \mathbf{1}
\end{equation}
from which one can guess (or formally derive along \cite[Sec. 8.1]{Hai14}) the need for the symbols
$$
          \mathbf{1},   \mathcal{I} (\Xi) ,  \mathcal{I} (\Xi)^2, \mathcal{I} (\Xi \mathcal{I} (\Xi) ), ...
$$
We have degrees $|\mathbf{1}|=0, \, |\mathcal{I}(\Xi)| = H - \kappa$ and then, for subsequent symbols, degree computed as
$$
                       ( 1/2+ H) \times \{ \text{number of $\mathcal{I}$} \} + (-1/2 + \kappa)  \times \{ \text{number of $\Xi$} \}.
$$ 
For a modelled distribution, $\mathcal{Z} (t)$ takes values in the linear span of sufficiently many symbols, the (minimal) number of which is dictated by the Hurst parameter $H$. Loosely speaking, $\mathcal{Z} \in \mathcal{D}^\gamma$ indicates an expansions with $\gamma$-error estimate, in practice easy to see from the degree of the lowest degree symbols that do not figure in the expansion. For example, in case of a ``level-$2$ expansion'' we can expect 
$$ 
         \mathcal{Z} (t) = (....)  \mathbf{1} + (....) \mathcal{I} (\Xi)   \in  \mathcal{D}_0^{2 (H - \kappa)}
$$
since  $| \mathcal{I} (\Xi)^2 | = |  \mathcal{I} (\Xi \mathcal{I} (\Xi) )| = 2H - 2\kappa.$ It follows from general theory \cite[Thm 4.16]{Hai14} that if $\mathcal{Z}  \in  \mathcal{D}_0^\gamma$, then so is $U(\mathcal{Z})$, the composition with a smooth function, and by \cite[Thm 4.7]{Hai14} the product with $\Xi \in  \mathcal{D}_{-1/2-\kappa}^\infty$ is a modelled distribution in $ \mathcal{D}^{\gamma-1/2-\kappa}$. For both reconstruction and convolution with singular kernels, one needs modelled distributions with positive degree $\gamma-1/2-\kappa>0$. Given $H \in (0,1/2]$ we can then determine which symbols (up to which degree) are required in the expansion. As earlier, fix an integer 
$$
        M \ge  \max \{ m \in \mathbb{N} | m  \cdot  (H - \kappa) - 1/2 - \kappa \le 0 \}
$$
(so that $(M+1). (H - \kappa) - 1/2 - \kappa >  0$) and see that $ \mathcal{Z} \in \mathcal{D}_0^{(M+1). (H - \kappa)}$ will do. When $H > 1/4$, and by choosing $\kappa >0$ small enough, we see that $M=1$ will do. That is, the symbols required to describe $\mathcal{Z}$ are $\{ \mathbf{1},   \mathcal{I} (\Xi)  \}$ and if one adds the symbols required to describe the right-hand side, one ends up with the level-$2$ model space spanned by
$$
    \{   \Xi,    \Xi \mathcal{I} (\Xi) ,  \mathbf{1},   \mathcal{I} (\Xi) \}
$$
which is exactly the model space for the ``simple'' rough pricing regularity structure, (\ref{equ:SimpleModelSpace}) in case $M=1$. When $H \le 1/4$ this precise correspondence is no longer true. To wit, in case $H \in (1/3, 1/4]$, taking $M=2$ accordingly, solving (\ref{eq:Volt}) on the level of modelled distributions will require a (``level-$3$'') model space given by 

$$
    \langle   \Xi,    \Xi \mathcal{I} (\Xi) ,  \Xi \mathcal{I} (\Xi)^2, \Xi \mathcal{I} (\Xi \mathcal{I} (\Xi) ),  \mathbf{1},   \mathcal{I} (\Xi), \mathcal{I} (\Xi)^2, \mathcal{I} (\Xi \mathcal{I} (\Xi) ) \rangle
$$
which is strictly larger than the corresponding level-$3$ simple model space given in (\ref{equ:SimpleModelSpace}).  In general, one needs to consider an extended model space $\hat{T}= \langle \hat{S} \rangle$, so as to have 
\begin{equation*}
\tau \in \hat{S} \Rightarrow \Xi  \mathcal{I}(\tau)^m , \mathcal{I}(\tau)^m \in \hat{S}, m \geq 0,
\end{equation*}
(with the understanding that only finitely many such symbols are needed, depending on $H$ as explained above). As a result, symbols such as
\begin{equation*}
\Xi \mathcal{I}(\Xi (\mathcal{I}(\Xi))^m), \; m \geq 0,\;\;  \mathcal{I}%
(\Xi (\mathcal{I}(\Xi (\mathcal{I}(\Xi))^{m})^{m^{\prime }}), m, m^{\prime
}\geq 0, \;\; \ldots 
\end{equation*}
will appear. At this stage a tree notation (omnipresent in the works of Hairer) would come in handy and we refer to \cite{BCFP17x} (and the references therein) for a recent attempt to reconcile the tree formalism of branched
 rough path \cite{Gub10, HK15} and the most recent algebraic formalism of regularity structures. (In a nutshell, the simple case (\ref{equ:SimpleModelSpace}) corresponds to trees where one node has $m$ branches; in the 
 present non-simple case symbols branching can happen everywhere.) ) 
Carrying out the following construction in the general case, $H>0$, is certainly possible.\footnote{We note that, as $H \to$ the number of symbols tends to infinity. In comparison, as far as we know, among all recently studied singular SPDEs, only the sine-Gordon equation \cite{HS16} exhibits arbitrarily many symbols. } However, the algebraic complexity is essentially the one from branched rough paths and hence the general case requires a Hopf algebraic (Connes-Kreimer, Grossman-Larson ...) construction of the structure group (a.k.a. positive renormalization). Although this, and negative renormalization, is well understood (\cite{Hai14,BHZ16x}, also \cite{BCFP17x} for a rough path perspective, all complete exposition would lead us to far astray from the main topic of this paper. Hence, for simplicity only, we shall restrict from here on to the level-$2$ case $H>1/4$ (with $M=1$ accordingly) but will mention general results whenever useful.

\subsection{Solving for rough volatility}

We rewrite \eqref{eq:Volt} as equation for modelled distributions in $\mathcal{D}^\gamma$, 
\begin{equation}  \label{eq:VoltD2}
\mathcal{Z} = z \mathbf{1} + \mathcal{K} (U(\mathcal{Z}) \cdot \Xi + V(%
\mathcal{Z}) ).
\end{equation}
(Here $U,V$ are the operators associated to composition with$u,v \in C^{M+2}$ respectively.) We also impose
$$
           \gamma \in (1/2 + \kappa, 1)
$$           
which is clearly necessary such as to have the product  $U(\mathcal{Z}) \cdot \Xi  $ in a modelled distribution space of positive parameter, so that reconstruction, convolution etc. makes sense.
Let $H>1/4, M=1$ and pick $\kappa \in (0, \tfrac{4H - 1}{6})$ so that $(M+1). (H - \kappa) - 1/2 - \kappa >  0$. As explained in the previous section, this exactly allows us to work in the familiar 
structure of Section \ref{sec:basic-pricing-setup}. That is,  with $M=1$,
$$
    \mathcal{T} = \langle   \Xi,    \Xi \mathcal{I} (\Xi) ,  \mathbf{1},   \mathcal{I} (\Xi) \rangle \ .
$$
with index set and structure group as given in that section. This structure is equipped with the It\^o-model, 
and its (renormalization) approximations. Equation (\ref{eq:VoltD2}) critically involves the convolution operator $\mathcal{K}$ acting on $\mathcal{D}^\gamma$. The general construction \cite[Sec. 5]{Hai14} is among the most technical in Hairer's work, and in fact not directly applicable (our kernel $K$, although $\beta$-regularizing with $\beta = 1/2 + H$)  fails the Assumption 5.4 in \cite{Hai14}) so we shall be rather explicit.
\begin{lemma} \label{lem:KisK}
On the regularity structure $(\mathcal{T},A,G)$ of Section \ref{sec:basic-pricing-setup} with $M=1$, consider a model $(\Pi,\Gamma)$ which is {\it admissible} in the sense $$\Pi_t\mathcal{I} (\Xi) = (K \ast\Pi_t\Xi)(\cdot) - (K \ast\Pi_t\Xi)(t) \ . $$ Let $\gamma > 0, \ F \in \mathcal{D^\gamma}$ and set \footnote{$\ \mathcal{I}$ is extended linearly to all of $\mathcal{T}$ by taking $\mathcal{I%
} \tau =0$ for symbols $\tau \neq \Xi$).} 
\begin{equation*}
\mathcal{K} F : s \in [0,T] \mapsto \mathcal{I}(F(s)) + (K \ast \mathcal{R}  F)(s) \mathbf{1}
\end{equation*}
Then (i) $\mathcal{K}$ maps $\mathcal{D}^\gamma \to \mathcal{D}^{\min \{ \gamma+\beta, 1 \} }$ and (ii) $ \mathcal{R} ( \mathcal{K} F ) = K * \mathcal{R} F$, i.e. convolution commutes with reconstruction.
\end{lemma}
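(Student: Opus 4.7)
Since $\mathcal{I}$ annihilates every basis symbol of $\mathcal{T}$ other than $\Xi$, writing $F(s) = f_\Xi(s)\Xi + \sum_{\tau \ne \Xi} f_\tau(s) \tau$ gives $\mathcal{I}(F(s)) = f_\Xi(s) \mathcal{I}(\Xi)$, so that
\begin{equation*}
\mathcal{K} F(s) = f_\Xi(s) \mathcal{I}(\Xi) + (K \ast \mathcal{R} F)(s)\, \mathbf{1}
\end{equation*}
takes values in the two-dimensional sector $\langle \mathbf{1}, \mathcal{I}(\Xi) \rangle$ of non-negative homogeneity. For well-definedness, $F \in \mathcal{D}^\gamma$ with $\gamma > 0$ gives $\mathcal{R} F \in \mathcal{C}^{|\Xi|}$ locally, and since $K$ is $\beta$-regularizing with $\beta = 1/2 + H > 1/2 + \kappa = -|\Xi|$, the convolution $K \ast \mathcal{R} F$ is a continuous function, in fact locally H\"older of exponent $H - \kappa$.

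For (i) I would check the $\mathcal{D}^{\gamma'}$-increment bound on each non-trivial component of $\mathcal{K} F(t) - \Gamma_{ts} \mathcal{K} F(s)$. The $\mathcal{I}(\Xi)$-coefficient is simply $f_\Xi(t) - f_\Xi(s)$, controlled by $|t-s|^{\gamma + 1/2 + \kappa}$ via the $\Xi$-part of $\|F\|_{\mathcal{D}^\gamma}$, which suffices in both regimes $\gamma' = \gamma + \beta$ and $\gamma' = 1$. The $\mathbf{1}$-coefficient is the delicate one: admissibility, combined with $\Pi_s\Xi = \Pi_t\Xi =: \Pi\Xi$ (a consequence of $\Gamma_{ts}\Xi = \Xi$), gives $\Gamma_{ts}\mathcal{I}(\Xi) = \mathcal{I}(\Xi) + [(K \ast \Pi\Xi)(t) - (K \ast \Pi\Xi)(s)]\,\mathbf{1}$, and a short cancellation yields
\begin{equation*}
\big(\mathcal{K} F(t) - \Gamma_{ts} \mathcal{K} F(s)\big)_{\mathbf{1}} = (K \ast g)(t) - (K \ast g)(s), \qquad g := \mathcal{R} F - f_\Xi(s)\, \Pi\Xi.
\end{equation*}
I would then split $g = (\mathcal{R} F - \Pi_s F(s)) + \sum_{\tau \ne \Xi} f_\tau(s)\, \Pi_s \tau$: the first piece is the reconstruction error, locally of order $\lambda^\gamma$ against $\varphi^\lambda_s$, while each remaining $\Pi_s\tau$ is locally of order $|\tau| \ge 0$ at $s$.

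The hard part is upgrading these local bounds on $g$, after convolution with $K$, to the pointwise H\"older increment $|(K \ast g)(t) - (K \ast g)(s)| \lesssim |t-s|^{\gamma+\beta} \wedge |t-s|$ --- a scalar instance of Hairer's multi-level Schauder estimate \cite[Thm~5.12]{Hai14}. Since the Volterra kernel $K(r) = \sqrt{2H}\,\mathbf{1}_{r>0}\, r^{H-1/2}$ decays only polynomially at infinity, it violates Assumption~5.4 of \cite{Hai14}, so I would prove what is needed by hand: a dyadic decomposition of $K$ near its singularity at $0$ together with a smooth tail, estimating each annular piece scale by scale against the two local bounds on $g$ and summing a geometric series. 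Part (ii) is then essentially free: the same computation gives $|(K \ast \mathcal{R} F - \Pi_s \mathcal{K} F(s))(\varphi^\lambda_s)| \lesssim \lambda^{\gamma'}$, so $K \ast \mathcal{R} F$ satisfies the defining local property of $\mathcal{R}(\mathcal{K} F)$, and uniqueness in Theorem~\ref{thm:Reconstruction} identifies the two.
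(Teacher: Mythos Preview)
Your approach is considerably more detailed than the paper's, which gives only a sketch: the paper argues that the formula for $\mathcal{K}F$ is \emph{forced} by postulating (ii) and using \cite[Prop.~3.28]{Hai14} to read off the $\mathbf{1}$-coefficient, and explicitly declines to verify the $\mathcal{D}^{\gamma'}$-bounds in (i). So you are attempting strictly more than the paper does, and your route via a scalar Schauder estimate on $K\ast g$ together with uniqueness in Theorem~\ref{thm:Reconstruction} for (ii) is the right one.

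That said, your verification of (i) overlooks the symbol $\Xi\mathcal{I}(\Xi)$ in two places. First, the $\Xi$-component of $F(t)-\Gamma_{ts}F(s)$ is \emph{not} $f_\Xi(t)-f_\Xi(s)$: since $\Gamma_{ts}\bigl(\Xi\mathcal{I}(\Xi)\bigr) = \Xi\mathcal{I}(\Xi) + h_{ts}\,\Xi$ with $h_{ts}=(K\ast\Pi\Xi)(t)-(K\ast\Pi\Xi)(s)$, the $\mathcal{D}^\gamma$-bound only gives
\[
|f_\Xi(t)-f_\Xi(s)| \;\lesssim\; |t-s|^{\gamma+1/2+\kappa} \;+\; |t-s|^{H-\kappa}\,\sup_r|f_{\Xi\mathcal{I}(\Xi)}(r)|\,,
\]
and the second term caps the $\mathcal{I}(\Xi)$-component estimate at exponent $H-\kappa$, hence the attainable $\gamma'$ at $2(H-\kappa)$ rather than $1$. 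Second, in your decomposition of $g$ you assert that every $\tau\neq\Xi$ has $|\tau|\ge 0$, but $|\Xi\mathcal{I}(\Xi)|=H-2\kappa-1/2<0$; this piece must be carried separately through the dyadic Schauder estimate (it is ultimately harmless because $|\Xi\mathcal{I}(\Xi)|+\beta=2(H-\kappa)>0$). Both gaps are repairable along the lines you indicate, but the honest output of the argument is $\mathcal{D}^{\min\{\gamma+\beta,\,2(H-\kappa)\}}$; this is enough for the fixed-point in Appendix~\ref{app:RH}, which only needs some $\bar\gamma>\gamma$ with $\gamma$ taken just above $1/2+\kappa$.
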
 
\begin{remark} \cite[Thm 5.2]{Hai14} suggests the estimate $\mathcal{K}$ maps $\mathcal{D}^\gamma \to \mathcal{D}^{\gamma+\beta}$. The difference to our baby Schauder estimate  stems from the fact, unlike Assumption 5.3 in \cite[p.64]{Hai14} we do not assume that our regularity structure contains the polynomial structure.
\end{remark} 

\begin{proof} (Sketch) The special case $F \equiv \Xi \in \mathcal{D}^\infty $ was already treated in Lemma \ref{lem:kerneleasy}. We only show that, in the general case, $\mathcal{K}$ necessarily has the stated form but will not check the properties. It is enough to consider $F$ with values in $ \langle \Xi, \Xi \mathcal{I} \Xi \rangle $ and make the ansatz
$$
      (\mathcal{K} F )(s):= \mathcal{I} F(s) + (...) \mathbf{1}\,.
$$
Applying reconstruction, together with \cite[Prop. 3.28]{Hai14}  we see that $ \mathcal{R} ( \mathcal{K} F) \equiv (...)$ which in turn must equal  $K * \mathcal{R} F$, provided we postulate validity of (ii). This is the given definition of $\mathcal{K} F$ .
\end{proof}

We return to our goal of solving  
\begin{equation}  \label{eq:VoltD}
\mathcal{Z} = z \mathbf{1} + \mathcal{K} \left(U(\mathcal{Z}) \cdot \Xi + V(%
\mathcal{Z}) \right) \,,
\end{equation}
noting perhaps that $U(\mathcal{Z})$ makes sense for every function-like modelled distribution, say $F(t) = F_0(t) \mathbf{1} +
\sum_{k=1}^M F_k(t) (\mathcal{I} \Xi)^k \in \mathcal{T}_+ := \left\langle \mathbf{1}, \mathcal{I}(\Xi), \ldots, (\mathcal{I}\Xi)^M \right\rangle$, in which case 
\begin{equation} \label{equ:compMD}
U(F)(t) = u(F_0(t)) \mathbf{1} + u^{\prime }(F_0(t)) \sum_{k=1}^M F_k(t) 
\mathcal{I} (\Xi)^k \,.
\end{equation}
(Similar remarks apply to $V$, the composition operator associated to $v \in C^{M+2}$). Recall $M=1$.

\begin{theorem}
\label{thm:Volt} For any admissible model $(\Pi,\Gamma)$ and $u$, $v$ $\in$ $C_b^{M+2}(\mathbb{R}%
)$, for any $T>0$, the equation \eqref{eq:VoltD} has a unique solution in $%
\mathcal{D}^\gamma(\mathcal{T}_+)$, and the map $(u,v,\Pi) \mapsto \mathcal{Z%
}$ is locally Lipschitz in the sense that if $\mathcal{Z}$ and $\tilde{%
\mathcal{Z}}$ are the solutions corresponding respectively to $(u,v,\Pi)$
and $(\tilde{u},\tilde{v},\tilde{\Pi})$, 
\begin{equation*}
{\vert\kern-0.25ex\vert\kern-0.25ex\vert \mathcal{Z};\tilde{\mathcal{Z}}
\vert\kern-0.25ex\vert\kern-0.25ex\vert}_{\mathcal{D}^\gamma_T} \lesssim \|u-%
\tilde{u}\|_{C^{M+2}_b} + \|v-\tilde{v}\|_{C^{M+2}_b} + {\vert\kern%
-0.25ex\vert\kern-0.25ex\vert (\Pi,\Gamma);(\tilde{\Pi},\tilde{\Gamma}) \vert%
\kern-0.25ex\vert\kern-0.25ex\vert}_T\,,
\end{equation*}
with the proportionality constant being bounded when the (resp. $C^{M+2}_b$
and model) norms of the arguments stay bounded.

In addition, if $(\Pi,\Gamma)$ is the canonical It\^o model (associated to Brownian resp. fractional Brownian motion, $H > 1/4$) then $Z=\mathcal{R} 
\mathcal{Z}$ is solves \eqref{equ:grHeston} in the It\^o-sense. 
\end{theorem}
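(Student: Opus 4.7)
The plan is to solve \eqref{eq:VoltD} by a Picard-type fixed-point argument in $\mathcal{D}^\gamma_T(\mathcal{T}_+)$ for $\gamma\in(1/2+\kappa,1)$, where $\mathcal{T}_+=\langle\mathbf{1},\mathcal{I}(\Xi)\rangle$, and then identify the reconstruction with the It\^o--Volterra solution. Define $\mathcal{M}(\mathcal{Z}):=z\mathbf{1}+\mathcal{K}\!\left(U(\mathcal{Z})\cdot\Xi+V(\mathcal{Z})\right)$. Using \eqref{equ:compMD}, for $u,v\in C_b^{3}$ the composition operators send $\mathcal{D}^\gamma_T(\mathcal{T}_+)$ into itself with polynomial dependence on $\|\Pi\|_T$ (Lemma \ref{lem:FDgamma} style, i.e.\ \cite[Thm 4.16]{Hai14}). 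Product with $\Xi$ via \cite[Thm 4.7]{Hai14} yields $U(\mathcal{Z})\cdot\Xi\in\mathcal{D}^{\gamma-1/2-\kappa}_T$, whose parameter is positive by the choice of $\gamma$. Lemma \ref{lem:KisK} then places $\mathcal{K}(U(\mathcal{Z})\cdot\Xi+V(\mathcal{Z}))$ into $\mathcal{D}^{\gamma'}_T(\mathcal{T}_+)$ with $\gamma'=\min\{\gamma+H-\kappa,1\}>\gamma$. Hence $\mathcal{M}:\mathcal{D}^\gamma_T(\mathcal{T}_+)\to\mathcal{D}^\gamma_T(\mathcal{T}_+)$.

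The strict gain $\gamma'-\gamma>0$ is the engine of the contraction: localizing to $[0,T_0]$ and interpolating in $\mathcal{D}^\gamma$--spaces (\cite[Prop.~7.11]{Hai14}) produces a prefactor $T_0^{\gamma'-\gamma}$, which combined with the Lipschitz estimates of \cite[Thms 4.7, 4.16]{Hai14} and Lemma \ref{lem:KisK} shows that $\mathcal{M}$ is a contraction on a closed ball of $\mathcal{D}^\gamma_{T_0}(\mathcal{T}_+)$ for $T_0$ small. Because $u,v\in C_b^{3}$ all the model-norm--weighted constants in these estimates remain bounded along the iteration, and there is no blow-up: one patches local solutions to obtain a unique solution on $[0,T]$. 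The announced local Lipschitz dependence on $(u,v,\Pi,\Gamma)$ follows from the same multilinear estimates, each of composition, product, convolution, and reconstruction being Lipschitz in the corresponding data with polynomial dependence on the model norms (see Lemma \ref{lem:FDgamma} and Theorem \ref{thm:Reconstruction}).

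For the second part, apply $\mathcal{R}$ to both sides of \eqref{eq:VoltD} and use part (ii) of Lemma \ref{lem:KisK}, giving
\begin{equation*}
Z(t) \;=\; z \;+\; \bigl(K\ast\mathcal{R}\!\left(U(\mathcal{Z})\cdot\Xi\right)\bigr)(t) \;+\; \bigl(K\ast\mathcal{R} V(\mathcal{Z})\bigr)(t).
\end{equation*}
Since $V(\mathcal{Z})$ is function-like, $\mathcal{R} V(\mathcal{Z})=v(Z)$ and the last term equals $\int_0^t K(s,t) v(Z_s)\,ds$. For the stochastic term, an argument along the lines of Lemma \ref{lem:ReconstructionIdentity} identifies the reconstruction of the modelled product $U(\mathcal{Z})\cdot\Xi$ against any smooth test function $\varphi$ with the It\^o integral $\int \varphi(r)\,u(Z_r)\,dW_r$: indeed, by \eqref{equ:compMD} the local expansion of $U(\mathcal{Z})$ at $s$ is $u(Z_s)\mathbf{1}+u'(Z_s)(\mathcal{I}\Xi-\hat W_s \mathbf{1})$-type, and plugging $\Pi_s\Xi=\dot W$, $\Pi_s(\Xi\mathcal{I}\Xi)(\varphi)$ from Lemma \ref{lem:ReshapingLimit} together with uniqueness of reconstruction \cite[Thm 3.10]{Hai14} yields the It\^o form (no renormalization arises since we use the It\^o model, exactly as in Lemma \ref{lem:ReconstructionIdentity}). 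Convolving against $K(\cdot,t)\mathbf{1}_{[0,t]}$ then gives $\int_0^t K(s,t)u(Z_s)\,dW_s$, so $Z$ indeed solves \eqref{eq:Volt}.

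The main obstacle I anticipate is this last identification: one must check that the abstract product $U(\mathcal{Z})\cdot\Xi$ — where $U(\mathcal{Z})$ itself carries a non-trivial $\mathcal{I}(\Xi)$-component reflecting the Volterra memory of $Z$ — reconstructs precisely to the It\^o integrand, with all Wick/Skorokhod corrections from Lemma \ref{lem:ReshapingLimit} matching the $u'(Z_s)\cdot(\hat W_t-\hat W_s)$ term of the composition in such a way that no renormalization constant survives in the limit. Once this pathwise identification is in place, existence and uniqueness at the level of $\mathcal{Z}$ automatically transfer to (uniqueness in law is then elementary for) the It\^o--Volterra equation \eqref{equ:grHeston}.
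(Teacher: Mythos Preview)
Your fixed-point argument for well-posedness and local Lipschitz dependence is essentially the paper's approach: contraction on a short interval via the $T^\eta$ gain from $\mathcal{K}$ (Lemma~\ref{lem:KisK} plus the Schauder-type estimate), together with the composition and product estimates from \cite[Thms~4.7,~4.16]{Hai14}, then iteration using that $u,v\in C_b^{M+2}$ keeps all constants uniform in the restart point.

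Where you diverge from the paper is the identification $Z=\mathcal{R}\mathcal{Z}$ with the It\^o--Volterra solution. The paper does \emph{not} carry out your direct reconstruction argument; instead it argues by approximation (citing \cite[Thm~6.2]{HP15}, \cite[Ch.~5]{FH14}): for the smooth renormalized models $\hat\Pi^\varepsilon$ one has pointwise reconstruction $\mathcal{R}^\varepsilon(U(\mathcal{Z}^\varepsilon)\Xi)(t)=\hat\Pi^\varepsilon_t(U(\mathcal{Z}^\varepsilon(t))\Xi)(t)$, which is computed explicitly in the corollary following the theorem to show $Z^\varepsilon$ solves \eqref{eq:Volteps}; model convergence (Theorem~\ref{prop:ModelConvergence}) and continuity of $\mathcal{R}$ then pass this to the It\^o limit. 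The advantage is that all Wick/Skorokhod bookkeeping from Lemma~\ref{lem:ReshapingLimit} is bypassed, since for smooth noise everything is classical. Your direct route is viable but, as you yourself flag, it requires reproving an analogue of Lemma~\ref{lem:ReconstructionIdentity} with $f(\hat W)$ replaced by $u(Z)$, where $Z$ is only known through $\mathcal{Z}\in\mathcal{D}^\gamma$; the Taylor-remainder estimate \eqref{eq:proof:ReconstructionIdentity1} must be redone using the $\mathcal{D}^\gamma$ bound on $\mathcal{Z}$ rather than smoothness of $f$ composed with $\hat W$, and adaptedness of $Z$ is needed to collapse Skorokhod to It\^o.

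One overreach: your final sentence suggests that uniqueness of $\mathcal{Z}$ automatically yields uniqueness for the It\^o--Volterra equation. The paper explicitly disclaims this in the Remark following the theorem---abstract uniqueness gives that $Z$ is \emph{a} solution, but strong uniqueness of the stochastic equation itself must be imported from the stochastic Volterra literature (e.g.\ \cite{CD01}).
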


\begin{remark} $Z=\mathcal{R} \mathcal{Z}$ is clearly the (unique) reconstruction of the (unique) solution to the abstract problem. We also checked 
that $Z$ is indeed a solution for the It\^o-Volterra equation. However, if one desires to know that $Z$ is the unique strong solution to the stochastic It\^o-Volterra
equation, it is clear that one has to resort to uniqueness results of the stochastic theory, see e.g. \cite{CD01}.
\end{remark}

\begin{proof}
The well-posedness and continuous dependence on the parameters essentially
follows from results of \cite{Hai14}, details are spelled out the
details in Appendix \ref{app:RH}. 

The fact that the reconstruction of the solution solves the It\^o equation
can be obtained by considering approximations as is done in \cite[Thm 6.2]{HP15}
or \cite[Ch. 5]{FH14}.
\end{proof}

Using the large deviation results obtained in the previous subsection, we
can directly obtain a LDP for the log-price 
\begin{equation*}
X_t = \int_0^t f(Z_s) (\rho dW_s + \bar{\rho} d\bar{W}_s ) - \frac{1}{2}
\int_0^t f^2(Z_s) ds.
\end{equation*}
For square-integrable $h$, let $z^h$ be the unique solution to the integral
equation 
\begin{equation*}
z^h(t) = z + \int_0^t K(s,t) u(z^h(s)) h(s) ds \ .
\end{equation*}

\begin{corollary} Let $H \in (0,1/2]$ and $f$ smooth (without boundedness assumption). Then
$t^{H-\frac{1}{2}} X_t$ satisfies a LDP with speed $t^{2H}$ and rate
function given by 
\begin{equation}  \label{eq:rateZ}
I(x) = \inf_{h \in L^2([0,1])} \{ \frac{1}{2} \|h\|_{L^2}^2 + \frac{%
\left(x-I_1^z(h)\right)^2}{2I_2^z(h)} \}
\end{equation}
where $$I^z_1(h)= \rho \int_0^1 f(z^h(s)) h(s) ds \ , \ \ \ \ I_2^z(h)= \int_0^1
f(z^h(s))^2 ds \ . $$
\end{corollary}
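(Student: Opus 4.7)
The plan is to mimic Corollary \ref{cor:LD}, replacing the explicit $\hat W$ by the Volterra solution $Z$ and using Theorem \ref{thm:Volt} as the key continuity input. First, the joint Brownian and fractional-Brownian scaling $\tilde W^{(t)}_s := t^{-1/2} W_{ts}$ together with $K(tu,ts) = t^{H-1/2} K(u,s)$ shows that, in law,
\begin{equation*}
t^{H - 1/2} X_t = \int_0^1 f(Z^\delta_s) \bigl( \rho \, dW^\delta_s + \bar\rho \, d\bar W^\delta_s \bigr) - \tfrac{1}{2} \delta^{1 + 1/(2H)} \int_0^1 f^2(Z^\delta_s)\,ds,
\end{equation*}
where $\delta := t^H$, $W^\delta := \delta \tilde W^{(t)}$, $\bar W^\delta := \delta \tilde{\bar W}^{(t)}$, and $Z^\delta$ solves the rescaled Volterra equation driven by $W^\delta$ with drift prefactor $\delta^{1 + 1/(2H)} = o(\delta)$. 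At the level of modelled distributions this is precisely \eqref{eq:VoltD} associated to the small-noise model $(\Pi^\delta, \Gamma^\delta)$ of Section \ref{sec:small-noise-model-ld}; the drift piece $V(\mathcal Z)$ carries the additional $o(1)$ prefactor and drops out in the limit.

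Granted this reduction, I would glue together three pieces: (a) Theorem \ref{thm:LD}, which provides an LDP at speed $\delta^2$ for $(\Pi^\delta, \Gamma^\delta)$ in the model topology, with good rate function $J$ supported on $\{\Pi^h : h = (h_1, h_2) \in \mathcal H\}$; (b) Theorem \ref{thm:Volt}, which makes $\Pi \mapsto \mathcal Z^\Pi \in \mathcal D^\gamma(\mathcal T_+)$ locally Lipschitz, together with the composition result \cite[Thm 4.16]{Hai14} and Theorem \ref{thm:Reconstruction}, so that the map $\Pi \mapsto \mathcal R^\Pi \bigl( F^\Pi \cdot (\rho \Xi + \bar\rho \bar\Xi) \bigr)(\mathbf{1}_{[0,1]})$ is continuous, where $F^\Pi$ is built from $f$ and $\mathcal Z^\Pi$ via \eqref{equ:compMD}; (c) the observation that $\tfrac{1}{2}\delta^{1 + 1/(2H)} \int_0^1 f^2(Z^\delta_s)\,ds$ is exponentially negligible at speed $\delta^2$. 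The contraction principle then yields an LDP for $t^{H-1/2} X_t$ with rate function
\begin{equation*}
I(x) = \inf\Bigl\{ \tfrac{1}{2}(\|h_1\|_{L^2}^2 + \|h_2\|_{L^2}^2) : x = \rho \int_0^1 f(z^{h_1}(s)) h_1(s)\,ds + \bar\rho \int_0^1 f(z^{h_1}(s)) h_2(s)\,ds \Bigr\},
\end{equation*}
with $z^{h_1} = \mathcal R^{\Pi^h} \mathcal Z^{\Pi^h}$ exactly the unique solution of the deterministic Volterra equation in the statement. A one-line projection of $h_2$ onto $\mathrm{span}\{f(z^{h_1}(\cdot))\} \subset L^2([0,1])$ produces the inner minimum $(x - I_1^z(h))^2 / (2 I_2^z(h))$ with $h := h_1$, giving \eqref{eq:rateZ}.

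The main obstacle is that Theorem \ref{thm:Volt} assumes $u, v \in C_b^{M+2}$ and the abstract composition $F^\Pi$ is cleanest for bounded $f$, whereas the Corollary requires only $f$ smooth. This is handled by a standard exponentially-good-approximation argument: since $(\Pi^\delta, \Gamma^\delta)$ is a Gaussian model, Fernique's inequality yields Gaussian concentration for $\sup_{s \in [0,1]} |\mathcal R^\delta \mathcal K \Xi(s)|$ and, through Theorem \ref{thm:Volt}, for $\sup_{s \in [0,1]} |Z^\delta_s|$; truncating $f$ outside a ball $B_R$ to a bounded $f_R$, running the argument above for $f_R$, and letting $R \to \infty$ closes the argument. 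Finally, for $H \in (0, 1/4]$ the level-$2$ structure must be replaced by the extended model space and Hopf-algebraic structure group sketched at the end of Section \ref{sec:VolterraStuff}; once that algebraic book-keeping is in place, steps (a), (b), (c) and the localization go through unchanged.
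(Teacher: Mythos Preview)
Your proposal is correct and follows essentially the same route as the paper: scaling to reduce to a small-noise problem with $\delta = t^H$, invoking Theorem~\ref{thm:LD} for the model LDP, using Theorem~\ref{thm:Volt} (plus composition and reconstruction) for the required local Lipschitz continuity, applying an extended contraction principle, and then optimizing over $h_2$. If anything, you are slightly more careful than the paper's own proof, which disposes of the quadratic-variation term by writing ``since $f$ is bounded'' despite the statement allowing unbounded $f$; your Fernique-based localization makes this step explicit, and your remark on the extended structure for $H\le 1/4$ matches the paper's ``see remark below'' caveat.
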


\begin{remark} Despite our previous limitation to $H>1/4$, to approach extends to any $H>0$ and yields the result as stated.
\end{remark}

\begin{proof}
Ignoring the second part $\int_0^t (...) ds$ in $X_t$ which is $O(t) = o(t^{%
\frac{1}{2}-H})$ since $f$ is bounded, we let $\hat{X}_t = \int_0^t f(Z_s)
(\rho dW_s + \bar{\rho} d\bar{W}_s )$ and by scaling we see that 
\begin{equation*}
t^{H-\frac{1}{2}}\hat{X}_t =^d \hat{X}^\delta_1,
\end{equation*}
where $\delta= t^H$ and $X^\delta$, $Z^\delta$ are defined in the same way
as $X$, $Z$ with $W, \bar{W}$ replaced by $\delta W, \delta \bar{W}$ and $v$
replaced by $v^\delta = \delta^{1+\frac{1}{2H}} h$.

We then note that 
\begin{equation*}
X^\delta_1 = \left\langle \mathcal{R}^\delta F(\mathcal{Z}^\delta)(\rho \Xi
+ \bar{\rho} \bar{\Xi}), 1_{[0,1]} \right\rangle =: \Psi(\Pi^\delta,
v^\delta)
\end{equation*}
where $\Psi$ is locally Lipschitz by Theorem \ref{thm:Volt}. We can then
directly use the fact that $\Pi^\delta$ satisfy a LDP (Theorem \ref{thm:LD})
with a contraction principle such as Lemma 3.3 in \cite{HW15} to obtain that 
$X^\delta_1$ satisfies a LDP with rate function 
\begin{equation*}
I(x) = \inf\left\{ \frac{1}{2}(\|h\|_{L^2}^2 + \|\bar{h}\|_{L^2}^2, \;\;\;x
= \Psi(\Pi^{(h,\bar h)}, 0) \right\}.
\end{equation*}
It then suffices to note that $z^h$ is exactly $\mathcal{R} \mathcal{Z}$ for 
$\mathcal{Z}$ the solution to \eqref{eq:VoltD} corresponding to a model $%
\Pi^{(h,\bar h)}$ and with $h \equiv 0$, and to optimize separately over $%
\bar{h}$ as in the proof of Corollary \ref{cor:LD}.
\end{proof}

%

We also have an approximation result :

\begin{corollary} Let $H > 1/4$ (for simplicity, but see remark below). Then
$Z = \lim Z^\varepsilon$, uniformly on compacts and in probability, where
\begin{equation}  \label{eq:Volteps}
Z^\varepsilon_t = z + \int_0^t K(s,t) \left( u(Z^\varepsilon_s)
dW^\varepsilon_s + (v(Z^\varepsilon_s) - \mathscr{C}^\varepsilon(s)
uu^{\prime}(Z^\varepsilon_s) ds \right) \ .
\end{equation}
\end{corollary}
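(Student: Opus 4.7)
The plan is to obtain $Z^\eps$ as the reconstruction of the abstract solution $\mathcal{Z}^\eps$ to the fixed point equation \eqref{eq:VoltD} posed with respect to the renormalized model $(\hat\Pi^\eps, \Gamma^\eps)$, and then to conclude via the joint continuity of the solution map and the convergence of models. Precisely, by Theorem \ref{thm:Volt} applied to the (admissible, provided one checks admissibility for both $\Pi$ and $\hat{\Pi}^\eps$) models $(\hat\Pi^\eps,\Gamma^\eps)$, there exists a unique $\mathcal{Z}^\eps \in \mathcal{D}^\gamma(\mathcal{T}_+)$ solving the abstract equation, and the locally Lipschitz dependence together with Theorem \ref{prop:ModelConvergence} then gives $\mathcal{Z}^\eps \to \mathcal{Z}$ in $\mathcal{D}^\gamma$, in probability. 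Since both modelled distributions take values in the positive sector, one has $\mathcal{R} \mathcal{Z}(t) = \langle \mathcal{Z}(t), \mathbf{1}\rangle$ and similarly for $\mathcal{Z}^\eps$, so convergence in $\mathcal{D}^\gamma$ directly yields uniform convergence of $Z^\eps := \mathcal{R}^\eps \mathcal{Z}^\eps \to Z = \mathcal{R}\mathcal{Z}$ on compacts.

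The nontrivial step is to identify this reconstructed $Z^\eps$ as satisfying the renormalized Volterra equation \eqref{eq:Volteps}. Writing $\mathcal{Z}^\eps(t) = Z^\eps(t)\mathbf{1} + z_1^\eps(t)\mathcal{I}(\Xi)$, the fixed point equation together with Lemma \ref{lem:KisK} forces the $\mathcal{I}(\Xi)$-coefficient to be $z_1^\eps(t) = u(Z^\eps(t))$, so that
\begin{equation*}
U(\mathcal{Z}^\eps)(t) \cdot \Xi \;=\; u(Z^\eps(t))\,\Xi \;+\; u'(Z^\eps(t))\,u(Z^\eps(t))\,\Xi\mathcal{I}(\Xi).
\end{equation*}
Applying $\hat\Pi^\eps_t$ to the right-hand side and using the explicit form of the renormalization (namely $\hat\Pi^\eps_t \Xi\mathcal{I}(\Xi) = (\hat W^\eps(\cdot) - \hat W^\eps(t))\dot W^\eps - \mathscr{C}^\eps(\cdot)$), one obtains by the same calculation as in Lemma \ref{lem:ReconstructionIdentity} (localization over a partition and standard control of the reconstruction error)
\begin{equation*}
\mathcal{R}^\eps\bigl[U(\mathcal{Z}^\eps)\cdot \Xi\bigr](\varphi) = \int \varphi(r)\, u(Z^\eps(r))\,\dd W^\eps(r) \; - \; \int \varphi(r)\, (uu')(Z^\eps(r))\, \mathscr{C}^\eps(r)\,\dd r.
\end{equation*}
Since $\mathcal{R}^\eps(\mathcal{K} F) = K * \mathcal{R}^\eps F$ from Lemma \ref{lem:KisK}(ii), specializing to $\varphi = K(\cdot, t)$ and adding the straightforward $V$-contribution yields exactly \eqref{eq:Volteps}.

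Finally, the convergence $\mathcal{Z}^\eps \to \mathcal{Z}$ in $\mathcal{D}^\gamma$ (in probability) follows from the Lipschitz estimate of Theorem \ref{thm:Volt} applied with $(u,v,\hat\Pi^\eps,\hat\Gamma^\eps)$ versus $(u,v,\Pi,\Gamma)$, using $\vertiii{(\hat\Pi^\eps,\hat\Gamma^\eps);(\Pi,\Gamma)}_T \to 0$ in $L^p$ from Theorem \ref{prop:ModelConvergence}. Reading off the $\mathbf{1}$-component gives $\sup_{t \in [0,T]} |Z^\eps(t) - Z(t)| \to 0$ in probability, as claimed.

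The main obstacle is the bookkeeping in the second step: one must verify that the correction term produced by reconstruction of $\hat\Pi^\eps[U(\mathcal{Z}^\eps)\cdot \Xi]$ matches exactly $\mathscr{C}^\eps uu'(Z^\eps)$, which relies on the self-consistent identification $z_1^\eps = u(Z^\eps)$ forced by the abstract equation. Once this matching is made, the rest is a direct application of already established continuity properties, and the restriction $H > 1/4$ (i.e.\ $M=1$) plays no role beyond allowing us to work within the two-level structure of Section \ref{sec:basic-pricing-setup}; the extension to general $H > 0$ would proceed identically once the larger (tree-indexed) regularity structure is set up.
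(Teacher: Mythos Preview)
Your proposal is correct and follows essentially the same route as the paper: convergence of the renormalized models (Theorem \ref{prop:ModelConvergence}) plus the Lipschitz dependence of the abstract solution (Theorem \ref{thm:Volt}) gives $\mathcal{Z}^\eps \to \mathcal{Z}$, and it remains to identify $\mathcal{R}^\eps \mathcal{Z}^\eps$ as a solution to \eqref{eq:Volteps}. The only difference is in the identification step: you argue via a Lemma \ref{lem:ReconstructionIdentity}-style localization, whereas the paper observes that $\hat\Pi^\eps$ is a \emph{smooth} model in the sense of \cite[Remark 3.15]{Hai14}, so that reconstruction is simply pointwise evaluation, $\mathcal{R}^\eps(U(\mathcal{Z}^\eps)\Xi)(t) = \hat\Pi^\eps_t(U(\mathcal{Z}^\eps(t))\Xi)(t) = u(Z^\eps_t)\dot W^\eps(t) - u'u(Z^\eps_t)\mathscr{C}^\eps(t)$, and then Lemma \ref{lem:KisK}(ii) concludes. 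This is a minor streamlining rather than a different argument.
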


\begin{remark} Replacing the renormalization function $ \mathscr{C}^\varepsilon$by its mean is possible, provided $H>1/4$. However, unlike the discussion at the end of Section \ref{subsec:ApproximationTheory}, this is no more a
consequence of quantifying the distributional convergence. In the present context, this is achieved by checking directly model-convergence, which, fortunately, is not much harder. We leave details to the 
interested reader. \end{remark}

\begin{remark} In contrast to the previous statement, the above result is more involved for $H \in (0,1/4]$ and additional terms renormalization terms appear, the general description of which would benefit from pre-Lie products, as recently introduced \cite{BCFP17x}.
 
\end{remark}

\begin{proof}
Thanks to Theorem \ref{prop:ModelConvergence} and Theorem \ref{thm:Volt} it follows from continuity of reconstruction 
that 
\begin{equation*}
Z = \mathcal{R} \mathcal{Z} = 
\lim_{\varepsilon \to 0} \mathcal{R^\eps} 
\mathcal{Z}^\eps , 
\end{equation*} 
so that the only thing to do is check that $Z^\varepsilon$
solves \eqref{eq:Volteps}. Note that %
\eqref{eq:VoltD} implies that one has (omitting upper $\eps$'s at all normal and caligraphic $Z$ ...)
\begin{equation*}
\mathcal{Z} (t) = Z_t \mathbf{1} + u(Z_t) \mathcal{I} (\Xi),
\end{equation*}
and, with (\ref{equ:compMD}),
\begin{equation*}
U(\mathcal{Z}(t)) \Xi = u(Z_t) \Xi + u^{\prime }u (Z_t) \mathcal{I}(\Xi) \Xi.
\end{equation*}
But then since $\hat{\Pi}^\varepsilon$ is a ``smooth'' model, in the sense of Remark 3.15. in \cite{Hai14}, one has 
\begin{eqnarray*}
\mathcal{R}^\varepsilon( U(\mathcal{Z}^\varepsilon) \Xi )(t) 
&=&
\hat\Pi^\varepsilon_t( U(\mathcal{Z}^\varepsilon (t)) \Xi ) (t)\\
&=& u(Z^\varepsilon_t) (\hat\Pi^\varepsilon_t \Xi)(t) 
+ u^{\prime}u (Z^\varepsilon_t) (\hat\Pi^\varepsilon_t \Xi \mathcal{I}(\Xi))(t) \\
&=& 
u(Z^\varepsilon_t) \dot{W}^\varepsilon(t) 
- u^{\prime}u (Z^\varepsilon_t) \mathscr{K}^\varepsilon(t,t) \ .
\end{eqnarray*}
Since convolution commutes with reconstruction, cf. Lemma \ref{lem:KisK}, it follows that $Z^\varepsilon$ is indeed a solution to %
\eqref{eq:Volteps}.
\end{proof}

\section{Numerical results}  
\label{sec:ChristianBen}

We will now resume where we left off in Section \ref{subsec:haar} and revisit the case of European option pricing under rough volatility.  Building on the theoretical underpinnings of Section \ref{sec:RPRS}, we present a concise description of the central algorithm of this paper - for simplicity restricted to the unit time interval - and complement the theoretical convergence rates obtained in previous chapters with numerical counterparts. The code used to run the simulations has been made available on \url{https://www.github.com/RoughStochVol}. 

\bigskip

\noindent {\bf Concise description. }  Without loss of generality, set time to maturity $T=1$. We are interested in
pricing a European call option with spot $S_0$ and strike $K$ under rough
volatility. From Theorem \ref{thm:rPricing}, we have
\begin{equation}
\label{eq:call_rough_def}
C\left( S_{0},K,1\right) =\mathbb{E}\left[ C_{BS}\left(S_{0}\exp \left( \rho
\mathscr{I}-\frac{\rho ^{2}}{2} \mathscr{V} \right)
,K,\frac{\bar{\rho}^{2}}{2}\mathscr{V} \right) \right]
\end{equation}
where the computational challenge obviously lies in the efficient simulation of

\begin{equation*}
\left( \mathscr{I},\mathscr{V}\right) = \left( \int_{0}^{1}f(\hat{W}_t,t
) \dd W_t,\int_{0}^{1}f^{2}(\hat{W}_t, t) \dd t\right) .
\end{equation*}%
As explored in Subsection \ref{subsec:haar}, we take a Wong-Zakai-style approach to simulating $\mathscr{I}$, that is, we approximate the White noise process $\dot{W}$ on the Haar grid as follows: \\
Let $\{Z_i\}_{i=1, \ldots 2^N-1} \sim iid \;\mathcal{N}(0,1)$ and choose a Haar grid level $N\in \mathbb{N}$ such that the stepsize of the Haargrid $\varepsilon = 2^{-N}$. Then, for all $t \in [0,1]$ and $i=0, \ldots, 2^N-1$, we set
\begin{align}
\dot{W}^{\varepsilon}(t) = \sum_{i=0}^{2^N-1} Z_i e_i^{\varepsilon}(t) \quad \text{where} \quad 
e_i^{\varepsilon}(t) = 2^{N/2} \mathbf{1}_{[i2^{-N},(i+1)2^{-N})}(t)
\end{align}
which induces an approximation of the fBm
\begin{align}
\label{eq:num_w_hat}
\hat{W}^{\varepsilon}(t) &= \sum_{i=0}^{2^N-1} Z_i \hat{e}_i^{\varepsilon}(t) \quad \text{where} \\
\hat{e}_i^{\varepsilon}(t) &= \mathbf{1}_{t>i2^{-N}} \frac{\sqrt{2H}2^{N/2}}{H+1/2} \Big(| t-i2^{-N}|^{H+1/2} - | t - \min((i+1)2^{-N},t)|^{H+1/2}\Big). \label{eq:e_hat_simu}
\end{align}
As outlined before, the central issue is that the object $\int_0^1 f(\hat{W}^{\varepsilon}(t),t) \dot{W}^{\varepsilon}(t) \dd t$ does \emph{not} converge in an appropriate sense to the object of interest $ \mathscr{I}$ as $\varepsilon \to 0$. This is overcome by \emph{renormalizing} the object, two possible approaches of which are explored in Subsection \ref{subsec:haar}. For the remainder, we will consider the 'simpler' renormalized object given by
\begin{align}
\label{eq:renormalized_I_standard}
\tilde{\mathscr{I}}^\eps = \int_0^1 f(\hat{W}^{\varepsilon}(t),t) \dot{W}^{\varepsilon}(t) \dd t - \int_{0}^{1} \mathscr{C}^{\varepsilon}(t) \pa_1 f(\hat{W}^{\varepsilon}(t),t) \dd t 
\end{align}
where the renormalization object $\mathscr{C}^\varepsilon(t)$ can be one of 
\begin{align}
\label{eq:num_K}
\mathscr{C}^{\varepsilon}(t) = \begin{cases}
\frac{2^N \sqrt{2H}}{H+1/2}|t-\floor{t2^N}2^{-N}|^{H+1/2}\\
\frac{\sqrt{2H}}{(H+1/2)(H+3/2)} 2^{N(1/2-H)}.
\end{cases}
\end{align}
Coming back to the original question of simulating $\left(\mathscr{I}, \mathscr{V}\right)$, we just argued that what we really need to simulate to achieve convergence in a suitable sense is the object $\left(\tilde{\mathscr{I}}^\eps, \mathscr{V}^{\varepsilon}\right)$, the expressions of which are collected below (note that under an assumed non-constant renormalization the expression \eqref{eq:renormalized_I_standard} for $\tilde{\mathscr{I}}^\eps$  has been rewritten to a form more suitable for efficient simulation):

\begin{align}
\tilde{\mathscr{I}}^\eps &=\sum_{i=0}^{2^N-1} \int_{i2^{-N}}^{(i+1)2^{-N}}\Big[Z_i 2^{N/2}f(\hat{W}^{\varepsilon}(t),t) -\frac{\sqrt{2H}2^{N}}{H+1/2}| t-i2^{-N}|^{H+1/2} \pa_1 f(\hat{W}^{\varepsilon}(t),t) \Big]\dd t \label{eq:I_approx_simu}\\
\mathscr{V}^{\varepsilon} &=\sum_{i=0}^{2^N-1} \int_{i2^{-N}}^{(i+1)2^{-N}}f^{2}(\hat{W}^{\varepsilon}(t),t)\dd t. \label{eq:V_approx_simu}
\end{align}

\begin{algorithm}[h]
	\label{algo:algorithm}
	\SetAlgoLined
	\SetKwInOut{Parameters}{Parameters}
	\Parameters           {$M \in \mathbb{N}$: \# Monte Carlo simulations  \\
		$N \in \mathbb{N}$: Haar grid 'level' such that $\varepsilon = 2^{-N}$ \\
		$d \in \mathbb{N}$: \# discretisation points of trapezoidal rule in each Haar subinterval}
	\KwOut{$M$ samples of bivariate object $(\tilde{\mathscr{I}}^{\varepsilon}, \mathscr{V}^{\varepsilon})$}
	\BlankLine
	initialize $\tilde{\mathscr{I}}^{\varepsilon} =  \mathscr{V}^{\varepsilon} =  \mathbf{0} \in \mathbb{R}^M$\;
	simulate array $\mathbf{Z} \in \mathbb{R}^{M \times 2^N}$ of \emph{iid} standard normals\;
	\For{{\upshape each	Haar subinterval }$[i2^{-N}, (i+1)2^{-N})$ {\upshape where} $i \in \{0, \ldots, 2^N -1\}$ }{
		choose discretization grid $\mathcal{D}^i$ with $d$ points on the Haar subinterval\;
		evaluate functions $\hat{e}_k^{\epsilon}, k=0, \ldots, i,$ from \eqref{eq:e_hat_simu} on $\mathcal{D}^i$ to obtain $\mathbf{\hat{e}}^{\epsilon} \in \mathbb{R}^{(i+1) \times d}$\;
		compute $\hat{\mathbf{W}}^{\epsilon} = \mathbf{Z}^* \times \mathbf{\hat{e}}^{\epsilon} \in \mathbb{R}^{M \times d} $ where $\mathbf{Z}^* \in \mathbb{R}^{M \times (i+1)}$ is the truncation of $\mathbf{Z}$ to its first $i+1$ columns such that $\hat{\mathbf{W}}^{\epsilon} $ is an approximation of the fBM on $\mathcal{D}^i$\;
		evaluate integrands from equations (\ref{eq:I_approx_simu}, \ref{eq:V_approx_simu}) on $\mathcal{D}^i$ using $\hat{\mathbf{W}}^{\epsilon} $ and the last column of $\mathbf{Z}^*$\;
		approximate respective integrals on subinterval by trapezoidal rule \;
		add obtained estimates to running sums $\tilde{\mathscr{I}}^{\varepsilon}$ and $\mathscr{V}^{\varepsilon}$\;
	}
	\Return{$\tilde{\mathscr{I}}^{\varepsilon}, \mathscr{V}^{\varepsilon}$}
	\caption{Simulation of $M$ samples of $(\tilde{\mathscr{I}}^{\varepsilon}, \mathscr{V}^{\varepsilon})$}
\end{algorithm}

{\bf Numerical convergence rates.} 

In this subsection, we will discuss strong convergence of the approximative object $\tilde{\mathscr{I}}^{\varepsilon}$ to the actual object of interest $\mathscr{I}$ as well as weak convergence of the option price itself as the Haar grid interval size $\varepsilon \to 0$. Specifically, we will be looking at Monte Carlo estimates of our errors, that is, in order to approximate some quantity $\mathbb{E}[X]$ for some random variable $X$, we will instead be looking at $\frac{1}{M}\sum_{i=1}^M X_i$ where the $X_i$ are $M$ \emph{iid} samples drawn from the same distribution as $X$. In other words, we need to generate $M$ realisations of the bivariate stochastic object $\left(\tilde{\mathscr{I}}^\eps, \mathscr{V}^{\varepsilon}\right)$, a task that can be vectorized as described below, thus avoiding expensive looping through realisations.

\begin{figure}
	\centering
	\captionsetup{width=\linewidth}
	\includegraphics[width=1\linewidth]{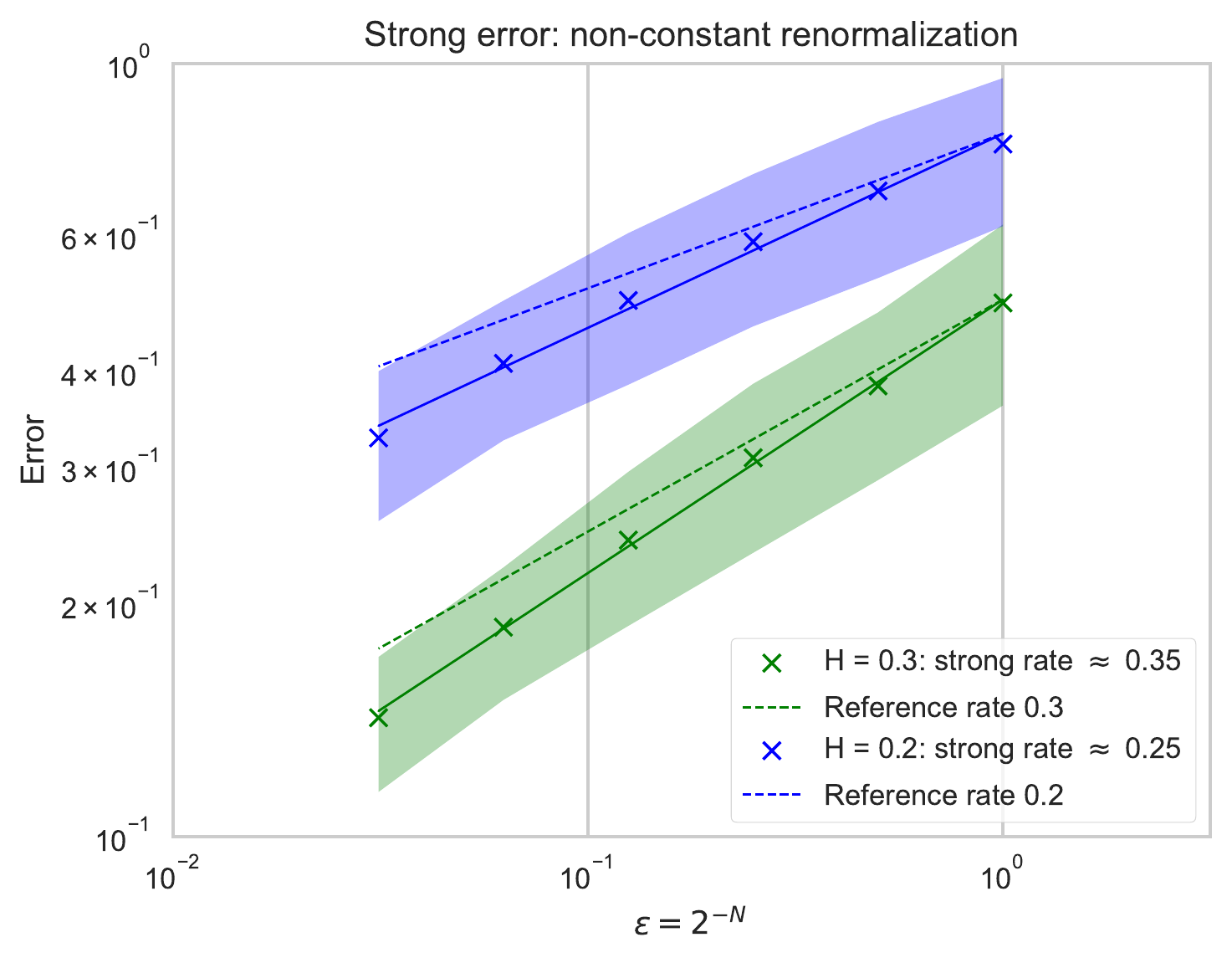}
	\caption{Empirical strong \eqref{eq:strong_objective} errors on a log-log-scale under a
          non-constant renormalization, obtained through $M=10^5$ Monte Carlo
          samples with a trapezoidal rule delta of $\Delta =
          2^{-17}$ and fineness of the reference Haar grid $\varepsilon' = 2^{-8}$. Solid lines visualize the empirical rates of convergence
          obtained by least squares regression, dashed lines provide visual reference rates. Shaded colour bands show interpolated 95\% confidence levels based on normality of Monte Carlo estimator.}
	\label{fig:non_const_rates}
\end{figure}

\begin{figure}
	\centering
	\captionsetup{width=\linewidth}
	\includegraphics[width=1\linewidth]{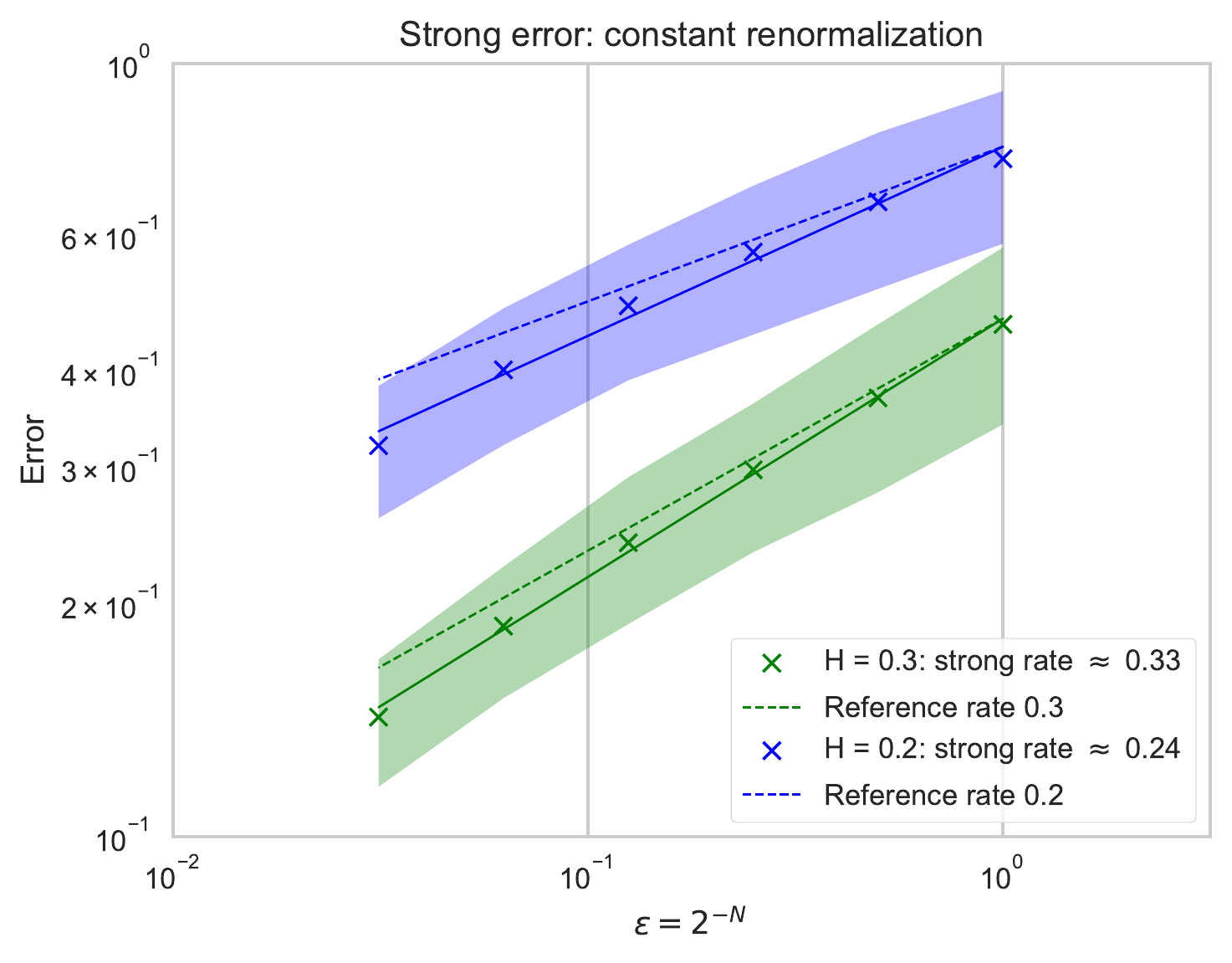}
	\caption{Empirical strong \eqref{eq:strong_objective} errors on a log-log-scale under a constant renormalization, obtained through $M=10^5$ Monte Carlo samples with a trapezoidal rule delta of $\Delta = 2^{-17}$ and fineness of the reference Haar grid $\varepsilon'=2^{-8}$. Solid lines visualize the empirical rates of convergence
          obtained by least squares regression, dashed lines provide visual reference rates. Shaded colour bands show interpolated 95\% confidence levels based on normality of Monte Carlo estimator.}
	\label{fig:const_rates}
\end{figure}


{\it Strong convergence}.
We verify Theorem \ref{thm:ConvergenceDiscreteIntegrals} (i) numerically , albeit in the $L^{2}(\Omega)$-sense and - for simplicity - with $f\left(x,t\right) =\exp(x)$, i.e. with no explicit time dependence. That is, we are concerned with Monte Carlo approximations of%
\begin{equation*}
\norm{  \tilde{ \mathscr{I}}^\eps 
-\int_{0}^{1}\exp(\hat{W}_t)
	\dd W_{t}}_{L^2{(\Omega})}
\end{equation*}%
and we expect an error almost of order $\varepsilon ^{H}$.

\begin{remark}
	
	We choose $f\left(x,t\right) =\exp(x)$ because this closely resembles the \emph{rough Bergomi} model (see \cite{BFG16} and below). Also, for the simplest non-trivial choice, $f\left(x, t\right) = x$, the discretization error is overshadowed by the Monte Carlo error, even for very coarse grids.
\end{remark}

Since $\left( W,\hat{W}\right) $ is a two-dimensional Gaussian process with known covariance structure, it is possible to use the Cholesky algorithm (cf. \cite{BFG16, BFGHS17x}) to simulate the joint paths on some grid and then use standard Riemann sums to approximate the integral. The value obtained in this way could serve as a reference value for our scheme. However - for strong convergence -  we need both objects to be based on the same stochastic sample. For this reason, we find it easier to construct a reference value by the wavelet-based scheme itself, i.e. we simply pick some $\varepsilon ^{\prime }\ll\varepsilon $ and consider
\begin{equation}
\label{eq:strong_objective}
\norm{\tilde{\mathscr{I}}^\eps-\tilde{\mathscr{I}}^{\eps'}}_{L^2(\Omega)}
\end{equation}%
as $\eps \rightarrow \eps^\prime$. As can be seen in Figures \ref{fig:non_const_rates} and \ref{fig:const_rates},  both renormalization approaches stated in \eqref{eq:num_K} are consistent with a theoretical strong rate of almost $H$ across the full range of $0< H < 0.5$ (cf. discussion at the end of Section \ref{subsec:ApproximationTheory}).

\begin{remark}[Weak convergence]
In absence of a Markovian structure, a proper weak convergence analysis proves to be subtle, that is, an analysis that - for suitable test functions $\varphi$ - yields a rate of convergence for
\begin{equation*}
\left\vert \mathbb{E}\left[\varphi \left( \tilde{\mathscr{I}}^\eps\right)\right]  - \mathbb{E}\left[\varphi \left( \int_{0}^{1}\exp(\hat{W}_t)\dd W_{t}\right) \right] \right\vert 
\end{equation*}%
as $\epsilon \rightarrow 0$, remains an open problem. However, picking $\varphi(x) = x^2$, Ito's isometry yields
\begin{equation}
\mathbb{E}\left[ \left(\int_{0}^{1}\exp(\hat{W}_t)\dd W_{t}\right)^2 \right] = \int_0^1 \mathbb{E}\left[\exp\left(2\hat{W}_t\right)\right] \dd t = \int_0^1 \exp\left(2t^{2H}\right) \dd t
\end{equation}
which we can be approximated numerically. So we can consider
\begin{equation}
\label{eq:weak_objective}
\left\vert \mathbb{E}\left[\left(\tilde{\mathscr{I}}^\eps\right)^2\right]  - \int_0^1 \exp\left(2t^{2H}\right) \dd t \right\vert 
\end{equation}%
as $\varepsilon \to 0$. Our preliminary results indicate that for both renormalization approaches the weak rate seems to be around the strong rate $H$.
\end{remark}


{\it Option pricing}. We pick a simplified version of the \emph{rough Bergomi} model \cite{BFG16} where the instantaneous variance is given by
\begin{equation*}
f^{2}\left( x\right) =\sigma _{0}^{2}\exp \left( \eta x\right)
\end{equation*}%
with $\sigma _{0}$ and $\eta$ denoting spot volatility and volatility of volatility respectively. Let $C^{\varepsilon}$ denote the approximation of the call price  \eqref{eq:call_rough_def} based on $\left(\tilde{\mathscr{I}}^\eps, \mathscr{V}^{\varepsilon}\right)$, fix some $\varepsilon ^{\prime }\ll\varepsilon$ and consider
\begin{equation}
\label{eq:weak_option_error}
\left\lvert{C^{\varepsilon}\left( S_{0},K,T=1\right) - C^{\varepsilon^\prime}\left( S_{0},K,T=1\right)}\right\rvert
\end{equation}
as $\varepsilon \to \varepsilon^\prime$. Empirical results displayed in Figure \ref{fig:option_rates} indicate a weak rate of $2H$ across the full range of $0<H<\frac{1}{2}$. 

\begin{figure}
	\centering
	\captionsetup{width=\linewidth}
	\includegraphics[width=1\linewidth]{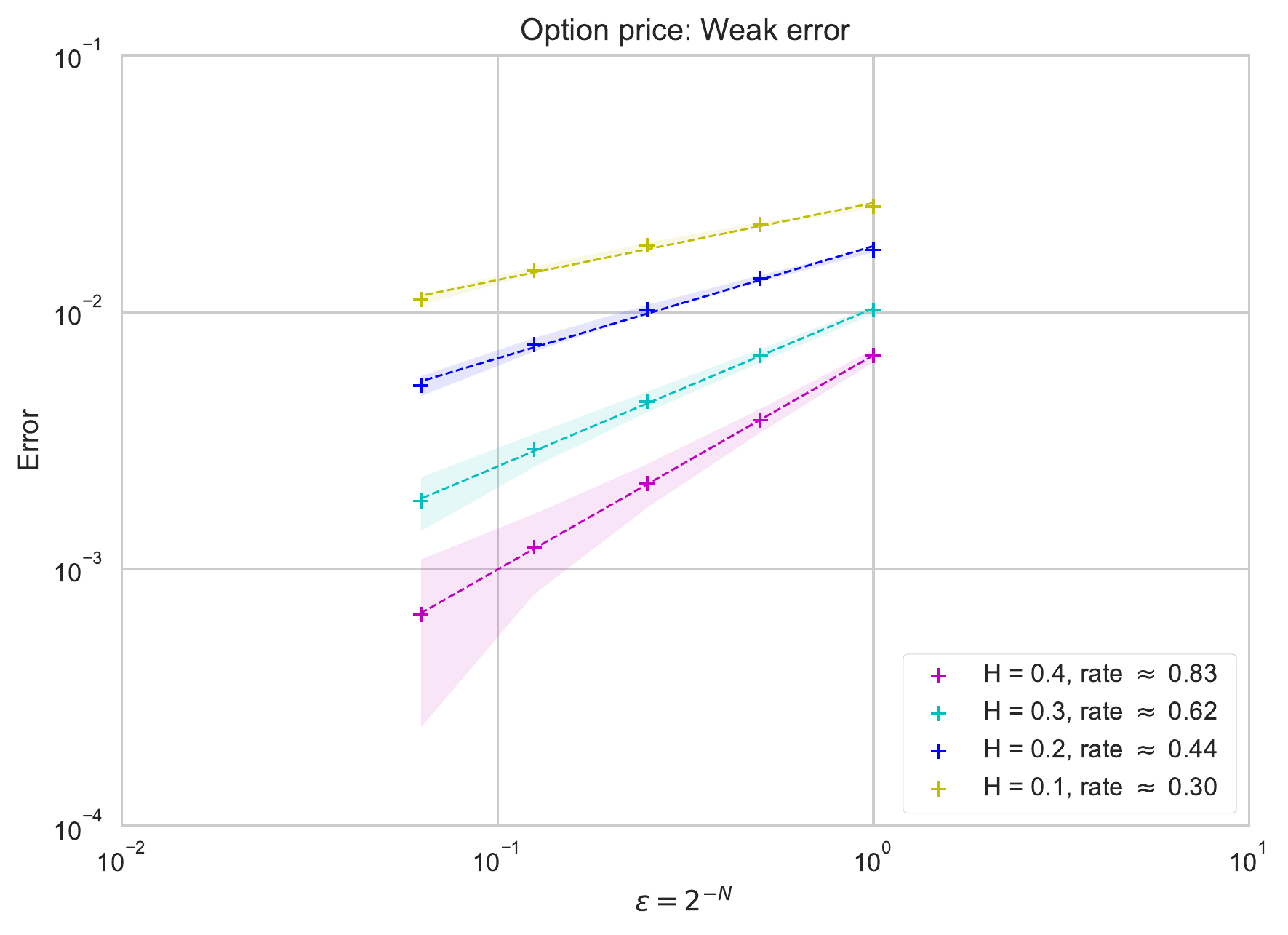}
	\caption{Empirical weak \eqref{eq:weak_option_error} errors on a log-log-scale as $\varepsilon \to \varepsilon^{\prime}=2^{-8}$, obtained through $M=10^5$ MC samples with spot $S_0=1$, strike $K=1$, correlation $\rho=-0.8$, spot vol $\sigma_0 = 0.2$, vvol $\eta=2$ and trapezoidal rule delta $\Delta = 2^{-17}$. Dashed lines represent LS estimates for rate estimation, shaded colour bands show confidence levels based on normality of Monte Carlo estimator.}
	\label{fig:option_rates}
\end{figure}

\appendix

\section{Approximation and renormalization (Proofs)}  
\label{app:AR}

\begin{lemma}
\label{lem:FractionalDifferentiation} For $a,b>0$ and $\delta\in [0,1]$ we
have for $x\notin[0,1)$ 
\begin{align*}
|a^x-b^x|\leq 2^{1-\delta} |x|^\delta (a^{x-\delta}\vee b^{x-\delta}) \cdot
|a-b|^{\delta}
\end{align*}
and for $x\in (0,1)$ 
\begin{align*}
|a^x-b^x|\leq 2^{1-\delta} |x|^\delta (a^{(x-1)\delta}b^{x(1-\delta)} \vee
b^{(x-1)\delta}a^{x(1-\delta)}) \cdot |a-b|^{\delta}\,.
\end{align*}
\end{lemma}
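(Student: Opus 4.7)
My plan is to prove this by log-convex interpolation between the two endpoint cases $\delta = 0$ and $\delta = 1$, followed by a short monotonicity argument to simplify the product that arises.

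First I would establish the Lipschitz endpoint ($\delta = 1$). By the mean value theorem applied to $t \mapsto t^x$, one has $a^x - b^x = x\, c^{x-1}(a-b)$ for some $c$ strictly between $a$ and $b$. The key observation is that $c^{x-1} \le a^{x-1} \vee b^{x-1}$ in every regime: for $x \ge 1$ the map $t \mapsto t^{x-1}$ is nondecreasing and $c \le a \vee b$; for $x \le 0$ or $x \in (0,1)$ the exponent $x-1$ is strictly negative so $t \mapsto t^{x-1}$ is decreasing, and combined with $c \ge a \wedge b$ this gives $c^{x-1} \le (a \wedge b)^{x-1} = a^{x-1} \vee b^{x-1}$. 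In all cases I obtain $|a^x - b^x| \le |x|\,(a^{x-1} \vee b^{x-1})\, |a-b|$. The trivial endpoint ($\delta = 0$) is just $|a^x - b^x| \le a^x + b^x \le 2(a^x \vee b^x)$.

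Writing $|a^x-b^x| = |a^x-b^x|^\delta\, |a^x-b^x|^{1-\delta}$ and applying the two endpoint bounds to the factors yields
\begin{equation*}
|a^x - b^x| \le 2^{1-\delta}\, |x|^\delta\, (a^{x-1} \vee b^{x-1})^\delta\, (a^x \vee b^x)^{1-\delta}\, |a-b|^\delta.
\end{equation*}
The final step is a brief case analysis to identify the product of the two suprema. When $x \notin [0,1)$, the maps $t \mapsto t^{x-1}$ and $t \mapsto t^x$ are monotone on $(0,\infty)$ in the \emph{same} direction (both nondecreasing if $x \ge 1$, both decreasing if $x \le 0$); hence both suprema are attained at the same argument, and the product collapses to $a^{(x-1)\delta + x(1-\delta)} = a^{x-\delta}$ or the analogue in $b$, i.e.\ $a^{x-\delta} \vee b^{x-\delta}$. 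When $x \in (0,1)$ the two monotonicities \emph{disagree}: if $a \ge b$ one has $a^{x-1} \vee b^{x-1} = b^{x-1}$ while $a^x \vee b^x = a^x$, producing $b^{(x-1)\delta}\, a^{x(1-\delta)}$, whereas the reverse ordering yields the other term, and maximising over the two orderings gives precisely the claimed bound.

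The only real obstacle is bookkeeping in the monotonicity step, in particular the observation that when the exponent is negative the supremum of $t \mapsto t^{x-1}$ is achieved at the smaller argument; this is what lets the mean value estimate $c^{x-1} \le (a \wedge b)^{x-1}$ survive. The degenerate value $x = 0$ is not covered by either branch of the statement, but is trivial since both sides vanish.
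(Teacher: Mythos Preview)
Your proof is correct and follows essentially the same route as the paper's own argument: interpolation between the mean value bound $|a^x-b^x|\le |x|\,(a^{x-1}\vee b^{x-1})\,|a-b|$ and the trivial bound $|a^x-b^x|\le 2(a^x\vee b^x)$. You have simply spelled out the monotonicity bookkeeping that the paper leaves implicit.
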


\begin{proof}
This follows from interpolation between $|a^x-b^x|\leq |x| \sup_{z\in [a,b]}
z^{x-1}|a-b|\leq |x| a^{x-1}\vee b^{x-1} |a-b|$ and $|a^x-b^x|\leq
a^x+b^x\leq 2 a^x\vee b^x$.
\end{proof}

\begin{proof}[Proof of Lemma \protect\ref{lem:ConvergenceW} ]
Rewriting $\hat{W}^{\varepsilon}(t) =\sqrt{2H}\int_{0}^{\infty}\mathrm{d} 
W(u)\,\int_{0}^{\infty}\mathrm{d} 
r\,\delta^{\varepsilon}(r,u)\,|t-r|^{H-1/2}\mathbf{1}_{r<t}$ we have 
\begin{align*}
\mathbb{E} \left|\hat{W}^\varepsilon(t)-\hat{W}^\varepsilon(s) \right|^2
&=2H\int_0^\infty \mathrm{d}  u \left(\int_0^\infty \mathrm{d}  r
\delta^\varepsilon(r,u) (\mathbf{1}_{r<t} |t-r|^{H-1/2}-\mathbf{1}_{r<s}
|s-r|^{H-1/2})\right)^2 \\
&\lesssim\int_0^\infty \mathrm{d}  u \int_0^\infty \mathrm{d}  r
|\delta^\varepsilon(r,u)| \left(\mathbf{1}_{r<t} |t-r|^{H-1/2}-\mathbf{1}%
_{r<s} |s-r|^{H-1/2}\right)^2 \\
&\lesssim \int_0^{s\vee t}\mathrm{d}  r \left(\mathbf{1}_{r<t} |t-r|^{H-1/2}-%
\mathbf{1}_{r<s} |s-r|^{H-1/2}\right)^2 \,,
\end{align*}
where we used the It\^o isometry in the first and Jensen's inequality in the
second step. Assuming $s<t$ we can split the integral in domains $[0,s]$ and 
$[s,t]$ which yields the bound $|t-s|^{2H} \int_0^s |s-r|^{4H-1} +|t-s|^{2H}
\lesssim |t-s|^{2H}$. Application of equivalence of moments for Gaussian
random variables and Kolmogorov's criterion then shows the first inequality.

The second one follows by interpolation (and once more Kolmogorov) if we can
prove that 
\begin{align}  \label{eq:proof:ConvergenceW1}
\mathbb{E}|\hat{W}^\varepsilon(t)-\hat{W}(t)|^2\lesssim
\varepsilon^{2H-\kappa^{\prime }}.
\end{align}
We have, by It\^o's isometry, 
\begin{equation*}
\mathbb{E} [\left|\hat{W}^{\varepsilon}(t)-\hat{W}(t)\right|^{2}]=2H%
\int_{0}^{\infty}\mathrm{d}  u\,\left(\int_{0}^{\infty}\mathrm{d} 
r\,\delta^{\varepsilon}(r,u)\,|t-r|^{H-1/2}\mathbf{1}_{r<t}-|t-u|^{H-1/2}%
\mathbf{1}_{u<t}\right)^{2} .
\end{equation*}
We can enlarge the inner integral such that $\int\delta^{\varepsilon}(r,u)=1$
by negleting an error term which can be estimated by $\int_{B(0,c%
\varepsilon)}\mathrm{d}  u(\int_{B(0,c\varepsilon)}\mathrm{d} 
r\,\varepsilon^{-1}|t-r|^{H-1/2})^{2}\lesssim\varepsilon^{2H}$. Application
of Jensen's inequality then yields 
\begin{align*}
\int_{0}^{\infty}\mathrm{d}  u\,\int_{-\infty}^{\infty}\mathrm{d} 
r\,|\delta^{\varepsilon}(r,u)|\,\left(|t-r|^{H-1/2}\mathbf{1}%
_{r<t}-|t-u|^{H-1/2}\mathbf{1}_{u<t}\right)^{2} .
\end{align*}
The cases where either $r>u$ or $u>t$ yield an $\varepsilon^{2H}$ error term
as above so that bounding with Lemma \ref{lem:FractionalDifferentiation} 
\begin{equation*}
\left||t-r|^{H-1/2}-|t-u|^{H-1/2}\right|\lesssim(|t-r|^{-1/2+%
\kappa}+|t-u|^{-1/2+\kappa})\cdot|u-r|^{H-\kappa} 
\end{equation*}
proves \eqref{eq:proof:ConvergenceW1}.
\end{proof}

\begin{proof}[Proof of \eqref{eq:ConvergenceModel}]
We only consider the symbols $\Xi\mathcal{I}^{m}(\Xi)$, the symbols $%
\mathcal{I}(\Xi)^m$ can be handled with Lemma \ref{lem:ConvergenceW}. In
view of Lemma \ref{lem:ReshapingLimit} and \ref{lem:ReshapingApproximation}
we have to controll (for $m\geq 0$ in the first equation and $m>0$ in the
second equation) 
\begin{align}  \label{eq:proof:ConvergenceModel1}
&\mathbb{E}\left|\int_0^\infty \mathrm{d}  W^\varepsilon(t)\diamond
\varphi^\lambda_s(t) (\hat{W}^\varepsilon_{st})^m-\int_0^\infty \mathrm{d} 
W(t)\diamond \varphi^\lambda_s(t) (\hat{W}_{st})^m \right|^2\lesssim
\varepsilon^{2\delta\kappa^{\prime }} \lambda^{2mH-1-2\kappa^{\prime }}\,, \\
&\mathbb{E}\left|\int_0^\infty \mathrm{d}  t \,\varphi^\lambda_s(t)\left(%
\mathscr{K}^\varepsilon(s,t)(\hat{W}^\varepsilon_{st})^{m-1}-K(s-t)(\hat{W}%
_{st})^{m-1}\right)\right|^2\lesssim \varepsilon^{2\delta\kappa^{\prime }}
\lambda^{2mH-1-2\kappa^{\prime }}\, ,  \label{eq:proof:ConvergenceModel2}
\end{align}
where $\hat{W}^{(\varepsilon)}_{st}=\hat{W}^{(\varepsilon)}(t)-\hat{W}%
^{(\varepsilon)}(s)$ and where $\delta\in (0,1),\,\kappa^{\prime }\in (0,H)$
is arbitrary. Equivalence of norms in the Wiener chaos and a version of
Kolmogorov's criterion for models (\cite[Proposition 3.32]{Hai14}) then gives
\eqref{eq:ConvergenceModel} (note that this gives for a better homogeneity
then we actually need since we only subtract $2\kappa^{\prime }$ and not
$2m\kappa^{\prime }$ in the exponent of $\lambda\in (0,1]$). 
We can rewrite the random variable of \eqref{eq:proof:ConvergenceModel1} as 
\begin{align*}
\int_0^{T+1} \mathrm{d}  W(t)\diamond \int \mathrm{d}  u\,
\delta^\varepsilon(t,u) \left(\mathbf{1}_{u\geq 0}\varphi^\lambda_s(u) (\hat{%
W}^\varepsilon_{su})^m-\varphi^\lambda_s(t) (\hat{W}_{st})^m\right)
\end{align*}

Using \cite[Theorem 7.39]{Jan97}  and Jensen's inequality we can estimate
the second moment of this Skorohod integral by 
\begin{align*}
\mathbb{E}|\eqref{eq:proof:ConvergenceModel1}|^2\lesssim \int_0^{T+1}\mathrm{%
d}  t\,\int \mathrm{d}  u\, |\delta^\varepsilon(t,u)| \,\mathbb{E}\left(%
\mathbf{1}_{u\geq 0}\varphi^\lambda_s(u) (\hat{W}^\varepsilon_{su})^m-%
\varphi^\lambda_s(t) (\hat{W}_{st})^m\right)^2 \ .
\end{align*}
In the regime $\lambda\leq \varepsilon$ every term in the squared
parentheses can simply be bounded (using Lemma \ref{lem:ConvergenceW}) by $%
\lambda^{2H-1}\lesssim \lambda^{2H-1-2\kappa^{\prime }}
\varepsilon^{\kappa^{\prime }} $. If on the other hand $\varepsilon<\lambda$
we can split off a term of order $\int_{B(0,c\varepsilon)} \mathrm{d} 
t\int_{B(0,c\varepsilon)}\frac{\mathrm{d}  u}{\varepsilon}\lesssim
\lambda^{2mH-1-2\kappa^{\prime }}\varepsilon^{2\kappa^{\prime }}$ to drop
the indicator $\mathbf{1}_{u\geq 0}$ and can bound on the support of $%
\delta^\varepsilon(t,u)$ 
\begin{align*}
|\varphi^\lambda_s(u) (\hat{W}^\varepsilon_{su})^m-\varphi^\lambda_s(t) (%
\hat{W}_{ts})^m| &\leq |(\varphi^\lambda_s(u)-\varphi^\lambda_s(t))\cdot|%
\hat{W}^\varepsilon_{su}|^m+ |\varphi^\lambda_s(t)| \cdot \left|(\hat{W}%
^\varepsilon_{su})^m-(\hat{W}_{st})^m\right| \\
&\lesssim C_\varepsilon \mathbf{1}_{B(s,(1+2c)\lambda)}(t)\,\lambda^{-1-\kappa^{\prime }}
\varepsilon^{\kappa^{\prime }} \lambda^{mH}+C_\varepsilon \mathbf{1}_{B(s,\lambda)}(t)
\lambda^{-1} \lambda^{mH-\kappa^{\prime }} \varepsilon^{\kappa^{\prime }}\,,
\end{align*}
where $C_\varepsilon>0$ denote random constants that are uniformly bounded in $L^p$ for $p\in [1,\infty)$. This shows \eqref{eq:proof:ConvergenceModel1}. To estimate %
\eqref{eq:proof:ConvergenceModel2} we first note that due to $\mathbb{E}|(%
\hat{W})^{m-1}_{st}-(\hat{W}_{st}^\varepsilon)^{m-1}|^ 2\lesssim
|t-s|^{2(m-1)H-2\kappa^{\prime }} \varepsilon^{\delta 2\kappa^{\prime }}$ we
are only left with 
\begin{align*}
\mathbb{E}\left|\int_0^\infty \mathrm{d}  t\, \varphi^\lambda_s(t)\left(%
\mathscr{K}^\varepsilon(s,t)-K(s-t)\right) (\hat{W}^\varepsilon_{st})^{m-1}%
\right|^2 
\lesssim \int_0^\infty\mathrm{d}  t \,\varphi^\lambda_s(t) \left|\mathscr{K}%
^\varepsilon(s,t)-K(s-t)\right|^2 \, |s-t|^{2(m-1)H}\,,
\end{align*}
which is straightforward to bound with Lemma \ref%
{lem:ApproximateVolterraEstimate} if $\lambda\leq \varepsilon$. For $%
\lambda<\varepsilon$ and $t>2c\varepsilon$ with $c>0$ as in Definition \ref%
{def:Dirac} the desired bound follows from Lemma \ref%
{lem:ModelConvergenceHelpLemma}. The remaining case however contributes with 
\begin{align*}
&\int_{B(0,2c \varepsilon)} \mathrm{d}  t\,\varphi^\lambda_s(t)
|t-s|^{2(m-1)H} (\varepsilon^{2H-1}+|t-s|^{2H-1}) \\
& \lesssim \int_{B(s,\lambda^{-1} 2c\varepsilon)} \mathrm{d}  t\,
(\lambda^{2(m-1)H} \varepsilon^{2H-1}+\lambda^{2mH-1} |t|^{2mH-1}) \\
&\lesssim \lambda^{2(m-1)H-1}\varepsilon^{2H}+\lambda^{2mH-1}
(\lambda^{-1}\varepsilon)^{2mH}\leq \lambda^{2kH-\kappa^{\prime }}
\varepsilon^{\kappa^{\prime }}\,,
\end{align*}
which completes the proof.
\end{proof}

\begin{lemma}
\label{lem:ModelConvergenceHelpLemma} For $c$ as in Definition \ref%
{def:Dirac} and $t>2c\varepsilon$ and $s\in\mathbb{R} $ we have for $%
\kappa^{\prime }\in (0,H)$ 
\begin{align*}
|K(s-t)-\mathscr{K}^\varepsilon(s,t)|\lesssim |s-t|^{H-1/2-\kappa^{\prime }}
\varepsilon^{\kappa^{\prime }}\,.
\end{align*}
\end{lemma}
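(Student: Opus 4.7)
The plan is to split the argument into a \emph{near} regime $|s-t|\lesssim\varepsilon$ and a \emph{far} regime $|s-t|\gtrsim\varepsilon$, so that in one regime the bound follows from the crude estimate in Lemma~\ref{lem:ApproximateVolterraEstimate} and in the other we may exploit the smoothness of $K$ away from the origin via Lemma~\ref{lem:FractionalDifferentiation}.

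In the near regime, I will prove the estimate by bounding $\mathscr{K}^\varepsilon(s,t)$ and $K(s-t)$ separately. By Lemma \ref{lem:ApproximateVolterraEstimate} we have $|\mathscr{K}^\varepsilon(s,t)|\lesssim\varepsilon^{H-1/2}$, and for $s>t$ we have $|K(s-t)|\lesssim (s-t)^{H-1/2}$. The target bound is $|s-t|^{H-1/2-\kappa'}\varepsilon^{\kappa'}$; for $|s-t|\le 2c\varepsilon$, both of the above are controlled by it (the first because $H-1/2-\kappa'<0$, the second because $(s-t)^{\kappa'}\le\varepsilon^{\kappa'}$ up to constants). For $s\le t$ with $|s-t|\le 2c\varepsilon$ the term $K(s-t)$ vanishes and the same direct bound applies.

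In the far regime, $|s-t|>2c\varepsilon$, I will use that $\int\!\int\delta^\varepsilon(t,x_1)\delta^\varepsilon(x_1,x_2)\,dx_1 dx_2=1$, together with the fact that $t>2c\varepsilon$ forces the supports of $\delta^\varepsilon(t,\cdot)$ and $\delta^\varepsilon(x_1,\cdot)$ to lie in $(0,\infty)$, to rewrite
\begin{equation*}
\mathscr{K}^\varepsilon(s,t)-K(s-t)=\int\!\!\int\delta^\varepsilon(t,x_1)\delta^\varepsilon(x_1,x_2)\bigl[K(s-x_2)-K(s-t)\bigr]\,dx_1\,dx_2.
\end{equation*}
If $s<t$ (hence $s-t<-2c\varepsilon$), then on the support $|x_2-t|\le 2c\varepsilon$ we have $s-x_2<0$, so both $K(s-x_2)$ and $K(s-t)$ vanish and the difference is zero. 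If instead $s>t$, then $s-t>2c\varepsilon$ and $s-x_2\ge (s-t)/2>0$, so one may apply Lemma \ref{lem:FractionalDifferentiation} with $a=s-x_2$, $b=s-t$, $x=H-1/2\notin[0,1)$ and exponent $\delta=\kappa'$ to obtain
\begin{equation*}
|K(s-x_2)-K(s-t)|\lesssim |x_2-t|^{\kappa'}\bigl((s-x_2)\wedge(s-t)\bigr)^{H-1/2-\kappa'}\lesssim |x_2-t|^{\kappa'}|s-t|^{H-1/2-\kappa'}.
\end{equation*}
Since $|x_2-t|\le 2c\varepsilon$ on the support of the kernel and $\int\!\int|\delta^\varepsilon(t,x_1)\delta^\varepsilon(x_1,x_2)|\,dx_1\,dx_2$ is bounded uniformly in $\varepsilon$ (by Definition \ref{def:Dirac}), integrating yields the claimed bound.

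The only subtlety is the bookkeeping of the cutoff of $K$ at zero and of the indicator $\mathbf{1}_{u,v\ge 0}$ in the definition of $\mathscr{K}^\varepsilon$; the hypothesis $t>2c\varepsilon$ is precisely what ensures that in the far regime the integration variables $x_1,x_2$ stay strictly positive, so that only the sign of $s-t$ (handled by splitting $s<t$ vs.\ $s>t$) needs to be tracked. Once this is done, the interpolation Lemma \ref{lem:FractionalDifferentiation} produces the rate $\kappa'$ optimally.
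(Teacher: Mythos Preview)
Your proof is correct and follows essentially the same approach as the paper: a near/far split, with Lemma~\ref{lem:ApproximateVolterraEstimate} handling the near regime and Lemma~\ref{lem:FractionalDifferentiation} handling the far regime after writing $\mathscr{K}^\varepsilon(s,t)-K(s-t)$ as an integral against the iterated kernel $\delta^{2,\varepsilon}(t,\cdot)=\int\delta^\varepsilon(t,x_1)\delta^\varepsilon(x_1,\cdot)\,dx_1$.

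One small bookkeeping slip: with your threshold $|s-t|>2c\varepsilon$ in the far regime, the claim $s-x_2\ge(s-t)/2$ does not follow (take for instance $s-t=2.5c\varepsilon$ and $x_2-t=2c\varepsilon$, giving $s-x_2=0.5c\varepsilon<(s-t)/2$). You need $|s-t|>4c\varepsilon$ for this lower bound, which is precisely the threshold the paper uses. Since your near-regime argument works verbatim for $|s-t|\le 4c\varepsilon$ (both bounds only improve), this is a one-line fix.
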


\begin{proof}
If $2c\varepsilon\geq |s-t|/2$ the bound easily follows from Lemma \ref%
{lem:ApproximateVolterraEstimate}. If $2c\varepsilon\geq |s-t|/2$ we can
reshape 
\begin{align*}
|K(s-t)-\mathscr{K}^\varepsilon(s,t)|=\left|\int_{-\infty}^\infty \dd u\,
\delta^{2,\varepsilon}(t,u) (\mathbf{1}_{t<s} |s-t|^{H-1/2}-\mathbf{1}%
_{s<u}|s-u|^{H-1/2}) \right|\,,
\end{align*}
where $\delta^{2,\varepsilon}(t,\cdot):=\int_{-\infty}^\infty \mathrm{d} 
x_1 \int_{-\infty}^\infty \mathrm{d}  x_2 \delta^\varepsilon(t,x_1)
\delta^\varepsilon(x_1,\cdot)$ satisfies the properties in Definition \ref%
{def:Dirac} with support in $B(t,2c\varepsilon)$. Note that for $%
2c\varepsilon\geq |s-t|/2$ either both indicator functions vanish or none so
that we only have to consider $t<s$ where we obtain with Lemma \ref%
{lem:FractionalDifferentiation} up to a constant $\int_{-\infty}^\infty
|\delta^{2,\varepsilon}(t,u)| |t-s|^{H-1/2-\kappa^{\prime }}
\varepsilon^{\kappa^{\prime }}\lesssim|t-s|^{H-1/2-\kappa^{\prime }}
\varepsilon^{\kappa^{\prime }}$.
\end{proof}

\begin{proof}[Proof of Lemma \protect\ref{lem:BesovBounds}]
We restrict ourselves to proof \eqref{eq:ReconstructionEstimate1}, the other
three inequalities follow by basically the same arguments. We fix a wavelet
basis $\phi_y=\phi(\cdot-y),\,y\in \mathbb{Z}$, $\psi^j_y=2^{j/2}\,\psi(2^j(
\cdot-y)),\,j \geq 0,\,y\in 2^{-j}\mathbb{Z}$ and use in the following the
notation $\phi_y=2^{j/2}\phi(2^j(\cdot-y)),\,j\geq 0,\,y\in 2^{-j}\mathbb{Z}$%
. Within this basis we can express the $\mathcal{B}^{\beta}_{1,\infty}$
regularity of $\varphi$ by 
\begin{align*}
\sum_{y\in \mathbb{Z}} |(\varphi,\phi_y)_{L^2}| +\sup_{j\geq 0} 2^{j\beta }
\sum_{y\in 2^{-j}\mathbb{Z}} 2^{-d j/2} |(\varphi,\psi^j_y)_{L^2}| \lesssim
\|\varphi\|_{\mathcal{B}^{\beta}_{1,\infty}}
\end{align*}

Without loss of generality we can assume that $\lambda=2^{-j_0}$ is dyadic,
so that by scaling 
\begin{align}  \label{eq:proof:BesovBoundsScaling}
\sum_{y\in 2^{-j_0} \mathbb{Z}} |(\varphi^\lambda_s,\phi_y^{j_0})_{L^2}|
+\sup_{j\geq j_0} 2^{(j-j_0)\beta } \sum_{y\in 2^{-j}\mathbb{Z}} 2^{-(j-j_0)
d/2} |(\varphi^\lambda_s,\psi^j_y)_{L^2}| \lesssim 2^{j_0 d/2} \|\varphi\|_{%
\mathcal{B}^{\beta}_{1,\infty}}\,.
\end{align}
We can now rewrite 
\begin{align}
&(\mathcal{R}F-\Pi_s F_s)(\varphi^\lambda_s)=  \notag \\
&\sum_{y\in 2^{-j_0} \mathbb{Z}} (\mathcal{R}F-\Pi_s F_s)(\phi^{j_0}_y)
\cdot (\phi^{j_0}_y,\varphi^\lambda_s)_{L^2} +\sum_{j\geq j_0} \sum_{y\in
2^{-j} \mathbb{Z}} (\mathcal{R}F-\Pi_s F_s)(\psi^j_y) \cdot
(\psi^j_y,\varphi^\lambda_s)_{L^2}  \notag \\
&=\sum_{y\in 2^{-j_0} \mathbb{Z}} (\mathcal{R}F-\Pi_y
F_y)(\phi^{j_0}_y)\,(\phi^{j_0}_y,\varphi^\lambda_s)_{L^2}+\sum_{y\in
2^{-j_0}\mathbb{Z}}
\Pi_y(F_y-\Gamma_{ys}F_s)(\phi^{j_0}_y)\,(\phi^{j_0}_y,\varphi^\lambda_s)
\label{eq:proof:BesovBounds1} \\
&+\sum_{{\scriptsize 
\begin{array}{c}
j\geq j_0, \\ 
y\in 2^{-j}\mathbb{Z}%
\end{array}%
}} (\mathcal{R}F-\Pi_y F_y)(\psi^j_y)\,(\psi^j_y,\varphi^\lambda_s)_{L^2}+
\sum_{{\scriptsize 
\begin{array}{c}
j\geq j_0, \\ 
y\in 2^{-j}\mathbb{Z}%
\end{array}%
}} \Pi_y(F_y-\Gamma_{ys}F_s)(\psi^j_y)\,(\psi^j_y,\varphi^\lambda_s)_{L^2}
\label{eq:proof:BesovBounds2}
\end{align}
Only finite terms in \eqref{eq:proof:BesovBounds1} contribute which all can
be bounded (up to a constant) by $2^{-j_0 \gamma}=\lambda^{\gamma}$.
Moreover 
\begin{align*}
\eqref{eq:proof:BesovBounds2} &\lesssim\sum_{j\geq j_0} 2^{-j \gamma} +
\sum_{j\geq j_0} \sum_{A\ni \alpha <\gamma} 2^{-j\alpha}
2^{-(\gamma-\alpha)j_0} \sum_{y \in 2^{-j}\mathbb{Z}} 2^{jd/2}
|(\varphi_s^\lambda,\psi^j_y)_{L^2}| \\
&\lesssim \sum_{j\geq j_0} 2^{-j \gamma}+ 2^{-\gamma j_0} \sum_{A\ni
\alpha<\gamma }\sum_{j\geq j_0} 2^{-(j-j_0)\alpha} 2^{-(j-j_0)\beta}
\lesssim 2^{-j_0 \gamma}=\lambda^{\gamma}
\end{align*}
where we used $\beta+\alpha>0,\,\alpha\in A$ in the last line.
\end{proof}

\begin{proof}[Proof of Lemma \protect\ref{lem:ReconstructionIdentity}]
Note first that via Taylor's formula it easy to check that for scaled Haar
wavelets $\varphi_s^\lambda$ and $\gamma\in (0,(M+1)H)$ 
\begin{align}  \label{eq:proof:ReconstructionIdentity1}
\mathbb{E} \left[\left|\int\,\varphi_{s}^{\lambda}(t)\,f(\hat{W}(t),t) 
\mathrm{d}  W(t)-\Pi_{s}F\Xi(s)(\varphi_{s}^{\lambda})\right|^{2}\right]%
^{1/2}\lesssim\lambda^{(\gamma-1/2-\kappa)}
\end{align}

uniformly for $s$ in compact sets. The same argument as in the proof of
Lemma \ref{lem:BesovBounds} then implies that %
\eqref{eq:proof:ReconstructionIdentity1} actually holds for compactly
supported smooth function $\varphi$ (or even compactly supported functions
in $\mathcal{B}^{\beta}_{1,\infty}(\mathbb{R} ^d)$). Proceeding now as in 
\cite{Hai14} we choose test functions $\eta,\psi\in
C_{c}^{\infty}$ with $\eta$ even and $\mathrm{supp\,\eta\subseteq
B(0,1),\,\int\eta(t)\,\mathrm{d}  t=1}$. We then obtain for $%
\psi^{\delta}(s)=\langle\psi,\eta_{s}^{\delta}\rangle$ 
\begin{align*}
&\mathbb{E} \left[|\mathcal{R}F\Xi(\psi^{\delta})-\int\psi^{\delta}(t)\,f(%
\hat{W}(t),t))\mathrm{d}  W(t)|^{2}\right]^{1/2} \\
&=\mathbb{E} \left[\left|\int\mathrm{d}  x\,\psi(x)\,\left(\mathcal{R}%
F\Xi(\eta^{\delta}_x)-\int\,\eta^{\delta}_x(t)\,f(\hat{W}(t),t))\mathrm{d} 
W(t)\right)\right|^{2}\right]^{1/2} \\
& \lesssim\int\mathrm{d}  x\,\psi^{2}(x)\,\delta^{\gamma-1/2-\kappa}\overset{%
\delta\rightarrow 0}{\rightarrow}0
\end{align*}
where we included a term $\Pi_x \Xi F(x)$ in the second step.  It remains to
note that 
\begin{equation*}
\int\psi^{\delta}(t)\,f(\hat{W}(t),t)) \mathrm{d}  W(t)\overset{%
\delta\rightarrow0}{\rightarrow}\int\psi(t)\,f(\hat{W}(t),t) \mathrm{d} 
W(t) 
\end{equation*}

in $L^{2}(\mathbb{P})$ and further $\mathcal{R}F\Xi(\psi^{\delta})\rightarrow%
\mathcal{R}F\Xi(\psi)$ a.s. and thus in $L^{2}(\mathbb{P})$. Putting
everything together we obtain 
\begin{equation*}
\mathbb{E} \left[|\mathcal{R}F\Xi(\psi)-\int\psi(t)\,f(\hat{W}(t),t)\mathrm{d%
}  W(t)|^{2}\right]=0 
\end{equation*}
which implies the first statement. For the second identity we proceed in the
same way but making use of Lemma \ref{lem:ApproximateIto}.
\end{proof}

\begin{lemma}
\label{lem:ApproximateIto}For $F\in L^{2}(\mathbb{P}\times\mathrm{Leb)}$ we
have 
\begin{equation*}
\mathbb{E} \left[\left|\int F(t) \mathrm{d}  W^\varepsilon(t)\right|^{2}%
\right]\lesssim\int\mathbb{E} \left[\left|F(t)\right|^{2}\right] \mathrm{d} 
t 
\end{equation*}
\end{lemma}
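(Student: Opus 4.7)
The plan is to reduce the estimate to the ordinary It\^o isometry for $W$, combined with an elementary kernel bound of Schur type for $\delta^\varepsilon$. First I would use the representation $\dot W^\varepsilon(t) = \int \delta^\varepsilon(t,u)\, dW(u)$ and a (stochastic) Fubini argument to rewrite
$$
\int F(t)\, dW^\varepsilon(t) \;=\; \int F(t)\, \dot W^\varepsilon(t)\, dt \;=\; \int G^\varepsilon(u)\, dW(u), \qquad G^\varepsilon(u):=\int F(t)\,\delta^\varepsilon(t,u)\,dt.
$$
For $F$ adapted to the filtration of $W$ --- which covers the only situation where this lemma is invoked, namely the proof of Lemma~\ref{lem:ReconstructionIdentity} --- this is an honest It\^o integral; in the general non-adapted $L^2$ case it must be read as a Skorokhod integral. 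Either way, the corresponding $L^2$-isometry gives
$$
\mathbb{E}\!\left[\left|\int F(t)\, dW^\varepsilon(t)\right|^2\right] \;\lesssim\; \mathbb{E}\!\left[\int G^\varepsilon(u)^2\, du\right].
$$

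The remaining step is deterministic and boils down to bounding the kernel $\delta^\varepsilon$ uniformly in both arguments. By Cauchy-Schwarz in $t$,
$$
G^\varepsilon(u)^2 \;\le\; \left(\int |\delta^\varepsilon(t,u)|\, dt\right) \cdot \left(\int F(t)^2\, |\delta^\varepsilon(t,u)|\, dt\right).
$$
The first factor is $O(1)$ uniformly in $u$: combining the sup-bound $\|\delta^\varepsilon\|_\infty\lesssim \varepsilon^{-1}$ with the support condition $\mathrm{supp}\,\delta^\varepsilon(x,\cdot)\subseteq B(x,c\varepsilon)$ --- and hence, by the symmetry $\delta^\varepsilon(x,y)=\delta^\varepsilon(y,x)$, also $\mathrm{supp}\,\delta^\varepsilon(\cdot,u)\subseteq B(u,c\varepsilon)$ --- yields $\int|\delta^\varepsilon(t,u)|\,dt \lesssim 1$; the symmetric bound $\int|\delta^\varepsilon(t,u)|\,du \lesssim 1$ holds for the same reason. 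Integrating in $u$ and exchanging the order of integration collapses this to
$$
\int G^\varepsilon(u)^2\, du \;\lesssim\; \int F(t)^2\left(\int |\delta^\varepsilon(t,u)|\, du\right)dt \;\lesssim\; \int F(t)^2\, dt,
$$
and taking expectation gives the claim.

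The main (and essentially the only) technical obstacle is the Fubini-plus-isometry step when $F$ is not adapted (note that $G^\varepsilon$ is never adapted, since $\delta^\varepsilon(t,u)$ charges $t$ on both sides of $u$). This point does not arise in the applications --- in the proof of Lemma~\ref{lem:ReconstructionIdentity} the integrand is a smooth function of $\hat{W}^\varepsilon$ and thus adapted to the filtration of $W$ --- so one may simply restrict attention to the adapted case and use the classical It\^o calculus, or alternatively appeal to the Skorokhod isometry with its Malliavin correction term, which is controlled by exactly the same $\varepsilon$-uniform kernel estimates above.
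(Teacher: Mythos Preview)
Your argument is correct and follows essentially the same route as the paper: rewrite $\int F\,dW^\varepsilon$ as $\int G^\varepsilon(u)\,dW(u)$ via stochastic Fubini, apply the isometry, and control $G^\varepsilon$ using the uniform $L^1$-bound on $\delta^\varepsilon(\cdot,u)$ coming from the sup-norm and support assumptions. The paper phrases the last step as Jensen's inequality after normalising $|\delta^\varepsilon(\cdot,r)|$ to a probability density, which is exactly your Cauchy--Schwarz bound in different clothing; you are also somewhat more explicit than the paper about the adaptedness issue for $G^\varepsilon$.
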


\begin{proof}
As a consequence of Definition~\ref{def:Dirac}, we have $\int |
  \delta^\eps(x,y)\mathrm{d}x|$ is bounded uniformly in $\eps$ and $y$.
We can, therefore, normalize $|\delta^\eps(\cdot, r)|$ to a probability density and
apply It\^o's isometry and Jensen's inequality to 
\begin{align*}
\int F(t) \mathrm{d}  W^\varepsilon(t) =
\int_{0}^{\infty}\int_{0}^{\infty}\,\delta^{\varepsilon}(t,r)\,F(t) \mathrm{d%
}  t \,\mathrm{d} 
W(r) 
.
\end{align*}
\end{proof}

\section{Large deviations proofs} \label{app:LD}

\begin{proof}[Proof of Lemma \protect\ref{lem:Pih}]
The fact that $\Pi^h$ satisfies the algebraic constraints is obvious so we
focus on the analytic ones. The Sobolev embedding $L^2 \subset C^{-1/2}$
yields that $\Pi \Xi$, $\Pi \bar{\Xi}$ satisfy the right bounds. Noting that
(by e.g. \cite[section 3.1]{SKM}) $\left\|K \ast h \right\|_{C^H} \leq C
\|h\|_{C^{-1/2}}$ gives the bound for $\Pi \mathcal{I}(\Xi)^m$. Finally, we
note that using Cauchy-Schwarz's inequality 
\begin{eqnarray*}
\left|\left\langle \Pi_t \Xi \mathcal{I}(\Xi)^m, \phi^\lambda_x \right
\rangle \right| &=& \left|\int h_1(s) \left(K\ast h_1(s)-K\ast
h_1(t)\right)^m \phi^\lambda_x(s) ds \right| \\
&\leq& \left( \sup_{|s-t|\leq \lambda} \left|K\ast h_1(s)-K\ast
h_1(t)\right|\right)^m \|h_1\|_{L^2} \|\ \|\phi^\lambda_x\|_{L^2} \\
&\lesssim& \lambda^{mH-1/2}.
\end{eqnarray*}
The inequality for $\Pi \bar{\Xi}\mathcal{I}(\Xi)^m$ follows in the same
way, and the bounds for $\Gamma$ also follow.

Continuity is $h$ is proved by similar arguments which we leave to the
reader.
\end{proof}

\begin{proof}[Proof of Theorem \protect\ref{thm:LD}]
The theorem is a special case of results in Hairer-Weber \cite{HW15} for
large deviations of Banach-valued Gaussian polynomials. Let us recall the
setting.

Let $(B, \mathcal{H},\mu)$ be an abstract Wiener space and let us call $\xi$
the associated $B$-valued Gaussian random variable, and $(e_i)$ an
orthonormal basis of $\mathcal{H}$ with $e_i \in B^\ast$. For a multi-index $%
\alpha \in \mathbb{N}^\mathbb{N}$ with only finitely many nonzero entries,
define $H_\alpha(\xi) = \Pi_{i \geq 0} H_{\alpha_i}(\langle \xi,e_i\rangle)$%
, where the $H_n$, $n\geq 0$ are the usual Hermite polynomials. For a given
Banach space $E$, the homogeneous Wiener chaos $\mathcal{H}^{(k)}(E)$ is
defined as the closure in $L^2(E,\mu)$ of the linear space generated by
elements of the form 
\begin{equation*}
H_{\alpha}(\xi) y, \;\;\; |\alpha|=k, \,y \in E.
\end{equation*}
Also define the inhomogeneous Wiener chaos $\mathcal{H}^{k}(E) =
\oplus_{i=0}^k \mathcal{H}^{(i)}(E)$. Finally for $\Psi \in \mathcal{H}{(k)}%
(E)$ and $h \in \mathcal{H}$ we define $\Psi^{hom}(h) = \int \Psi(\xi + h)
\mu(d\xi)$, and for $\Psi = \sum_{i \leq k} \Psi_i \in \mathcal{H}^k(E)$, we
let $\Psi^{hom}=(\Psi_k)^{hom}$.

Now let $E = \oplus_{\tau \in \mathcal{W}} E_{\tau}$ where $\mathcal{W}$ is
a finite set and each $E_\tau$ is a separable Banach space. Let $\Psi =
\oplus_{\tau \in \mathcal{W}} \Psi_{\tau}$ be a random variable such that
each $\Psi_{\tau}$ is in $\mathcal{H}^{K_\tau}(E_\tau)$. Letting $%
\Psi^{\delta} = \oplus_{\tau} \delta^{K_\tau} \Psi_\tau$, Theorem 3.5 in 
\cite{HW15} states that $\Psi^\delta$ satisfies a LDP with rate function
given by 
\begin{equation*}
I(\Psi) = \inf \left\{ 1/2\|h\|_{\mathcal{H}}^2, \;\;\; \Psi = \oplus_{\tau
\in \mathcal{W}} \Psi_\tau^{hom}(h) \right\}.
\end{equation*}

In our case, we apply this result with $\mathcal{W} = \left\{ \Xi \mathcal{I}%
(\Xi)^m, \bar{\Xi} \mathcal{I}(\Xi)^m, 0 \leq m \leq M \right\}$ and each $%
E_\tau$ is the closure of smooth functions $(t,s) \mapsto \Pi_t \tau(s)$
under the norms 
\begin{equation*}
\| \Pi \tau \| = \sup_{\lambda, t, \phi} \lambda^{-|\tau|}
\left|\left\langle \Pi_t \tau, \phi_t^\lambda \right\rangle \right|.
\end{equation*}

In order to obtain Theorem \ref{thm:LD}, it suffices then to identify $(\Pi
\tau)^{hom}(h)$ which is done in the following lemma.
\end{proof}

\begin{lemma}
For each $\tau \in \mathcal{W}$ and $h \in \mathcal{H}$, $(\Pi
\tau)^{hom}(h) = \Pi^h \tau$.
\end{lemma}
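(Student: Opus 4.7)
The plan is to verify the identity separately on each symbol $\tau \in \mathcal{W} = \{\Xi \mathcal{I}(\Xi)^m, \bar\Xi \mathcal{I}(\Xi)^m : 0 \le m \le M\}$, using the following uniform principle from Gaussian analysis: for any element $F$ of the inhomogeneous Wiener chaos over $(W,\bar W)$ of total order $\le N$ whose top chaos component is $F_N = I_N(f_N)$ (an $N$-fold multiple Wiener--It\^o integral of a symmetric kernel), the operation $F \mapsto F^{hom}(h)$ annihilates all strictly lower-order chaos and evaluates $f_N$ on $h^{\otimes N}$. Concretely this amounts to the substitution rule ``$\diamond dW \leftrightarrow h_1(r)\,dr$, $d\bar W \leftrightarrow h_2(r)\,dr$, $\hat W \leftrightarrow \hat h := K\ast h_1$, Wick powers $\leftrightarrow$ ordinary powers''. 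I would record this principle first; it follows from $\EE[H_n(X+a)] = a^n$ for standard Gaussian $X$, applied factor by factor to generic monomials $H_\alpha(\xi)$.

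For $\tau = \Xi$ (the case $m=0$), $\Pi_s \Xi = \dot W$ already lies in the first chaos, so $(\Pi \Xi)^{hom}(h) = h_1 = \Pi^h \Xi$; the case $\bar \Xi$ is identical. For $\tau = \Xi \mathcal{I}(\Xi)^m$ with $m \ge 1$, I would invoke Lemma~\ref{lem:ReshapingLimit} to get
\begin{equation*}
\Pi_s \Xi \mathcal{I}(\Xi)^m(\varphi) \;=\; \int \varphi(r)(\hat W_r - \hat W_s)^m \diamond dW(r) \;-\; m\int \varphi(r)\, K(s-r)\, (\hat W_r - \hat W_s)^{m-1}\, dr .
\end{equation*}
The Skorokhod term here lives in the top chaos of order $m+1$, while the Volterra correction lies in chaos of order $\le m-1$ and is killed by $^{hom}$. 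Applying the substitution rule to the Skorokhod term yields $\int \varphi(r)(\hat h(r) - \hat h(s))^m h_1(r)\, dr$, and this equals $\Pi^h_s \Xi \mathcal{I}(\Xi)^m(\varphi)$ by the multiplicativity identity $\Pi^h_s(\Xi \cdot \mathcal{I}(\Xi)^m) = h_1 \cdot (\hat h(\cdot) - \hat h(s))^m$.

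The symbols $\bar \Xi \mathcal{I}(\Xi)^m$ are handled in essentially the same way, and in fact more cleanly: since $(\hat W_r - \hat W_s)^m$ is $\mathcal{F}^W$-measurable and $\bar W$ is independent of $W$, the defining It\^o integral $\int \varphi(r)(\hat W_r - \hat W_s)^m\, d\bar W(r)$ is a genuine stochastic integral sitting in the mixed chaos of total order $\le m+1$; expanding $(\hat W_r - \hat W_s)^m$ into Wick powers, its top chaos piece is the $m$-th Wick power of $\hat W_r - \hat W_s$ tensored against $d\bar W$, which the substitution rule sends to $\int \varphi(r)(\hat h(r) - \hat h(s))^m h_2(r)\, dr = \Pi^h_s \bar \Xi \mathcal{I}(\Xi)^m(\varphi)$. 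The main obstacle is the Skorokhod case: I need $\Pi_s \Xi \mathcal{I}(\Xi)^m$ put into a form in which the top-chaos component is both explicit and cleanly separated from lower-order debris, which is precisely the content of Lemma~\ref{lem:ReshapingLimit}; once that decomposition is in hand the verification reduces to a mechanical application of the substitution rule.
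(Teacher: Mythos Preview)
Your argument is correct and takes a genuinely different route from the paper's. The paper does \emph{not} work directly with the It\^o model: instead it uses the continuity of $\Psi \mapsto \Psi^{hom}(h)$ (Cameron--Martin) together with the model convergence $\hat\Pi^\varepsilon \to \Pi$ to reduce to computing $(\hat\Pi^\varepsilon \tau)^{hom}(h)$; for the smooth model the leading-order term is manifestly a pointwise product $\prod_i \langle \xi, g_i\rangle$, so the substitution rule is immediate, yielding $(\hat\Pi^\varepsilon\tau)^{hom}(h) = \Pi^{h^\varepsilon}\tau$, and the conclusion follows from continuity of $h \mapsto \Pi^h$ (Lemma~\ref{lem:Pih}). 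Your approach instead identifies the top-chaos component of $\Pi_s \Xi\mathcal{I}(\Xi)^m$ directly via the Skorokhod decomposition of Lemma~\ref{lem:ReshapingLimit}, which is more self-contained and avoids both the approximation layer and the two continuity ingredients. One small expository point: the Skorokhod term $\int \varphi(t)(\hat W_t - \hat W_s)^m \diamond dW(t)$ does not live \emph{purely} in chaos $m+1$ (since $(\hat W_t - \hat W_s)^m$ has components in all chaoses $\le m$); what you need, and what your framework correctly sets up, is that its component in chaos $m+1$ is the top component of the full expression and is the multiple integral with kernel $\int \varphi(t)\,\mathrm{sym}(g_{s,t}^{\otimes m} \otimes \delta_t)\,dt$, whose evaluation on $h_1^{\otimes(m+1)}$ gives exactly $\int \varphi(t)(\hat h(t)-\hat h(s))^m h_1(t)\,dt$.
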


\begin{proof}
We prove it for $\tau = \Xi \mathcal{I}(\Xi)^m$, the other cases are
similar. Note that $\Psi \mapsto \Psi^{hom}(h)$ is continuous from $\mathcal{%
H}^{k}$ to $\mathbb{R}$ for fixed $h$ (by an application of the
Cameron-Martin formula), and so 
it is
enough to prove that 
\begin{equation}  \label{eq:Pihom}
\lim_{\varepsilon \to 0} \left(\hat{\Pi}^\varepsilon \tau\right)^{hom}(h) =
\Pi^h \tau,
\end{equation}
where $\hat{\Pi}^\varepsilon$ corresponds to the (renormalized model) with
piecewise linear approximation of $\xi$.  For any test function $\varphi$,
by definition one has 
\begin{equation*}
\left\langle \Pi^\varepsilon_t \tau, \varphi \right\rangle = - \left\langle
I^\varepsilon, \varphi^{\prime }\right\rangle, 
\end{equation*}
where 
\begin{equation*}
I^\varepsilon(s) =\int_t^s \left((K \ast \xi^\varepsilon)(u) - (K \ast
\xi^\varepsilon)(t)\right)^m \xi^\varepsilon(u) du - C_\varepsilon
R_{m}^\varepsilon,
\end{equation*}
where $R_m^\varepsilon$ is a renormalization term which is valued in the
lower-order chaos $\mathcal{H}^{m}$, so that by definition it does not play
a role in the value of $(\Pi \tau)^{hom}$. Now note that if $\Phi$ is a
Wiener polynomial whose leading order term is given by $\Pi_{i=1}^k
\left\langle \xi, g_i \right\rangle$ (where the $g_i$ are in $\mathcal{H}$)
then $\Phi^{hom}(h) = \Pi_{i=1}^k \left\langle h, g_i \right\rangle$. In our
case this means that 
\begin{equation*}
(I^\varepsilon)^{hom}(s) =\int_t^s \left((K \ast h_1^\varepsilon)(u) - (K
\ast h_1^\varepsilon)(t)\right)^m h_1^\varepsilon(u) du
\end{equation*}
where $h_1^\varepsilon = \rho^\varepsilon \ast h_1$. In other words we have $%
(\hat{\Pi}^\varepsilon \tau)^{hom} = \Pi ^{h^\varepsilon}_\tau$, and by
continuity of $h\mapsto \Pi^h$ we obtain \eqref{eq:Pihom}.
\end{proof}

\section{Proofs of Section \ref{sec:Volt}}   \label{app:RH}

The proof of Theorem \ref{thm:Volt} follows from the estimates in the lemmas
below, using the standard procedure of taking a time horizon $T$ small
enough to obtain a contraction and then iterating. Note that due to global
boundedness of $u$, $v$ the estimates are uniform in the starting point $z$,
so that one obtains global existence (unlike the typical situation in SPDE
where the theory only gives local in time existence).

By translating $u$ and $v$ we can assume w.l.o.g. that the initial condition
is $z=0$. Then the solution will take value in $\mathcal{D}%
^\gamma_{0,T}(\Gamma) := \left\{~F \in \mathcal{D}^\gamma_T(\Gamma), \;\;
F(0) = 0.\right\}$.



\begin{lemma}
For each $F$ and $\tilde{F}$ in $\mathcal{D}^\gamma_{0,T}(\mathcal{T})$ for
the respective models $(\Pi,\Gamma)$ and $(\tilde{\Pi},\tilde{\Gamma})$, and
for each $\gamma <1 $ 
and $T$ $\in$ $(0,1]$, one has 
\begin{equation*}
{\vert\kern-0.25ex\vert\kern-0.25ex\vert \mathcal{K} F; \mathcal{K} \tilde{F}
\vert\kern-0.25ex\vert\kern-0.25ex\vert}_{\mathcal{D}^\gamma_T(\Gamma), 
\mathcal{D}^\gamma_T(\tilde{\Gamma})} \lesssim T^{\eta} {\vert\kern%
-0.25ex\vert\kern-0.25ex\vert F;\tilde{F} \vert\kern-0.25ex\vert\kern%
-0.25ex\vert}_{\mathcal{D}^{\gamma+|\Xi|}_T(\Gamma), \mathcal{D}%
^{\gamma+|\Xi|}_T(\tilde{\Gamma})}
\end{equation*}
for some $\eta>0$, the proportionality constants depending only on $\gamma$
and the norms of $(\Pi,\Gamma)$ and $(\tilde{\Pi},\tilde{\Gamma})$.
\end{lemma}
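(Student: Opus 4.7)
The plan is to combine a Schauder-type gain in regularity coming from Lemma \ref{lem:KisK} with a standard time-rescaling argument that trades extra regularity for a small factor $T^\eta$. The key observation is that while $F,\tilde F$ are only measured in the weak norm $\mathcal{D}^{\gamma+|\Xi|}$, their image under $\mathcal{K}$ lies in a strictly stronger space $\mathcal{D}^{\gamma'}$, with $\gamma':=\min\{\gamma+|\Xi|+\beta,1\}=\min\{\gamma+H-\kappa,1\}>\gamma$ (since $H>\kappa$ and $\beta=1/2+H$ is the regularizing order of $K$). Setting $\eta:=\gamma'-\gamma>0$, the desired $T^\eta$ is precisely this surplus of regularity converted into a time factor.

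First I would establish the Schauder-type Lipschitz estimate at the level of $\mathcal{D}^{\gamma'}$: for modelled distributions $F,\tilde F$ over the (for this step, common) model $(\Pi,\Gamma)$, exploit linearity of $\mathcal{K}$ to write $\mathcal{K} F-\mathcal{K}\tilde F=\mathcal{I}(F-\tilde F)+\bigl(K\ast\mathcal{R}(F-\tilde F)\bigr)\mathbf{1}$, and directly repeat the argument proving Lemma \ref{lem:KisK}, now applied to $F-\tilde F\in\mathcal{D}^{\gamma+|\Xi|}$. The action of the abstract integration map $\mathcal{I}$ shifts homogeneities by $\beta$, so the noise-type components carry the required bounds; the polynomial-type component $(K\ast\mathcal{R}(F-\tilde F))\mathbf{1}$ is controlled by $\mathcal{R}$-continuity (Theorem \ref{thm:Reconstruction}) together with the $\beta$-regularizing nature of $K$ (cf. the classical Schauder-type estimate used in the proof of \cite[Thm 5.12]{Hai14}). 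This yields
\[
{\vert\kern-0.25ex\vert\kern-0.25ex\vert \mathcal{K} F;\mathcal{K}\tilde F\vert\kern-0.25ex\vert\kern-0.25ex\vert}_{\mathcal{D}^{\gamma'}_T(\Gamma),\mathcal{D}^{\gamma'}_T(\tilde\Gamma)}\lesssim {\vert\kern-0.25ex\vert\kern-0.25ex\vert F;\tilde F\vert\kern-0.25ex\vert\kern-0.25ex\vert}_{\mathcal{D}^{\gamma+|\Xi|}_T(\Gamma),\mathcal{D}^{\gamma+|\Xi|}_T(\tilde\Gamma)}.
\]

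Second, I would convert the gain $\gamma'-\gamma=\eta$ into the claimed $T^\eta$. This uses the fact that $F(0)=\tilde F(0)=0$ (by definition of $\mathcal{D}^\gamma_{0,T}$), so that, since $\mathcal{I}(0)=0$ and the causal kernel $K$ annihilates constants at the origin, also $(\mathcal{K} F)(0)=(\mathcal{K}\tilde F)(0)=0$. Writing $G:=\mathcal{K} F-\mathcal{K}\tilde F$, for each $\beta\in A$ with $\beta<\gamma$ one has, pointwise,
\[
|G(t)|_\beta=|G(t)-\Gamma_{t,0}G(0)|_\beta\le t^{\gamma'-\beta}\|G\|_{\mathcal{D}^{\gamma'}_T}\le T^{\eta}\,t^{\gamma-\beta}\|G\|_{\mathcal{D}^{\gamma'}_T},
\]
and similarly for the two-model pointwise distance; while for the translation part, the ratio $|G(t)-\Gamma_{ts}G(s)-(\ldots)|_\beta/|t-s|^{\gamma-\beta}$ gains a factor $|t-s|^{\gamma'-\gamma}\le T^\eta$ over the corresponding $\mathcal{D}^{\gamma'}$-ratio. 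Combining these estimates gives ${\vert\kern-0.25ex\vert\kern-0.25ex\vert G\vert\kern-0.25ex\vert\kern-0.25ex\vert}_{\mathcal{D}^{\gamma}_T}\le T^\eta {\vert\kern-0.25ex\vert\kern-0.25ex\vert G\vert\kern-0.25ex\vert\kern-0.25ex\vert}_{\mathcal{D}^{\gamma'}_T}$, which chained with the previous display yields the claim.

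The only genuinely delicate step is the first: establishing the two-modelled-distribution Schauder estimate. Since our regularity structure does not contain the full polynomial structure, one cannot invoke \cite[Thm 5.12]{Hai14} directly (the same reason baby Schauder in Lemma \ref{lem:KisK} does not saturate at $\gamma+\beta$ but at $\min\{\gamma+\beta,1\}$). One has to redo the bounds by hand on the (very short) list of symbols appearing in $\mathcal{T}$, checking in particular that $\Gamma_{ts}$ acts trivially on $\mathcal{I}(F(s)-\tilde F(s))$ up to lower-order terms absorbed in $\mathbf{1}$, so that the increment $G(t)-\Gamma_{ts}G(s)=\mathcal{I}(F(t)-\Gamma_{ts}F(s))+\bigl[(K\ast\mathcal{R}(F-\tilde F))(t)-(K\ast\mathcal{R}(F-\tilde F))(s)-\langle\Pi_s\mathcal{I}(F(s)-\tilde F(s)),\delta_t\rangle\bigr]\mathbf{1}$ carries the right $|t-s|^{\gamma'-\beta}$ bound via the standard kernel-increment plus reconstruction-centering argument.
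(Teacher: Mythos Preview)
Your proposal is correct and follows essentially the same two-step strategy as the paper: first a Schauder-type estimate giving $\mathcal{K}:\mathcal{D}^{\gamma+|\Xi|}\to\mathcal{D}^{\bar\gamma}$ for some $\bar\gamma\in(\gamma,1)$ with $\bar\gamma\le\gamma+H-\kappa$ (the paper cites \cite[Sec.~5]{Hai14} for this, while you spell out more of the mechanics), and second the trade of surplus regularity $\eta=\bar\gamma-\gamma$ for the factor $T^\eta$ via vanishing at $t=0$. Your notation $G=\mathcal{K}F-\mathcal{K}\tilde F$ in the second step is slightly loose in the two-model setting (one should work with the two-model distance terms $|\mathcal{K}F(t)-\Gamma_{ts}\mathcal{K}F(s)-(\mathcal{K}\tilde F(t)-\tilde\Gamma_{ts}\mathcal{K}\tilde F(s))|_\beta$ rather than a single object $G$), but you clearly have the right argument in mind.
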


\begin{proof}
($\gamma <1$ avoids the appearance of any polynomial terms, present in \cite[Sec. 5]{Hai14} but not in our case.)
Note that if $F$ belongs to $\mathcal{D}_{0,T}^\gamma$ so does $\mathcal{K} F
$. Since $K$ is a regularizing kernel of order $\beta := \frac{1}{2}+H$ in the sense
of \cite{Hai14}, it follows along the lines of \cite[Sec. 5]{Hai14} that 
\begin{equation*}
{\vert\kern-0.25ex\vert\kern-0.25ex\vert \mathcal{K} F; \mathcal{K} \tilde{F}
\vert\kern-0.25ex\vert\kern-0.25ex\vert}_{\mathcal{D}^{\bar{\gamma}%
}_T(\Gamma), \mathcal{D}^{\bar{\gamma}}_T(\tilde{\Gamma})} \lesssim {\vert%
\kern-0.25ex\vert\kern-0.25ex\vert F;\tilde{F} \vert\kern-0.25ex\vert\kern%
-0.25ex\vert}_{\mathcal{D}^{\gamma+|\Xi|}_T(\Gamma), \mathcal{D}%
^{\gamma+|\Xi|}_T(\tilde{\Gamma})}
\end{equation*}
where we pick  $\bar{\gamma} \in (\gamma, 1)$ such that $\bar{\gamma} \leq \gamma+|\Xi|
+ \beta = \gamma + H - \kappa$. On the other hand, it is clear from
the definition of ${\vert\kern-0.25ex\vert\kern-0.25ex\vert \cdot;\cdot \vert%
\kern-0.25ex\vert\kern-0.25ex\vert}$ that since $\mathcal{K} F$ and $%
\mathcal{K} \tilde{F}$ vanish at $t=0$ it holds that 
\begin{equation*}
{\vert\kern-0.25ex\vert\kern-0.25ex\vert \mathcal{K} F; \mathcal{K} \tilde{F}
\vert\kern-0.25ex\vert\kern-0.25ex\vert}_{\mathcal{D}^{{\gamma}}_T(\Gamma), 
\mathcal{D}^{{\gamma}}_T(\tilde{\Gamma})} \lesssim T^\eta {\vert\kern%
-0.25ex\vert\kern-0.25ex\vert \mathcal{K} F; \mathcal{K} \tilde{F} \vert\kern%
-0.25ex\vert\kern-0.25ex\vert}_{\mathcal{D}^{\bar{\gamma}}_T(\Gamma), 
\mathcal{D}^{\bar{\gamma}}_T(\tilde{\Gamma})}
\end{equation*}
for $\eta = \bar{\gamma} - \gamma$.
\end{proof}

\begin{lemma}
Let $G$ (resp. $\tilde{G}$) be the composition operator corresponding to $g$
(resp. $\tilde{g}$) $\in$ $C^{M+2}_b$. Then one has 
\begin{equation*}
{\vert\kern-0.25ex\vert\kern-0.25ex\vert G(F); \tilde{G}(\tilde{F}) \vert%
\kern-0.25ex\vert\kern-0.25ex\vert}_{\mathcal{D}^{{\gamma}}_T(\Gamma), 
\mathcal{D}^{{\gamma}}_T(\tilde{\Gamma})} \lesssim \|G-\tilde{G}\|_{C^{M+2}}
+{\vert\kern-0.25ex\vert\kern-0.25ex\vert F; \tilde{F} \vert\kern-0.25ex\vert%
\kern-0.25ex\vert}_{\mathcal{D}^{{\gamma}}_T(\Gamma), \mathcal{D}^{{\gamma}%
}_T(\tilde{\Gamma})} 
\end{equation*}
the proportionality constants depending only on $\gamma$ and the norms of $%
(\Pi,\Gamma)$, $(\tilde{\Pi},\tilde{\Gamma})$, $F$, $\tilde{F}$, $g$, $%
\tilde{g}$.
\end{lemma}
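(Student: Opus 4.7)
The plan is to reduce this to a direct application of \cite[Thm 4.16]{Hai14} combined with explicit computation of the coefficients of $G(F)$, relying crucially on the fact that $\{\mathbf{1}, \mathcal{I}(\Xi)\}$ spans a sector of the regularity structure. First I would use the triangle inequality in the modelled-distribution distance and write
\[
G(F) - \tilde{G}(\tilde F) \;=\; \bigl(G(F) - G(\tilde F)\bigr) + \bigl(G(\tilde F) - \tilde G(\tilde F)\bigr),
\]
(where the second summand should be understood as being evaluated against the same model $\tilde \Gamma$), so that the corresponding distance is bounded by the sum of the two obvious terms.

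For the first summand, both $G(F)$ and $G(\tilde F)$ are obtained from the \emph{same} smooth function $g \in C^{M+2}_b$ applied to two different modelled distributions evaluated against two different models. This is exactly the setting of \cite[Thm 4.16]{Hai14}: local Lipschitz dependence of composition with a smooth function with respect to both the modelled distribution and the model, with proportionality constant polynomial in $\|g\|_{C^{M+2}_b}$ and in the norms of $F,\tilde F$ and of the respective models. The restriction $\gamma<1$ (inherited from the Schauder estimate) means no abstract polynomial symbols appear in the jet of $G(F)$, so only the sector $\langle \mathbf{1},\mathcal{I}(\Xi)\rangle$ is populated; the verification of the hypotheses of \cite[Thm 4.16]{Hai14} is then particularly clean and yields exactly the bound by $\vertiii{F;\tilde F}_{\mathcal{D}^\gamma_T(\Gamma),\mathcal{D}^\gamma_T(\tilde\Gamma)}$.

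For the second summand, I would use the explicit formula \eqref{equ:compMD}: writing $\tilde F(t) = \tilde F_0(t)\mathbf{1} + \tilde F_1(t)\mathcal{I}(\Xi)$ one has
\[
\bigl(G-\tilde G\bigr)(\tilde F)(t) = (g-\tilde g)(\tilde F_0(t))\,\mathbf{1} + (g-\tilde g)'(\tilde F_0(t))\,\tilde F_1(t)\,\mathcal{I}(\Xi),
\]
and similarly the difference $(G-\tilde G)(\tilde F)(t) - \tilde\Gamma_{ts}(G-\tilde G)(\tilde F)(s)$ can be written in terms of differences of $(g-\tilde g)$ and $(g-\tilde g)'$ evaluated at two points (using a Taylor expansion up to order $M+1$, controlling the remainder by $\|g-\tilde g\|_{C^{M+2}}$ times a power of the H\"older increment $|t-s|$). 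Bounding each of these coefficient-wise contributions in the $|\cdot|_\beta$ norms, using that $\tilde F \in \mathcal{D}^\gamma_T(\tilde\Gamma)$ has bounded norm, yields $\vertiii{G(\tilde F); \tilde G(\tilde F)} \lesssim \|g-\tilde g\|_{C^{M+2}}$ with constants depending on $\|\tilde F\|_{\mathcal{D}^\gamma_T(\tilde\Gamma)}$ and $\|\tilde\Pi\|_T$.

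The main obstacle I anticipate is purely bookkeeping: one must carefully check the Taylor remainder bound in the second step, verifying that the $C^{M+2}$-regularity of $g-\tilde g$ is really what is needed to control the increments $(g-\tilde g)^{(k)}(\tilde F_0(t))-(g-\tilde g)^{(k)}(\tilde F_0(s))$ uniformly with the correct H\"older exponent $\gamma - k(H-\kappa)$ for $k=0,1$. Once this is handled (exactly as in the proof of \cite[Thm 4.16]{Hai14}, but now applied to the function $g-\tilde g$ whose full $C^{M+2}_b$-norm is small rather than bounded), both contributions combine to give the claimed estimate.
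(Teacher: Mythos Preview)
Your proposal is correct and follows essentially the same approach as the paper: invoke \cite[Thm 4.16]{Hai14} for the Lipschitz dependence on $F$ (and the model), and use the triangle inequality to handle the dependence on $g$. The paper's own proof is a two-line sketch pointing to exactly these two ingredients; you have simply spelled out the triangle-inequality step in more detail, including the explicit coefficient-wise computation via \eqref{equ:compMD}.
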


\begin{proof}
This follows from the estimate in \cite[Theorem 4.16]{Hai14}.
The joint continuity is not stated there but is clear from the triangle
inequality.
\end{proof}

\end{document}